\documentclass[a4paper,10pt]{article}
\usepackage{graphicx} 
\evensidemargin= .9cm
\oddsidemargin= .9cm
\usepackage[utf8]{inputenc}
\usepackage[T1]{fontenc}
\usepackage{comment}
\usepackage[T1]{fontenc}
\usepackage{subfigure}
\usepackage{float}
\usepackage{natbib}

\usepackage{a4wide}
\usepackage{indentfirst}
\usepackage{authblk}

\usepackage{amsmath,amssymb}
\usepackage{amsthm}
\usepackage{hyperref}

\parskip=5pt
\parindent=15pt
\usepackage[margin=1.2in]{geometry}
\usepackage{graphicx}
\usepackage{listings}
\usepackage[utf8]{inputenc}
\usepackage{appendix}
\setcounter{page}{1}

\numberwithin{equation}{section}
\newtheorem{teo}{Theorem}[section]
\newtheorem*{teo*}{Theorem}
\newtheorem*{prop*}{Proposition}
\newtheorem*{corol*}{Corollary}
\newtheorem{prop}[teo]{Proposition}

\newtheorem{corol}[teo]{Corollary}
\newtheorem{lema}[teo]{Lemma}
\newtheorem{defi}[teo]{Definition}
\newtheorem{hyp}{Hypothesis}

\theoremstyle{definition}

\newtheorem{ex}[teo]{Example}

\newtheorem{remark}[teo]{Remark}

\renewcommand{\d}{\,\mathrm{d}}

\newcommand{\Bac}{\operatorname{Bac}}
\newcommand{\Dom}{\operatorname{Dom}}

\newcommand{\D}{\mathbb{D}}
\newcommand{\E}{\mathbb{E}}
\newcommand{\F}{\mathcal{F}}
\renewcommand{\H}{\mathcal{H}}

\newcommand{\R}{\mathbb{R}}

\newcommand{\1}{\mathbf{1}}

\newcommand{\eps}{\varepsilon}

\title{SHORT-TIME BEHAVIOR OF THE AT-THE-MONEY IMPLIED VOLATILITY FOR THE JUMP-DIFFUSION STOCHASTIC VOLATILITY BACHELIER MODEL}

\author[1]{Elisa Alòs}
\author[2]{Òscar Burés}
\author[3]{Josep Vives} 

\affil[1]{Universitat Pompeu Fabra and Barcelona School of Economics, Department of Economics and Business, Ramón Trias Fargas 25-27, 08005, Barcelona, Spain, \vspace*{3pt}}
\affil[2]{ Universitat de Barcelona, Facultat de Matemàtiques i Informàtica.
Gran Via de les Corts Catalanes, 585, 08007 Barcelona, Spain \vspace*{3pt}}
\affil[3]{Institut de Matemàtiques de la Universitat de Barcelona and Departament de Matemàtica econòmica, fincancera i actuarial. \authorcr Diagonal 690--696, 08034 Barcelona, Spain}
\date{\today}

\begin{document}

\maketitle

\begin{abstract}
    In this paper we use Malliavin Calculus techniques in order to obtain expressions for the short-time behavior of the at-the-money implied volatility (ATM-IV) level and skew for a jump-diffusion stock price. The diffusion part is assumed to be the stochastic volatility Bachelier model and the jumps are modeled by a pure-jump Lévy process with drift so that the stock price is a martingale. Regarding the level, we show that the short-time behavior of the ATM-IV level is the same for all pure-jump Lévy processes and, regarding the skew, we give conditions on the law of the jumps for the skew to exist. We also give several numerical examples of stochastic volatilities and Lévy processes that confirm the theoretical results found in the paper.
\end{abstract}

\textbf{Keywords:} Bachelier model, Stochastic volatility, Jump-Diffusion models, Lévy processes, Malliavin calculus.

\textbf{MSC Classification:} 60H07; 60J76; 91G20; 91G60.

\section{Introduction}
Historically, quantitative finance has relied heavily on log-normal models such as the celebrated Black-Scholes model and its numerous extensions, including rough and local volatility models. A key feature of these models is that they ensure asset prices remain strictly positive. However, in recent years, instances of asset prices crossing the zero threshold and attaining negative values have been observed. Notably, in April 2020, during the COVID-19 recession, crude oil futures temporarily attained negative values due to the low demand compared to the storage cost. In response, exchanges such as the Chicago Mercantile Exchange (CME) and the Intercontinental Exchange (ICE) adapted their modeling frameworks, transitioning from the Black-Scholes model to the Bachelier model for oil and natural gas options until August 2020.

The Bachelier model, though historically overshadowed by log-normal models, has long played an important role in fixed income markets, where interest rates can take negative values, something poorly captured by log-normal models. For further discussion on the advantages and growing adoption of the Bachelier model, we refer to \cite{choi2022black}.

Despite its increasing popularity, analytical results for the Bachelier model remain relatively scarce. Recently, \cite{alos2023implied} provided explicit formulas for the short-time behavior of At-The-Money (ATM) implied volatility under the Bachelier model with local and fractional volatility. Although this work fills a significant gap, more generalizations are needed to account for market phenomena that cannot be fully explained by stochastic volatility alone. As discussed in \cite{tankov2003financial}, incorporating jump processes - specifically Lévy processes - into asset price dynamics allows modeling discontinuities and heavy-tailed behaviors, which are prevalent in financial markets. Moreover, combining stochastic volatility with jump-diffusion processes improves the fit of implied volatility surfaces.

In the Black-Scholes framework, short-time asymptotics for the ATM implied volatility level and skew under stochastic volatility and jump-diffusion models have been extensively studied. For example, \cite{alos2007short} examined the short-time behavior of the ATM implied volatility level and skew in the Bates model (see \cite{bates1996jumps}), while \cite{ROSINSKI2007677} analyzed similar properties when jumps follow a CGMY distribution.

In the present paper, we extend these analyses to the Bachelier setting by considering a jump-diffusion stochastic volatility model, where jumps follow a pure-jump Lévy process with drift. Using techniques from Malliavin calculus and classical results on Lévy processes, we establish that, for any pure-jump Lévy martingale, the ATM implied volatility level coincides with that of the pure stochastic volatility model without jumps (see \cite{alos2023implied} for a related result). Furthermore, if the jump process satisfies a stability condition, we derive an explicit formula for the short-time behavior of the implied volatility skew of the ATM, showing that it behaves analogously to the results obtained in \cite{alos2007short}, but in a normal setting rather than a log-normal one.

To validate our theoretical results, we performed numerical simulations under various stochastic volatility models, such as the SABR (see \cite{hagan2002managing}) and Rough Bergomi (see \cite{bayer2016pricing}) and different pure-jump Lévy processes. Specifically, we analyze cases where jumps follow a compound Poisson process with Gaussian and Laplace jumps (extending the Bates model to normal settings and more general volatility dynamics), as well as cases where jumps follow CGMY and Normal Inverse Gaussian (NIG) distributions.

The paper is structured as follows. In Section \ref{sec: Statement of the model and preliminaries} we define the model we assume for the stock price dynamics. In Section \ref{sec: Main result} we state the main result of the paper, Theorem \ref{th: main theorem} and the rest of the sections are devoted to prove the theorem and numerically check that the result holds. In fact, in Section \ref{sec: Hull-White formula}, we derive a Hull-White type formula that facilitates the proof of Theorem \ref{th: main theorem} in the compound Poisson case. In Section \ref{sec: Short-time behavior of the ATM Implied Volatility level} we prove Equation \eqref{eq: main theorem - level} for compound Poisson processes, that is, the part of Theorem \ref{th: main theorem} regarding the level  before extending it to general pure-jump Lévy processes. Section \ref{sec: Short-time behavior of the ATM Implied Volatility skew} follows a methodology similar to \cite{alos2007short} to prove equations \eqref{eq: main theorem - fractional case} and \eqref{eq: main theorem - rough case} of Theorem \ref{th: main theorem} for the compound Poisson case. In the same section, following an approximation argument, we prove Equations \eqref{eq: main theorem - fractional case} and \eqref{eq: main theorem - rough case} for the general Lévy case. The reason why we use an approximation argument is because the strategy of replicating the arguments in \cite{alos2007short} for a general pure-jump Lévy martingale is successful for Lévy processes with infinite activity and finite variation paths, but it fails for Lévy processes with infinite variation paths since the expansions do not give enough information about the leading term. Indeed, the leading order terms of the expansion in the infinite variation case involve a term that depends on an integral of a Greek of the Bachelier function with respect to the Lévy measure whose computation is, in general, not feasible.  Finally, in Section \ref{section: examples and numerical analysis}, we conduct numerical simulations to verify Theorem \ref{th: main theorem} across various well-established models in the literature.

\section{Statement of the model and preliminaries} \label{sec: Statement of the model and preliminaries}
Throughout the paper, we will assume zero interest rate $r = 0$. Not only we assume $r = 0$ for the sake of simplicity, but also because $r$ is assumed to be zero in interest rate models, while the underlying asset in commodity derivatives are future prices. Let $T > 0$ be a maturity time and let $S = \{S_t; t \in [0,T]\}$ follow the model
\begin{equation} \label{model of study}
S_t = S_0 + \int_0^t \sigma_s (\rho \d W_s + \sqrt{1-\rho^2} \d B_s) + L_t,
\end{equation}
where $S_0 > 0$ is fixed and $W$ and $B$ are independent Brownian motions on $[0,T]$ defined in the same probability space $(\Omega, \mathcal{F}, \mathbb{F},  \mathbb{P})$, where $\mathbb{F}$ is the filtration generated by $W$, $B$ and $L$. More precisely, let
\[
\mathbb{F}^{W} = \{\mathcal{F}^W_t; t \in [0,T]\}, \quad \mathbb{F}^{B} = \{\mathcal{F}^B_t; t \in [0,T]\} \quad \text{and } \quad\mathbb{F}^{L} = \{\mathcal{F}^L_t; t \in [0,T]\}
\]
denote the complete natural filtrations with respect to $W$,$B$ and $L$ respectively. Moreover, we define $\mathbb{F} = \{\mathcal{F}_t; t \in [0,T]\}$ as the filtration defined by $\mathcal{F}_t = \mathcal{F}^W_t \vee \mathcal{F}^B_t \vee \mathcal{F}^L_t$. We set $\rho \in (-1,1)$, we assume $\sigma$ to be an a.s. continuous, square integrable and $\mathbb{F}^W$-adapted stochastic process and $L$ denotes a pure-jump Lévy process with triplet $(a, 0, \nu)$. Because of the Lévy-Itô decomposition (see \cite{sato2013levy} or \cite{tankov2003financial}), $L$ is of the form
\[
L_t = a t + \int_0^t \int_{|y| \geq 1} y N(\d s, \d y) + \int_0^t \int_{|y| < 1} y \Tilde{N}(\d s, \d y), \quad t \in [0,T]
\]
where $N$ is a Poisson random measure and $\Tilde{N}(\d s,\d y) = N(\d s, \d y) - \d s \nu (\d y)$ denotes the compensated Poisson measure. Notice that, if we demand $S$ to be a martingale we need the condition
\[
a = -\int_{|y| \geq 1} y \nu(\d y).
\]
We also define the constants
\[
c_0 = \nu(\R) \in [0, \infty], \quad c_1 = \int_{\R} y \nu(\d y) \in [-\infty, \infty].
\]
We can classify $L$ depending on the finiteness of certain constants.
\begin{itemize}
    \item In the case $ c_0 = \nu(\R) < \infty$, $L$ is a compound Poisson process of intensity $\lambda = c_0$. Notice that if we let $\sigma$ follow the same dynamics as the volatility of the Heston model, then the model \eqref{model of study} becomes the Bachelier-Bates model.
    \item In the case $\nu(\R) = \infty$ but $\int_{\R} |y| \nu (\d y) < \infty$ then $L$ has infinite activity with finite variation trajectories, meaning that $L$ exhibits an infinitely amount of jumps but the paths of $L$ remain of bounded variation.
    \item In the case $\nu(\R) = \infty$ and $\int_{\R} |y| \nu(\d y) = \infty$ then $L$ has infinite activity with infinite variation trajectories. In that case, $c_1$ is not well-defined. However, as a consequence of the Lévy-Itô decomposition (see for instance \cite{sato2013levy} or \cite{tankov2003financial}), the process
    \[
    L_t^{\eps} = at +\int_{|y| \geq 1}y N(\d s, \d y) + \int_{\eps < |y| < 1} y \Tilde{N}(\d s, \d y)
    \]
    is a compound Poisson process with drift $a$ and intensity $c_0^{\eps} = \nu_{\eps}(\R) := \nu(\R \backslash [-\eps, \eps])$. In this situation, we can define the sequence of constants
    \[
    c_1^{\eps} = \int_{|y| > \eps} y\nu(\d y), \quad \eps > 0.
    \]
    Note that in the compound Poisson case, the infinite activity with finite variation case and some infinite activity and infinite variation cases, the sequence of constants $c_1^{\eps}$ converge to a real number $c_1$ as $\eps$ tends to zero.
\end{itemize}
Since $L_t$ is a pure-jump Lévy martingale, we can also write
\[
L_t = \int_0^t \int_{\R} y \tilde{N}(\d s, \d y), \quad L_t^{\eps} = \int_0^t \int_{|y| > \eps} y \tilde{N}(\d s, \d y).
\]
By abuse of language, we will refer to stochastic processes with infinite activity with (in)finite variation paths as \textit{infinite activity and (in)finite variation Lévy processes} or \textit{Lévy processes with infinite activity and (in)finite variation}. Regarding technical considerations on the volatility, first notice that when the volatility process is constant and there is no presence of jumps (i.e. $\sigma_t \equiv \sigma$ and $\nu \equiv 0$) then we recover the classical Bachelier model. 
\subsection{Malliavin Calculus tools}
In this section we will introduce the Malliavin calculus definitions and tools that are needed in order to obtain the key results of this paper. For a complete reference on the Malliavin calculus topics we refer the reader to \cite{nualart2006malliavin}. Let $W = \{W_t; t \in [0,T]\}$ be a standard Brownian motion defined in a complete probability space $(\Omega, \F, \mathbb{P})$. For a function $h \in L^2([0,T])$ we define $W(h)$ as the Wiener integral of $h$ with respect to $W$, i.e.
\[
W(h) := \int_0^T h(t) \d W_t.
\]
We define $\mathcal{S}$ as the class of random variables of $F$ the form
\[
F = f(W(h_1), \dots, W(h_n)); \quad h_i \in L^2([0,T]), \quad f\in \mathcal{C}^{\infty}_b(\R^n).
\]
We say that $\mathcal{S}$ is the space of \textit{simple functionals}. 
\begin{defi}
    Let $F \in \mathcal{S}$. We define the Malliavin derivative of $F$ evaluated at a point $t \in [0,T]$ as the stochastic process defined by
    \[
    D_t F = \sum_{j=1}^n \partial_j f(W(h_1), \dots, W(h_n)) h_j(t), \quad \partial_j f := \frac{\partial f}{\partial x_j}.
    \]
\end{defi}
As it has been proved in \cite{nualart2006malliavin}, the operator $D$ is closable from $L^p(\Omega)$ to $L^p(\Omega \times [0,T])$ for all $p \geq 1$. We define $\D^{1,p}_{W}$ as the closure of $\mathcal{S}$ with respect to the norm
\[
||F||_{1,p} = \left[ \E(|F|^p) + \E\left(||DF||_{L^2([0,T])}^p\right)\right]^{1/p}.
\]
In the same way we have defined $D$, we can define the iterated derivatives $D^n$. We denote $\D^{n,p}_{W}$ the space of random variables that are $n$ times Malliavin differentiable and $F$, together with its Malliavin derivatives are $p$-integrable.

The adjoint of the Malliavin derivative is the so-called Skorohod integral or divergence operator $\delta$. It is an unbounded and closed operator from $L^2(\Omega \times [0,T])$ to $L^2(\Omega)$. We say that a stochastic process $u$ belongs to the domain of $\delta$, $\Dom (\delta)$ if there exists a unique element $\delta(u)$ satisfying the duality relationship
\[
\E[F \delta(u)] = \E\left[ \int_0^T D_t F \cdot u_t \d t\right], \quad \forall F \in \D^{1,2}.
\]
 The Skorohod integral $\delta$ is an extension of the Itô integral (see \cite{nualart2006malliavin}), in the sense that if $u$ is a square-integrable adapted process then $\delta(u)$ coincides with the Itô integral $\int_0^T u_t \d W_t$ and $\delta(u)$ is well defined for a larger class of processes (such as, for instance, square-integrable non-adapted processes). Therefore, all integrals of the form $\int_0^T u_t \d W_t$ will be understood in the Skorohod sense. In order to get some insight of which processes belong to $\Dom(\delta)$, we introduce the following family of spaces.
\begin{defi}
    We define $\mathbb{L}^{k,p}_{W}$ as the space of stochastic processes $u$ such that for almost every $t \in [0,T]$, $u_t \in \D^{k,p}_{W}$.
\end{defi}
For all $k\geq 1$, $p \geq 2$ the space $\mathbb{L}^{k,p}_{W}$ is included in $\Dom(\delta)$. More explicitly, the following inclusions hold.
\begin{itemize}
    \item $\mathbb{L}^{k,p}_{W} \subset \mathbb{L}^{k', p}_{W}$ if $k' \leq k$.
    \item $\mathbb{L}^{k, p}_{W} \subset \mathbb{L}^{k, q}_{W} $ if $p \geq q$.
    \item $\mathbb{L}^{k,p}_{W} \subset \Dom(\delta)$ if $k \geq 1$, $p \geq 2$.
\end{itemize}
Again, for a proof of these results we refer the reader to \cite{nualart2006malliavin}.

\section{Main result} \label{sec: Main result}
In this section we state the main result of this paper. This result describes the short-time behavior of the ATM-IV for the jump-diffusion stochastic volatility Bachelier model. In order to state the result, we need some generic technical hypotheses on the volatility process $\sigma$. The set of hypotheses that will be assumed along the paper are the following.

\begin{hyp} \label{hyp 1}
    There exist $0 < \alpha < \beta$ such that
    \[
    \alpha \leq \sigma_t \leq \beta
    \]
    for almost every $t \in [0,T]$.
\end{hyp}
\begin{hyp} \label{hyp 2}
    For $p \geq 2$, $\sigma \in \mathbb{L}^{2,p}_W$.
\end{hyp}
\begin{hyp} \label{hyp 3}
    There exists $H \in (0,1)$ and, for all $p \geq 1$ there exist constants $\gamma,  \delta > 0$ such that for almost every $0 \leq r \leq s\leq u \leq T$ and for almost every $t$,
    \[
    E_t\left( |D_r\sigma_u|^p\right)^{1/p} \leq \gamma(u-r)^{H-1/2}
    \]
    and
    \[
    E_t\left( |D_s D_r \sigma_u|^p\right)^{1/p} \leq \delta (u-r)^{H-1/2}(u-s)^{H-1/2},
    \]
    where $E_t(\cdot)$ denotes the conditional expectation with respect to $\mathcal{F}_t \in \mathbb{F}$.
\end{hyp}

\begin{remark}
    The examples we consider in Section  \ref{section: examples and numerical analysis} don't satisfy Hypothesis \ref{hyp 1}. However, a classical truncation argument performed in the exact same way as in \cite{alos2023implied} and \cite{alos2019estimating} shows that the main result of the paper (see Theorem \ref{th: main theorem} below) still holds for the considered examples.
\end{remark}

Due to the assumption of $r = 0$, we can compute the value of a European call option with strike $k$ and maturity $T$ at time $t$ as $V_t = E_t[(S_T - k)_{+}]$. In the case of the classical Bachelier model (i.e. $\sigma_t \equiv \sigma$ and no jumps) then $V_t$ can be computed as follows:
\begin{equation} \label{eq: bac definition}
V_t = \Bac(T,t,S_t,k, \sigma) = (S_t - k) \Phi(d(k, \sigma)) +  \phi(d(k, \sigma)) \sigma \sqrt{T-t}
\end{equation}
where $\Phi$ denotes the cumulative distribution function of a standard normal random variable, $\phi = \Phi'$ and
\[
d(k, \sigma ) = \frac{S_t - k}{\sigma\sqrt{T-t}}.
\]
It is well known that the Bachelier function, $\Bac$, satisfies the following PDE
\[
\mathcal{L}(\sigma) \Bac(T,t,x,k,\sigma) := \partial_t \Bac(T,t,x,k, \sigma) + \frac{1}{2}\sigma^2 \partial^2_{xx} \Bac(T,t,x,k, \sigma) = 0.
\]
In order to introduce some notation, we define the following objects:
\begin{itemize}
    \item The future average volatility $v_t = \sqrt{\frac{Y_t}{T-t}}$ with $Y_t = \int_t^T \sigma_s^2 \d s$.
    \item The centered Gaussian kernel, $p(x,\tau)$, with variance $\tau^2$. If $\tau^2 = 1$ we simply write $p(x)$.
\end{itemize}
From \eqref{eq: bac definition} we can derive that
\[
\partial_{\sigma} \Bac(T,t,x,k,\sigma) = \phi(d(k,\sigma)) \sqrt{T-t} > 0.
\]
Hence, the function $\Bac$ is invertible with respect to $\sigma$. This allows us to define the Bachelier implied volatility $I_t(k)$ as the unique volatility parameter one should put in the Bachelier function in order to recover the market option price $V_t$, that is,
\[
V_t = \Bac(T,t,S_t, k, I_t(k)), \quad I_t(k) = \Bac^{-1}(T,t,S_t, k, V_t).
\]
If $I_t(k)$ is differentiable at least once with respect to the strike $k$, then Taylor's formula allows us to approximate $I_t(k)$ as
\[
I_t(k) \approx I_t(k^*_t) + \partial_k I_t(k^*_t)(k-k^*_t), \quad k_t^* = S_t.
\]
Therefore, the terms $I_t(k_t^*)$ and $\partial_k I_t(k^*_t)$ are key in order to get a linear approximation of the implied volatility surface.

In the present work, we apply Malliavin calculus techniques in order to understand the asymptotic behavior of these two terms as the maturity tends to zero depending on the regularity of the jumps. The main result of this paper is encapsulated in the following theorem.
\begin{teo} \label{th: main theorem}
    Assume $S_t$ follows the model \eqref{model of study} with $\sigma$ satisfying Hypotheses \ref{hyp 1}, \ref{hyp 2} and \ref{hyp 3} and let $L$ an arbitrary pure-jump Lévy martingale. Let $c_1^{\eps} = \int_{|y| > \eps} y \nu(\d y)$. Then,
    \begin{equation} \label{eq: main theorem - level}
    \lim_{T \to t} I_t(k^*_t) = \sigma_t.
    \end{equation}
    If moreover, the sequence $c_1^{\eps} = \int_{|y| > \eps} y \nu(\d y)$ has a finite limit $c_1$ as $\eps$ tends to zero, then
    \begin{equation}  \label{eq: main theorem - fractional case}
    \lim_{T \to t} \partial_k I_t(k^*_t) = \frac{c_1}{\sigma_t} + \lim_{T \to t} \frac{\rho}{\sigma_t (T-t)^2}\int_t^T \int_s^T E_t[D_s\sigma_u] \d u \d s
    \end{equation}
    if $H \geq 1/2$ and
    \begin{equation} \label{eq: main theorem - rough case}
    \lim_{T \to t} (T-t)^{1/2-H}\partial_k I_t(k^*_t) = \lim_{T \to t} \frac{\rho}{\sigma_t (T-t)^{3/2+H}}\int_t^T \int_s^T E_t[D_s\sigma_u] \d u \d s
    \end{equation}
    if $H < 1/2$.
\end{teo}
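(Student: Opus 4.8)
\emph{Strategy and decomposition.} The plan is to combine a Bachelier analogue of the Alòs--Hull--White decomposition with a short-time analysis of the Bachelier Greeks, and to reach a general pure-jump Lévy martingale by approximating it with compound Poisson processes. I would first work in the compound Poisson case. Conditioning on the jumps of $L$ reduces $S$ to a pure stochastic-volatility Bachelier model with a shifted trajectory; writing $M_s = E_s[\int_0^T \sigma_r^2\,\d r]$, so that $\d M_s = \Lambda_s\,\d W_s$ with $\Lambda_s = \int_s^T E_s[D_s\sigma_r^2]\,\d r$ by Clark--Ocone, and applying Itô's formula to $\Bac(T,s,S_s,k,v_s)$ while using the Bachelier PDE to cancel the drift, I expect a formula of the form
\[
V_t = E_t\!\left[\Bac(T,t,S_t,k,v_t)\right] + \frac{\rho}{2}E_t\!\left[\int_t^T \sigma_s\,\partial_x^3\Bac(T,s,S_s,k,v_s)\,\Lambda_s\,\d s\right] + \mathcal{J}_t,
\]
where $\mathcal{J}_t$ collects the jump terms (a finite sum over jump times here), and where the second-order-in-$\Lambda_s$ contribution is negligible at the orders that matter.

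\emph{Level.} At the money $d=0$ at $s=t$, so the leading term equals $\tfrac{\sqrt{T-t}}{\sqrt{2\pi}}E_t[v_t]$ and $v_t\to\sigma_t$ by continuity of $\sigma$; since the at-the-money inversion reads $I_t(k_t^*) = V_t/\partial_\sigma\Bac|_{d=0}$ with $\partial_\sigma\Bac|_{d=0} = \tfrac{\sqrt{T-t}}{\sqrt{2\pi}}$, this produces $\sigma_t$. I would then show the remaining terms are $o(\sqrt{T-t})$: since $\partial_x^2\Bac(T,s,x,k,\sigma) = p(x-k,\sigma\sqrt{T-s})$, the Greek $\partial_x^3\Bac$ is odd in $x-k$, so its Gaussian average over $S_s-S_t$ vanishes and only the higher-order correlation with $\Lambda_s$ (controlled by Hypothesis \ref{hyp 3} by a power of $(T-t)$ exceeding $\tfrac12$) survives, while $\mathcal{J}_t = O(T-t)$ because a jump occurs on $[t,T]$ with probability $O(T-t)$. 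This gives \eqref{eq: main theorem - level} for any $\nu$; the extension to a general Lévy martingale follows from the same approximation argument below and needs no condition on $c_1$.

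\emph{Skew.} Differentiating $V_t = \Bac(T,t,S_t,k,I_t(k))$ in $k$ with $\partial_k\Bac = -\Phi(d)$ and $\partial_\sigma\Bac|_{d=0} = \tfrac{\sqrt{T-t}}{\sqrt{2\pi}}$, the two leading $-\tfrac12$ terms cancel and the skew is $\tfrac{\sqrt{2\pi}}{\sqrt{T-t}}$ times $\partial_k\mathcal{J}_t$ plus $\partial_k$ of the correlation integral. For the jumps, the dominant effect comes from the no-jump branch, in which the martingale compensator shifts the spot by $-c_1(T-t)$; expanding $\Phi$ to first order yields the contribution $c_1/\sigma_t$. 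For the correlation term, $\partial_k = -\partial_x$ turns $\partial_x^3\Bac$ into $\partial_x^4\Bac = p''(x-k,v_s\sqrt{T-s})$, and the crucial step is to average over $S_s-S_t\sim N(0,\sigma_t^2(s-t))$: by the Gaussian convolution identity the remaining maturity is absorbed into the full maturity, giving $E[\,p''(S_s-S_t,\sigma_t\sqrt{T-s})\,] = p''(0,\sigma_t\sqrt{T-t}) = -\phi(0)/(\sigma_t^3(T-t)^{3/2})$. Combined with $E_t[\Lambda_s]\approx 2\sigma_t\int_s^T E_t[D_s\sigma_u]\,\d u$, this reproduces exactly $\tfrac{\rho}{\sigma_t(T-t)^2}\int_t^T\int_s^T E_t[D_s\sigma_u]\,\d u\,\d s$. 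Hypothesis \ref{hyp 3} makes this quantity of order $(T-t)^{H-1/2}$, which converges for $H\ge 1/2$ (yielding \eqref{eq: main theorem - fractional case}) and blows up for $H<1/2$, where rescaling by $(T-t)^{1/2-H}$ suppresses the $O(1)$ jump term and leaves \eqref{eq: main theorem - rough case}.

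\emph{General Lévy and the main obstacle.} I expect this to extend verbatim to infinite activity with finite variation, but to fail for infinite variation, where the leading part of $\partial_k\mathcal{J}_t$ involves an intractable integral $\int_\R(\text{Greek of }\Bac)\,\nu(\d y)$. The remedy is to replace $L$ by its compound Poisson truncations $L^\eps$ (keeping jumps of size exceeding $\eps$), for which the theorem holds with constant $c_1^\eps$, and to let $\eps\to 0$ using $c_1^\eps\to c_1$. The hard part will be proving bounds on the correction and jump terms that are uniform in $\eps$, so that $\lim_{\eps\to 0}$ can be interchanged with $\lim_{T\to t}$ and with the expectations as the small jumps are reinstated.
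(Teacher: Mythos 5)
Your proposal follows essentially the same route as the paper: a Bachelier Hull--White decomposition with a compensated jump term, explicit ATM Greeks to extract the level and the $c_1/\sigma_t$ and correlation contributions to the skew, and a compound-Poisson truncation with uniform-in-$(T-t)$ convergence of the (derivatives of the) implied volatilities so that the two limits can be interchanged (Moore--Osgood) for a general pure-jump Lévy martingale. The differences are only in execution --- the paper uses an anticipating Itô formula with $(D^{-}Y)$ rather than Clark--Ocone, isolates the $c_1/\sigma_t$ term by splitting $\Delta_{yy}=\Delta_y-y\partial_x$ instead of your no-jump-branch heuristic, and obtains the leading skew term by a second application of the Itô formula to $\partial_x^4\Bac\cdot\int_t^T\Lambda_r\,\d r$ rather than by your Gaussian-convolution identity --- plus a harmless slip in your approximation $E_t[\Lambda_s]\approx 2\sigma_t\int_s^T E_t[D_s\sigma_u]\,\d u$, which should carry $\sigma_t^2$ (your final constant is nonetheless the correct one).
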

\begin{remark}
    If $L$ is a pure-jump Lévy process with finite variation then the condition $\lim_{\eps \to 0} c_1^{\eps} = c_1$ is always satisfied since the fact that $\int_{\R} |y| \nu (\d y) < \infty$ ensures that the constant $c_1 = \int_{\R} y \nu(\d y)$ is well defined and therefore $\lim_{\eps \to 0} c_1^{\eps} = c_1$. In the infinite variation case, this is no longer true. We will see in future sections that there exist some examples of infinite activity and infinite variation pure-jump Lévy processes for which the constants $c_1^{\eps}$ diverge as $\eps \to 0$ and, performing numerical simulations, we will see this effect in plots of the implied volatility.
\end{remark}

\begin{remark}
    Observe the following facts about Theorem \ref{th: main theorem}.
    \begin{itemize}
        \item If we assume no jumps (i.e. $\nu \equiv 0$ and therefore $c_1 = 0$) the result coincides with the short-time behavior studied in \cite{alos2023implied}. 
        \item The limit obtained in Equations \eqref{eq: main theorem - fractional case} and \eqref{eq: main theorem - rough case} is analogous to the findings in  \cite{alos2007short}, \cite{fukasawa2011asymptotic} and \cite{alos2008hull} where in this last reference the authors assume that the volatility process is also a Lévy process. The difference between Theorem \ref{th: main theorem} and the main result in \cite{alos2007short} is the definition of the constant $c_1$. This difference comes from the fact that in \cite{alos2007short} the authors work with a log-normal model while in this article we deal with a non log-normal model.
    \end{itemize}
\end{remark}

A study of the short-time behavior of the ATM-IV level and skew when the Lévy process has infinite activity and infinite variation paths has been addressed, for instance, in \cite{figueroa2015short},  \cite{figueroa2016short} and \cite{gerhold2016small}, where the authors consider different families of pure-jump Lévy processes with stochastic volatility driven by a Brownian motion and provide an expansion of the ATM-IV considering higher order terms. In the specific case where the process studied in \cite{figueroa2016short} is a CGMY process (see \cite{ROSINSKI2007677} or Section \ref{section: examples and numerical analysis} of this paper), the ATM-IV skew expansion up to first order derived by the authors closely resembles Equation \eqref{eq: main theorem - fractional case}. Notably, the cited references primarily focus on log-normal models. In general, the literature about the short-time behavior of the Implied Volatility level and skew is not very extensive. In fact, the literature concerning the jump-diffusion stochastic volatility Bachelier model is even scarcer. In this paper we aim to fill the gap in the literature concerning the ATM-IV level and skew for a very important non log-normal model.

The steps in order to prove Theorem \ref{th: main theorem} are the following:
\begin{enumerate}
    \item First, we prove the first part of the theorem in the case where $L$ is a compound Poisson process with drift following the same lines as in \cite{alos2007short} and \cite{alos2023implied}. 
    \item In order to extend the result to the most general case, we use an approximation argument, using the fact that the truncated Lévy process is a compound Poisson process.
    \item Finally, we do the same discussion for the skew using similar arguments.
\end{enumerate}

\section{A Hull-White type formula for call options} \label{sec: Hull-White formula}
In order to analyze the asymptotic behavior of the ATM implied volatility level and skew, we will make use of a decomposition formula for the prices of European call options. The first thing we need to do in order to derive a decomposition formula is to provide an anticipating Itô formula for our model.

\begin{prop} \label{prop: Anticipating Ito Formula}
Assume $S$ follows the model \eqref{model of study} with $\sigma$ satisfying Hypotheses \ref{hyp 1}, \ref{hyp 2} and \ref{hyp 3}. Let $F: [0,T] \times \R^2 \to \R$ a  $\mathcal{C}^{1,2,2}([0,T] \times \R^2)$ class function such that there exists a positive constant $C$ such that, for all $t \in [0,T]$, $F$ and its partial derivatives evaluated in $(t, S_t, Y_t)$ are bounded by $C$. Then, it follows that
\begin{align*}
    F(t, S_t, Y_t) = &F(0, S_0, Y_0) + \int_0^t \partial_{r} F(r, S_r, Y_r) \d r \\
    + & \int_0^t \partial_x F(r, S_r, Y_r) \sigma_r (\rho \d W_r + \sqrt{1-\rho^2} \d B_r) \\
    - & \int_0^t \partial_{y} F(r, S_r, Y_r)\sigma_r^2 \d r + \rho \int_0^t \partial^2_{xy}F(r,S_r, Y_r) (D^{-}Y)_r \sigma_r \d r \\
    + &\frac{1}{2}\int_0^t \partial_{xx}^2 F(r, X_r, Y_r) \sigma_r^2 \d r \\
    + & \int_0^t \int_{\R} \Delta_{yy}F(r, S_{r-}, Y_r) \nu (\d y) \d r \\
    + & \int_0^t \int_{\R} \left( F(r, S_{r-} + y, Y_r) - F(r,S_{r-}, Y_r)  \right)\Tilde{N}(\d r, \d y),
\end{align*}
where
\[
\Delta_{yy}F(r,S_{r-}, Y_r) = F(r, S_{r-}+y, Y_r) - F(r,S_{r-}, Y_r) -y\partial_x F(r, S_{r-}, Y_r)
\]
and, for a process of the form $Y_t = \int_t^T a_s \d W_s$ where $a$ is adapted and belongs to $\mathbb{L}^{1,2}$, $(D^{-}Y)_t := \int_t^T D_t a_s \d W_s$. Thus, for the particular case $Y_t = \int_t^T \sigma^2_s \d s $ we have
\[
(D^{-}Y)_t = \int_t^T D_t \sigma_s^2 \d s.
\]
\end{prop}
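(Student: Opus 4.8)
The plan is to combine the anticipating (Skorohod) Itô formula for the continuous Wiener--Brownian part of $S$ with the classical Itô formula for the jump part, exploiting the fact that $\sigma$ is $\mathbb{F}^W$-adapted and hence independent of $L$, so that Malliavin calculus enters only through $W$ and never interacts with the jumps. Following the strategy announced for Theorem \ref{th: main theorem}, I would first prove the formula when $L$ is a compound Poisson process with drift, where there are a.s. finitely many jumps on $[0,t]$, and then recover the general pure-jump Lévy martingale by an approximation argument using the truncated processes $L^{\eps}$.

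For the continuous part, write $S_t^c = S_0 + \int_0^t \sigma_r(\rho\,\d W_r + \sqrt{1-\rho^2}\,\d B_r)$ and note that $Y_t = \int_t^T\sigma_s^2\,\d s$ is anticipating, being driven by the future of $W$ through $\sigma$. Between consecutive jumps of $L$, the process $S$ moves continuously and I would apply the anticipating Itô formula for Skorohod integrals (as in \cite{nualart2006malliavin}) to $F(r,S_r,Y_r)$. The term $\partial_r F\,\d r$ is the time derivative; the Skorohod integral $\int \partial_x F\,\sigma_r(\rho\,\d W_r+\sqrt{1-\rho^2}\,\d B_r)$ comes from $\d S^c$, where the $W$-part is genuinely anticipating because $\partial_x F(r,S_r,Y_r)$ depends on $Y_r$; the finite-variation term $-\int\partial_y F\,\sigma_r^2\,\d r$ comes from $\d Y_r = -\sigma_r^2\,\d r$; the quadratic-variation term $\tfrac12\int\partial^2_{xx}F\,\sigma_r^2\,\d r$ comes from $\d[S^c]_r = \sigma_r^2\,\d r$; and, crucially, the cross term $\rho\int\partial^2_{xy}F\,(D^-Y)_r\sigma_r\,\d r$ is the trace contribution produced when the anticipating $W$-integrand is expressed as a Skorohod integral. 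Here Hypotheses \ref{hyp 2} and \ref{hyp 3}, together with the boundedness of $F$ and its derivatives, guarantee that $\sigma \in \mathbb{L}^{2,p}_W$, that the relevant integrands lie in $\Dom(\delta)$, and that the Riemann-sum approximations converge, the trace term converging precisely to the integral involving $(D^-Y)_r = \int_r^T D_r\sigma_s^2\,\d s$.

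For the jump part in the compound Poisson case, I would sum the jump increments $F(\tau_i,S_{\tau_i},Y_{\tau_i}) - F(\tau_i,S_{\tau_i-},Y_{\tau_i})$ over the jump times $\tau_i \le t$, noting that $Y$ and $t$ do not jump, so only the increment $S_{r-}\mapsto S_{r-}+y$ matters. This sum equals $\int_0^t\int_{\R}[F(r,S_{r-}+y,Y_r)-F(r,S_{r-},Y_r)]N(\d r,\d y)$. Writing $L$ in its martingale form $L_t=\int_0^t\int_\R y\,\tilde N(\d s,\d y)$ contributes a compensator drift $-\int_0^t\int_\R y\,\partial_x F\,\nu(\d y)\,\d r$ through $\partial_x F\,\d S$; decomposing $N=\tilde N+\nu\,\d r$ and combining the three pieces, the martingale integral collapses to $\int_0^t\int_\R[F(r,S_{r-}+y,Y_r)-F(r,S_{r-},Y_r)]\tilde N(\d r,\d y)$ while the remaining $\nu\,\d r$ term becomes $\int_0^t\int_\R\Delta_{yy}F\,\nu(\d y)\,\d r$, which is exactly the stated compensated jump structure.

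Finally, to pass to a general pure-jump Lévy martingale $L$, I would apply the compound Poisson formula to the model $S^{\eps}$ obtained by replacing $L$ with $L^{\eps}$ and let $\eps\to 0$. Using the $L^2$ convergence $L^{\eps}\to L$ from the Lévy--Itô decomposition, the isometry for $\tilde N$-integrals, and the bound $|\Delta_{yy}F|\le C\min(|y|,y^2)$ furnished by the boundedness of $\partial_x F$ and $\partial^2_{xx}F$ (so that the $\nu$-integrals converge, since $\int_\R\min(1,y^2)\,\nu(\d y)<\infty$ for any Lévy measure), every term passes to the limit. The main obstacle I anticipate is the rigorous justification of the continuous anticipating step: identifying the correct Skorohod integrands, controlling the non-adapted integrand $\partial_x F(r,S_r,Y_r)$, and showing that the trace arising in the limit of the Riemann sums produces exactly the Malliavin cross term $\rho\int_0^t\partial^2_{xy}F\,(D^-Y)_r\sigma_r\,\d r$; the jump contributions and the $\eps\to 0$ limit should be comparatively routine once the continuous formula and the $L^2$ estimates are in place.
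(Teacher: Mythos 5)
Your proposal follows essentially the same route as the paper's proof: truncate the jumps to reduce to a finite-activity (compound Poisson) setting, apply the known anticipating Itô formula for the Skorohod-integral continuous part on each inter-jump interval (producing the trace term $\rho\int_0^t\partial^2_{xy}F\,(D^{-}Y)_r\sigma_r\,\d r$), add the finitely many jump increments, recombine the drift $-\int_{|y|>\eps}y\,\nu(\d y)\,\d t$ with the decomposition $N=\tilde N+\nu\,\d r$ to produce the $\Delta_{yy}F$ compensator term, and pass to the limit $\eps\to 0$. The paper likewise defers the continuous anticipating step to earlier references and the $\eps\to 0$ details to \cite{jafari2013hull}, so your sketch matches both the structure and the level of rigor of the published argument.
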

\begin{remark}
    An equivalent expression of this anticipating Itô formula is
    \begin{align*}
    F(t, S_t, Y_t) = &F(0, S_0, Y_0) + \int_0^t \partial_{r} F(r, S_r, Y_r) \d r \\
    + & \int_0^t \partial_x F(r, S_r, Y_r) \sigma_r (\rho \d W_r + \sqrt{1-\rho^2} \d B_r) \\
    - & \int_0^t \partial_{y} F(r, S_r, Y_r)\sigma_r^2 \d r + \rho \int_0^t \partial^2_{xy}F(r,S_r, Y_r) \Lambda_r \d r \\
    + &\frac{1}{2}\int_0^t \partial_{xx}^2 F(r, X_r, Y_r) \sigma_r^2 \d r \\
    + & \int_0^t \int_{\R} \Delta_{yy}F(r, S_{r-}, Y_r) \nu (\d y) \d r \\
    + & \int_0^t \int_{\R} \left( F(r, S_{r-} + y, Y_r) - F(r,S_{r-}, Y_r)  \right)\Tilde{N}(\d r, \d y),
\end{align*}
where
\[
\Lambda_r = \left( \int_r^T D_r \sigma_u^2 \d u \right) \sigma_r.
\]
\end{remark}
\begin{remark}
    The stock price $S$ exhibits at most a countable number of jumps since $L$ does so as well. Therefore, the set of times $r\in [0,T]$ where $S_{r-} \neq S_r$ has Lebesgue measure zero, so
    \[
    \int_0^t\int_{\R}\Delta_{yy} F(r,S_{r-}, Y_r)\nu (\d y) \d r = \int_0^t\int_{\R}\Delta_{yy} F(r,S_{r}, Y_r)\nu (\d y) \d r.
    \]
\end{remark}
\begin{proof}
    Consider, for any $\eps > 0$, the process
    \[
    S_t^{\eps} = S_0 + \int_0^t \sigma_r (\rho \d W_r + \sqrt{1-\rho^2}\d B_r) + \int_0^t \int_{|x|> \eps}x\Tilde{N}(\d s, \d x).
    \]
    Notice that this process has a finite number of jumps and converges to $S_t$ in $L^2$ as $\eps$ goes to zero. Fix $\eps > 0$, we name $T_i^{\eps}$ the times where $S_t$ exhibits a jump with the convention of $T_0^{\eps} = 0$. In the interval $[T_i^{\eps}, T_{i+1}^{\eps})$, $S_t$ evolves according to its continuous part $S_t^{\eps, c}$, that is,
    \[
    \d S_t^{\eps, c} =  \sigma_t \left( \rho \d W_t + \sqrt{1-\rho^2} \d B_t \right) - \left( \int_{|y| > \eps} y\nu(\d y)  \right)\d t.
    \]
    Hence, by applying the anticipating Itô formula (see \cite{alos2023implied}, \cite{alos2021malliavin} or \cite{alos2007short}) in the interval $[T_i^{\eps}, T_{i+1}^{\eps})$ we get
    \begin{align*}
        &F(T_{i+1}^{\eps}-, S^{\eps}_{T_{i+1}^{\eps}-}, Y_{T_{i+1}^{\eps}}) - F(T^{\eps}_i, S^{\eps}_{T^{\eps}_i}, Y_{T^{\eps}_i}) \\
        = &\int_{T^{\eps}_i}^{T^{\eps}_{i+1}-} \partial_r F(r, S^{\eps}_r, Y_r) \d r + \int_{T^{\eps}_i}^{T^{\eps}_{i+1}-} \partial_x F(r,S^{\eps}_r, Y_r) \d S_r^{\eps, c} \\
        -&\int_{T^{\eps}_i}^{T^{\eps}_{i+1}-} \partial_y F(r, S^{\eps}_r, Y_r) \sigma^2_r \d r + \rho \int_{T^{\eps}_i}^{T^{\eps}_{i+1}-} \partial_y F(r, S^{\eps}_r, Y_r) \Lambda_r  \d r \\
        +&\frac{1}{2} \int_{T^{\eps}_i}^{T^{\eps}_{i+1}-} \partial^2_{xx} F(r, S^{\eps}_r, Y_r) \sigma_r^2 \d r.
    \end{align*}
    Now, if a jump occurs at time $T_{i+1}^{\eps}$, its contribution is
    \[
     F(T_{i+1}^{\eps}, S^{\eps}_{T_{i+1}^{\eps}}, Y_{T_{i+1}^{\eps}}) - F(T_{i+1}^{\eps}, S^{\eps}_{T_{i+1}^{\eps}-}, Y_{T_{i+1}^{\eps}}).
    \]
    Hence, we can write
    \begin{align*}
        F(t, S^{\eps}_t, Y_t) = & F(0, S_0, Y_0) + \int_0^t \partial_r F(r, S^{\eps}_r, Y_r) \d r \\
        + &\int_0^t \partial_x F(r,S^{\eps}_r, Y_r) \sigma_r (\rho \d W_r + \sqrt{1-\rho^2} \d B_r) \\
        -&\int_0^t \int_{|y| > \eps} \partial_x F(r,S_r^{\eps}, Y_r)y \nu (\d y) \d s \\
        - &\int_0^t \partial_y F(r,S_r^{\eps}, Y_r) \sigma^2_r \d r + \rho \int_0^t \partial^2_{xy} F(r,S_r^{\eps}, Y_r) \Lambda_r \d r \\
        + &\frac{1}{2}\int_0^t \partial^2_{xx} F(r,S_r^{\eps}, Y_r) \sigma_r^2 \d r \\
        + &\sum_{i \geq 1} \left[ F(T_{i}^{\eps}, S^{\eps}_{T_{i}^{\eps}}, Y_{T_{i}^{\eps}}) -F(T_{i}^{\eps}, S^{\eps}_{T_{i}^{\eps}-}, Y_{T_{i}^{\eps}}) \right].
    \end{align*}
    Now, we can write
    \begin{align*}
       &\sum_{i \geq 1} \left[ F(T_{i}^{\eps}, S^{\eps}_{T_{i}^{\eps}}, Y_{T_{i}^{\eps}}) -F(T_{i}^{\eps}, S^{\eps}_{T_{i}^{\eps}-}, Y_{T_{i}^{\eps}}) \right] \\
       = &\int_0^t \int_{|y| > \eps} \left( F(r,S_{r-}+y, Y_r) - F(r,S_{r-}, Y_r) \right) N(\d y, \d r). \\
    \end{align*}
    Taking into account that $\tilde{N}(\d s, \d y) = N(\d s, \d y) - \nu (\d y) \d s$ and taking $\eps \downarrow 0$ we conclude the desired result. The details are omited since the proof follows the same scheme as in \cite{jafari2013hull}.
\end{proof}
The following lemma will be useful in order to justify that the integrals in the upcoming results are well defined, and will also be useful in the future to have an idea of the asymptotic behavior of the terms appearing in the decomposition formula.
\begin{lema} \label{lemma: well defined integral}
    Assume $S$ follows the model \eqref{model of study} with $-1 < \rho < 1$. Let $0 \leq t \leq r \leq T$, let $h \in \R$ and let $\mathcal{G}_t := \mathcal{F}_t \vee \F_T^{W} \vee \F^{L}_T $. Then, there exists a constant $C = C(n, \rho)$ such that
    \[
    \left| E\left( \partial^n_x G(T,r, S_r+h,k, v_r) | \mathcal{G}_t\right)\right| \leq C \left( \int_t^T\sigma_s^2 \d s\right)^{-\frac{1}{2}(n+1)}.
    \]
    where $G(T,t, S_t,k, v_t) := \partial^{2}_{xx}\Bac(T,t, S_t, k, v_t)$.
\end{lema}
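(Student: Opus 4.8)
The plan is to recognize that $G=\partial^2_{xx}\Bac$ is, after the built-in change of variables, a Gaussian kernel, and that the conditioning on $\mathcal{G}_t$ reduces everything to a Gaussian convolution. First I would compute the relevant Greeks directly from \eqref{eq: bac definition}: one finds $\partial_x\Bac=\Phi(d)$ and hence $G(T,r,x,k,v_r)=\partial^2_{xx}\Bac(T,r,x,k,v_r)=\phi(d)/(v_r\sqrt{T-r})$. Since $v_r\sqrt{T-r}=\sqrt{Y_r}=\bigl(\int_r^T\sigma_s^2\,\d s\bigr)^{1/2}=:\tau_r$, this is exactly the centered Gaussian kernel, $G(T,r,x,k,v_r)=p(x-k,\tau_r)$. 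Consequently $\partial_x^n G(T,r,x,k,v_r)=\partial_x^n p(x-k,\tau_r)$, and by the Hermite-polynomial representation of the derivatives of the Gaussian density,
\[
\partial_x^n p(x-k,\tau_r)=\frac{(-1)^n}{\tau_r^{n+1}}\,H_n\!\left(\frac{x-k}{\tau_r}\right)\phi\!\left(\frac{x-k}{\tau_r}\right),
\]
where $H_n$ is the $n$-th Hermite polynomial.

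Next I would identify the conditional law of $S_r$ given $\mathcal{G}_t$. Writing \eqref{model of study} as $S_r=S_0+\rho\int_0^r\sigma_s\,\d W_s+\sqrt{1-\rho^2}\int_0^r\sigma_s\,\d B_s+L_r$ and noting that $\mathcal{G}_t=\F_T^{W}\vee\F_T^{L}\vee\F_t^{B}$, every term except the increment of the $B$-integral after time $t$ is $\mathcal{G}_t$-measurable: $\sigma$ is $\mathbb{F}^{W}$-adapted, hence fixed by $\F_T^{W}$; $L_r$ is fixed by $\F_T^{L}$; and $\int_0^t\sigma_s\,\d B_s$ is fixed by $\F_t^{B}$. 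Because $B$ is independent of $(W,L)$, conditioning on $\F_T^W$ freezes $\sigma$ as a deterministic integrand and the Wiener integral $\int_t^r\sigma_s\,\d B_s$ is then centered Gaussian, independent of $\F_t^B$ by independent increments. Thus, conditionally on $\mathcal{G}_t$, $S_r+h-k=m+Z$ with $m$ a $\mathcal{G}_t$-measurable constant and $Z\sim\mathcal{N}\bigl(0,\rho'^2\bigr)$, where $\rho'^2:=(1-\rho^2)\int_t^r\sigma_s^2\,\d s$, while $\tau_r$ is $\mathcal{G}_t$-measurable.

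The conditional expectation then reduces to a Gaussian convolution. Interchanging differentiation and integration and using $p(\cdot,\tau_r)*p(\cdot,\rho')=p(\cdot,\Sigma)$,
\[
\E\bigl(\partial_x^n G(T,r,S_r+h,k,v_r)\mid\mathcal{G}_t\bigr)=\int_\R \partial_x^n p(m+z,\tau_r)\,p(z,\rho')\,\d z=\partial_m^n\,p(m,\Sigma),
\]
with $\Sigma^2=\tau_r^2+\rho'^2=\int_r^T\sigma_s^2\,\d s+(1-\rho^2)\int_t^r\sigma_s^2\,\d s$. Applying the Hermite representation once more gives $\bigl|\partial_m^n p(m,\Sigma)\bigr|\le C_n\,\Sigma^{-(n+1)}$, where $C_n:=\sup_{u\in\R}|H_n(u)|\phi(u)<\infty$ since each $|H_n|\phi$ is bounded.

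Finally I would insert the lower bound on $\Sigma$. Because $0\le\rho^2<1$ we have $1\ge 1-\rho^2>0$, so $\Sigma^2\ge(1-\rho^2)\bigl(\int_r^T\sigma_s^2\,\d s+\int_t^r\sigma_s^2\,\d s\bigr)=(1-\rho^2)\int_t^T\sigma_s^2\,\d s$. Combining the two estimates yields the claim with $C(n,\rho)=C_n(1-\rho^2)^{-(n+1)/2}$. The main point requiring care is the conditional-law identification together with the correct bookkeeping of the effective variance $\Sigma^2$; this is precisely where the hypothesis $-1<\rho<1$ enters, since a vanishing $1-\rho^2$ would destroy the bound, and a careless split of the variance would spoil the exponent. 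The only remaining technical loose end is the interchange of $\partial_m^n$ with the integral, which follows from dominated convergence using the rapid decay of the Gaussian and its derivatives, or simply by differentiating the explicit Gaussian convolution formula.
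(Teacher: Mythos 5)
Your proposal is correct and follows essentially the same route as the paper: identify $G$ with the Gaussian kernel $p(\cdot-k,\sqrt{\smash[b]{\int_r^T\sigma_s^2\,\d s}})$, observe that conditionally on $\mathcal{G}_t$ the price $S_r+h$ is Gaussian with $\mathcal{G}_t$-measurable mean and variance $(1-\rho^2)\int_t^r\sigma_s^2\,\d s$, perform the Gaussian convolution to get a single kernel with variance $\int_r^T\sigma_s^2\,\d s+(1-\rho^2)\int_t^r\sigma_s^2\,\d s$, and bound its $n$-th derivative by $C_n\Sigma^{-(n+1)}$ with $\Sigma^2\geq(1-\rho^2)\int_t^T\sigma_s^2\,\d s$. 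Your variance bookkeeping is in fact slightly cleaner than the paper's (which contains a typographical $(1-\rho)^2$ where $(1-\rho^2)$ is meant), and the Hermite-polynomial bound you invoke is the standard way to justify the paper's unexplained estimate on $\partial_k^n p$.
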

\begin{proof}
    Recall that
    \[
    \Bac(T,t,k,x,\sigma) = (x-k)\Phi(d(k,\sigma)) + \Phi'(d(k,\sigma))\sigma \sqrt{T-t},
    \]
    where
    \[
    d(k,\sigma) = \frac{x-k}{\sigma\sqrt{T-t}}.
    \]
    A direct computation shows that
    \begin{align*}
        &\partial_x \Bac(T,t,k,x, \sigma)\\
        = &\Phi(d(k, \sigma)) +(x-k)\Phi'(d(k,\sigma)) \partial_x d(k,\sigma) + \Phi''(d(k,\sigma))\partial_{x} d(k,\sigma) \sigma\sqrt{T-t} \\
        = & \Phi(d(k,\sigma)) + \frac{x-k}{\sigma \sqrt{T-t}}\Phi'(d(k,\sigma)) + \Phi''(d(k,\sigma)) \\
        = &\Phi(d(k,\sigma)) + d(k, \sigma)\Phi'(d(k,\sigma)) - d(k,\sigma)\Phi'(d(k,\sigma)) \\
        =  &\Phi(d(k,\sigma)).
    \end{align*}
    Hence, $$G(T,t,x,k,\sigma) = \frac{\Phi'(d(k,\sigma))}{\sigma \sqrt{T-t}}. $$ This implies that
    \[
    E\left( \partial^n_{x} G(T,r, S_r + h,k,v_r) | \mathcal{G}_t \right)= (-1)^n \partial^n_{k} E\left( p(S_r + h-k, v_r \sqrt{T-r}) | \mathcal{G}_t \right).
    \]
   Now, the conditional expectation of $S_r+h$ given $\mathcal{G}_t$ is a normal random variable with mean equal to
   \[
    S_t + \rho \int_t^r \rho \sigma_{\theta} \d W_{\theta} + (L_r - L_t) + h =: \phi + h
   \]
   and variance equal to
   \[
   (1-\rho^2) \int_t^r \sigma^2_{\theta} \d \theta.
   \]
   This implies that 
   \begin{align*}
       &E\left( p(S_r + h- k, v_r \sqrt{T-r}) | \mathcal{G}_t\right) \\
       = & \int_{\R} p(y+h-k, v_r \sqrt{T-r}) p\left(y-\phi-h, \sqrt{ (1-\rho^2) \int_t^r \sigma^2_{\theta} \d \theta}\right) \d y \\
       = & p\left( \phi - k + 2h, \sqrt{\int_r^T \sigma^2_{\theta} \d \theta + (1-\rho^2)\int_t^r \sigma^2_{\theta} \d \theta}\right) \\
       = &p\left( \phi - k + 2h, \sqrt{(1-\rho)^2\int_t^T \sigma^2_{\theta} \d \theta + \rho^2\int_r^T \sigma^2_{\theta} \d \theta}\right).
   \end{align*}
   Finally, taking into account that $-1 < \rho < 1$ we find that
   \begin{align*}
       & \left|\partial^n_{k}p\left( \phi - k + 2h, \sqrt{(1-\rho)^2\int_t^T \sigma^2_{\theta} \d \theta + \rho^2\int_r^T \sigma^2_{\theta} \d \theta}\right) \right| \\
       \leq & C\left((1-\rho)^2\int_t^T \sigma^2_{\theta} \d \theta + \rho^2\int_r^T \sigma^2_{\theta} \d \theta \right)^{-\frac{1}{2}(n+1)} \leq C\left(\int_t^T \sigma_{\theta}^2 \d \theta \right)^{-\frac{n+1}{2}}_{,}
   \end{align*}
   concluding the proof.
\end{proof}
\begin{remark}
    Notice that the result still holds if $h = h_r$ is a stochastic process $\mathbb{F}$-adapted such that $|h_r|\leq M$ uniformly in $r$. Indeed, in this situation, due to the smoothness of the functions $\partial^n_x G$ and the fact that they are bounded in $x$ we know that
    \begin{align*}
    |E\left(\partial_x^n G(T,r,S_{r}+h_r, k, v_r)|\mathcal{G}_t\right)| \leq &\sup_{h \in [-M,M]}  | E\left(\partial_x^n G(T,r,S_{r}+h, k, v_r)|\mathcal{G}_t\right)|
    \end{align*}
    and we can apply now Lemma \ref{lemma: well defined integral} to deduce that the result holds. This remark generalizes a bit the similar results found in \cite{alos2007short} and \cite{alos2021malliavin}, for instance.
\end{remark}
 We are now ready to state and proof the Hull-White formula for Model \eqref{model of study}.
\begin{teo}
    Assume $S$ follows model \eqref{model of study} with $\rho \in (-1,1)$ and $\sigma$ satisfying Hypotheses \eqref{hyp 1} -- \eqref{hyp 3}. Then the value of a call option at time $t$, $V_t$ can be expressed as
    \begin{align*}
    V_t =  &E_t\left( \Bac(T,t, S_t, k, v_t)\right) \\
       + &\frac{\rho}{2} E_t \left(\int_t^T H(T,r,S_r, k, v_r) \Lambda_r \d r \right) \\
       + & E_t \left( \int_t^T \int_{\R}\Delta_{yy} \Bac(T,r,S_{r-}, k, v_r) \nu (\d y) \d r\right).
    \end{align*}
    where $G = \partial^2_{xx}\Bac$, $H = \partial_x G$ and $\Lambda_r = \sigma_r \int_r^T D_r \sigma_{\theta}^2 \d \theta$.
\end{teo}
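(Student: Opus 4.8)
The plan is to read the decomposition off a single application of the anticipating Itô formula of Proposition \ref{prop: Anticipating Ito Formula} to the Bachelier price evaluated along the future average volatility, followed by a conditional expectation. Concretely, I would set $F(r,x,y) := \Bac(T,r,x,k,\sqrt{y/(T-r)})$, so that $F(r,S_r,Y_r)=\Bac(T,r,S_r,k,v_r)$, and apply the Itô formula (in its $\Lambda_r$-form from the Remark) on $[t,T]$ conditionally on $\mathcal{F}_t$. Taking $E_t(\cdot)$ on both sides, the two genuine martingale contributions — the Skorohod integral $\int_t^T \partial_x F\,\sigma_r(\rho\,\d W_r+\sqrt{1-\rho^2}\,\d B_r)$ and the compensated jump integral against the compensated measure — vanish, leaving only the three bounded-variation terms.

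The crux is to show that the local (diffusive plus time) drift of $F$ is identically zero, so that only the correlation term carrying $\Lambda_r$ and the jump-compensator term carrying $\Delta_{yy}\Bac$ survive. Writing $\sigma=v=\sqrt{y/(T-r)}$, the vega--gamma identity for the Bachelier function, $\partial_\sigma \Bac = \phi(d)\sqrt{T-r}=\sigma(T-r)\,\partial^2_{xx}\Bac$ (which follows from $G=\partial^2_{xx}\Bac=\phi(d)/(\sigma\sqrt{T-r})$ computed in the proof of Lemma \ref{lemma: well defined integral}), gives after the chain rule $\partial_y F=\tfrac12\,\partial^2_{xx}\Bac=\tfrac12 G$; hence $-\partial_y F\,\sigma_r^2+\tfrac12\,\partial^2_{xx}F\,\sigma_r^2=0$. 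Likewise $\partial_r F=\partial_t\Bac+\partial_\sigma\Bac\cdot\partial_r v$, and since $\partial_r v=v/(2(T-r))$ the chain-rule term equals $\tfrac12 v^2 G$, exactly cancelling $\partial_t\Bac=-\tfrac12 v^2 G$ supplied by the Bachelier PDE $\mathcal{L}(v)\Bac=0$; thus $\partial_r F=0$ and the whole continuous drift disappears. Differentiating $\partial_y F=\tfrac12 G$ in $x$ yields $\partial^2_{xy}F=\tfrac12\,\partial_x G=\tfrac12 H$, which is precisely what converts the $\rho$-term into $\tfrac{\rho}{2}E_t\big(\int_t^T H(T,r,S_r,k,v_r)\Lambda_r\,\d r\big)$.

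It then remains to read off the boundary data. At maturity $Y_T=0$ and, as the time to maturity shrinks, $\Bac(T,r,S_r,k,v_r)\to(S_T-k)_+$, so $F(T,S_T,Y_T)=(S_T-k)_+$ and $E_t(F(T,S_T,Y_T))=E_t[(S_T-k)_+]=V_t$; the initial term stays inside the conditional expectation, $E_t(F(t,S_t,Y_t))=E_t(\Bac(T,t,S_t,k,v_t))$, because $v_t$ depends on $\int_t^T\sigma_s^2\,\d s$ and is not $\mathcal{F}_t$-measurable. Substituting the vanishing drift and the two surviving terms gives exactly the stated formula.

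I expect the main obstacle to be the analytic justification rather than the algebra. The Greeks $G$ and $H$ blow up as $r\uparrow T$, so the Itô formula should really be applied on $[t,T-\eta]$ with the limit $\eta\downarrow0$ taken afterwards; Lemma \ref{lemma: well defined integral}, together with the Remark extending it to bounded adapted shifts $h_r$ (needed for the $\Delta_{yy}$ term, where $h=y$), supplies the bounds $\big|E(\partial^n_x G(T,r,\cdot,k,v_r)\mid\mathcal{G}_t)\big|\le C\big(\int_t^T\sigma_s^2\,\d s\big)^{-(n+1)/2}$ that keep the three integrals finite in the limit and license Fubini and the passage to the boundary. One must also confirm that, under Hypotheses \ref{hyp 1}--\ref{hyp 3}, the integrands of the Skorohod and compensated-Poisson integrals lie in domains where the conditional expectation is zero: Hypothesis \ref{hyp 1} keeps $v_r$ bounded away from $0$ and $\infty$, while Hypotheses \ref{hyp 2} and \ref{hyp 3} control the Malliavin derivatives entering $\Lambda_r$.
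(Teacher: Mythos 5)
Your proposal is correct and follows essentially the same route as the paper: apply the anticipating Itô formula to $\Bac(T,r,S_r,k,v_r)$, cancel the drift via the Bachelier PDE together with the identity $\partial^2_{xx}\Bac=\partial_\sigma\Bac/(\sigma(T-r))$ so that only the $\frac{\rho}{2}H\Lambda_r$ and $\Delta_{yy}\Bac$ terms survive, take $E_t$ to kill the Skorohod and compensated-Poisson integrals, and justify finiteness of the surviving terms with Lemma \ref{lemma: well defined integral} and the mean value theorem on the $|y|\ge 1$ and $|y|<1$ pieces. The only cosmetic difference is that you make explicit the truncation at $T-\eta$ to handle the blow-up of the Greeks, which the paper leaves implicit.
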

\begin{proof}
    We use the anticipating Itô formula to $\Bac(T,T, S_T, k, v_T)$ to get
    \begin{align*}
        \Bac(T,T, S_T, k, v_T) = &\Bac(T,t, S_t, k, v_t) \\
        + &\int_t^T \partial_r \Bac(T,r,S_r, k, v_r) \d r \\
        + & \int_t^T \partial_{\sigma} \Bac(T,r,S_r, k, v_r) \frac{v_r^2-\sigma_r^2}{2v_r(T-r)} \d r \\
        + &\rho \int_t^T \partial^{2}_{x\sigma} \Bac(T,r,S_r, k, v_r) \frac{\Lambda_r}{2v_r(T-r)} \d r \\
        + &\frac{1}{2}\int_t^T \partial^2_{xx}\Bac(T,r,S_r, v_r) \sigma_r^2 \d r\\
        + &\int_t^T \int_{\R} \Delta_{yy} \Bac(T,r,S_r, k, v_r) \nu (\d y) \d r \\
        + & \int_t^T \partial_x \Bac(T,r, S_r, v_r) \sigma_r \left(\rho \d W_r + \sqrt{1-\rho^2} \d B_r \right).
    \end{align*}
   We now want to arrange the terms. Using the fact that
   \[
   \left(\partial_t +\frac{\sigma^2}{2}\partial^2_{xx} \right)\Bac(T,t,x,k,\sigma) = 0,
   \]
   and
   \[
   \partial^2_{xx}\Bac(T,t,x,k,\sigma) = \frac{\partial_{\sigma}\Bac(T,t,x,k,\sigma)}{\sigma(T-t)}
   \]
   we get
   \begin{align*}
       \Bac(T,T,S_T, k, v_T) = & \Bac(T,t, S_t, k, v_t) \\
       + &\frac{\rho}{2}\int_t^T \partial^3_{xxx}\Bac(T,r,S_r, k, v_r) \Lambda_r \d r \\
       + & \int_t^T \int_{\R} \Delta_{yy} \Bac(T,r,S_r, k, v_r) \nu (\d y) \d s \\
       + &  \int_t^T \partial_x \Bac(T,r, S_r, v_r) \sigma_r \left(\rho \d W_r + \sqrt{1-\rho^2} \d B_r \right).
   \end{align*}
   Taking conditional expectations to both sides and using that the stochastic integrals have zero expectation we conclude that
    \begin{align*}
       V_t =  &E_t\left( \Bac(T,T, S_T, k, v_T)\right) \\
       = &E_t\left( \Bac(T,t, S_t, k, v_t)\right) \\
       + &\frac{\rho}{2} E_t \left(\int_t^T H(T,r,S_r, k, v_r) \Lambda_r \d r \right) \\
       + & E_t \left( \int_t^T \int_{\R}\Delta_{yy} \Bac(T,r,S_{r-}, k, v_r) \nu (\d y) \d r\right).
    \end{align*}
   The second term in the right-hand-side of the previous formula is finite thanks to Lemma \ref{lemma: well defined integral}. In order to justify that the third term is well defined, we will follow closely the ideas of \cite{jafari2013hull}. We split the integral in the sum of the integrals over the sets $|y| \geq 1$ and $|y| < 1$ respectively, arguing in each case that the integrals are well defined. Regarding the integral when $|y| \geq 1$, we have
   \begin{align*}
     E_t\left( \int_t^T \int_{|y|\geq 1} \Bac(T,r,S_{r-}+y,k,v_r) - \Bac(T,r,S_{r-}, k, v_r) -y\partial_x \Bac(T,r,S_{r-}, k, v_r) \nu (\d y) \right).
   \end{align*}
   On the one hand we have $|\partial_x \Bac(T,t,x,k,\sigma)|\leq 1$, so $|\partial_x \Bac(T,r,S_r, k, v_r) y| \leq |y|$. Using the mean value theorem on the first two terms, we see that there exists a random point $\xi$ with $|\xi|\leq y$ such that
   \begin{align*}
     &E_t\left( \int_t^T \int_{|y|\geq 1} \Bac(T,r,S_{r-}+y,k,v_r) - \Bac(T,r,S_{r-}, k, v_r) -y\partial_x \Bac(T,r,S_{r-}, k, v_r) \nu (\d y) \right)\\
     \leq & E_t\left( \int_t^T \int_{|y|\geq 1} [\partial_x \Bac(T,r,S_{r-} + \xi, k, v_r) - \partial_x \Bac(T,r,S_{r-}, k, v_r)]y \nu(\d y) \d r\right).
   \end{align*}
    Since 
    \[
    \int_{|y|\geq 1} |y| \nu (\d y) < \infty,
    \]
    we have that this term is well defined. Regarding the integral when $|y| < 1$ we proceed as with $|y|\geq 1$, but we apply the mean value theorem one more time to get
    \[
    E_t\left(\int_t^T \int_{|y| < 1} G(T,r,S_r + \xi', k, v_r)y^2 \nu(\d y) \d r \right).
    \]
    In order to justify that this term is well defined, we apply the tower property of the conditional expectation to get
    \begin{align*}
    E_t\left(\int_t^T \int_{|y| < 1} G(T,r,S_r + \xi', k, v_r)y^2 \nu(\d y) \d r \right) = &E_t\left(E\left(\int_t^T \int_{|y| < 1} G(T,r,S_r + \xi', k, v_r)y^2 \nu(\d y) \d r \bigg| \mathcal{G}_t \right)\right) \\
    = & E_t\left(\int_t^T \int_{|y| < 1} E\left(G(T,r,S_r + \xi', k, v_r)\big| \mathcal{G}_t \right)y^2 \nu(\d y) \d r \right)
    \end{align*}
    Finally, applying Lemma \ref{lemma: well defined integral} together with the fact that
    \[
    \int_{|y|< 1} y^2 \nu (\d y)
    \]
    we deduce that the term is well defined, concluding the proof.
\end{proof}

\section{Short-time behavior of the ATM Implied Volatility level} \label{sec: Short-time behavior of the ATM Implied Volatility level}
The objective of this section is to study the short-time behavior of the ATM implied volatility. In other words, if we set $k_t^* = S_t$ we want to study the asymptotic behavior of $ I_t(k^*)$ as $T \to t$. As it was outlined in the previous sections, we will first consider the case where $L$ is a compound Poisson process with the drift that makes $L$ a martingale and then we will argue the other Lévy cases via an approximation argument. Since the result for the compound Poisson process can't be derived directly for every $\rho$, we will first prove a result for the uncorrelated case and then we will use this result to prove the correlated case.

\subsection{The uncorrelated case}
Our first analysis is focused on the case where $S$ follows Equation \eqref{model of study} with $\rho = 0$. We denote by $I_t^{0}(k)$ the implied volatility in this case. In this subsection, we prove the following result concerning the asymptotic behavior of the ATM implied volatility level as $T \to t$.
\begin{prop} \label{prop: uncorrelated level}
        Let $S$ follow the model \eqref{model of study} with $\rho = 0$ and let $L$ be a compound Poisson process with the drift term that makes it a martingale. Then,
        \begin{equation} \label{eq: level uncorrelated}
        \lim_{T \to t} I_t^0(k^*_t) = \sigma_t.
        \end{equation}
\end{prop}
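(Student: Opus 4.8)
The plan is to reduce the statement to a short-time estimate on the call price $V_t$ and then invert the Bachelier function at the money. Since $k_t^* = S_t$ gives $d(k_t^*,\sigma)=0$, the defining relation $V_t = \Bac(T,t,S_t,k_t^*,I_t^0(k_t^*))$ collapses to $V_t = \phi(0)\,I_t^0(k_t^*)\sqrt{T-t} = I_t^0(k_t^*)\sqrt{T-t}/\sqrt{2\pi}$, so that
\[
I_t^0(k_t^*) = \frac{\sqrt{2\pi}}{\sqrt{T-t}}\,V_t .
\]
Hence the whole proposition reduces to showing that $\sqrt{2\pi}\,V_t/\sqrt{T-t}\to\sigma_t$ as $T\to t$.

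To analyze $V_t$ I would apply the Hull-White type formula of Section \ref{sec: Hull-White formula} with strike $k=k_t^*=S_t$. Because $\rho=0$, the correlation term (the one carrying $\Lambda_r$) vanishes identically, leaving the two-term decomposition
\[
V_t = E_t\!\left[\Bac(T,t,S_t,k_t^*,v_t)\right] + E_t\!\left[\int_t^T\!\!\int_{\R}\Delta_{yy}\Bac(T,r,S_{r-},k_t^*,v_r)\,\nu(\d y)\,\d r\right].
\]
For the first (diffusive) term, the at-the-money identity gives $\Bac(T,t,S_t,k_t^*,v_t) = v_t\sqrt{T-t}/\sqrt{2\pi}$, so $\sqrt{2\pi}/\sqrt{T-t}$ times this term equals $E_t[v_t]$. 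Since $v_t^2 = (T-t)^{-1}\int_t^T\sigma_s^2\,\d s\to\sigma_t^2$ as $T\to t$ by the a.s. continuity of $\sigma$, and since $\alpha\le v_t\le\beta$ by Hypothesis \ref{hyp 1}, conditional dominated convergence yields $E_t[v_t]\to\sigma_t$.

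The crux is therefore to show that the jump term is negligible at order $\sqrt{T-t}$, i.e. that it is $O(T-t)$. The key point is to avoid the naive second-order bound $|\Delta_{yy}\Bac|\le\tfrac{y^2}{2}G$: since $G=\partial_{xx}^2\Bac$ blows up like $(T-r)^{-1/2}$ near maturity, it would only yield an $O(\sqrt{T-t})$ estimate, which is not sharp enough and would spuriously perturb the limit. Instead I would use that $\partial_x\Bac=\Phi(d)\in[0,1]$ (computed inside the proof of Lemma \ref{lemma: well defined integral}), which gives the uniform, non-degenerate bound
\[
\left|\Delta_{yy}\Bac(T,r,S_{r-},k_t^*,v_r)\right| \le \left|\Bac(T,r,S_{r-}+y,k_t^*,v_r)-\Bac(T,r,S_{r-},k_t^*,v_r)\right| + |y|\left|\partial_x\Bac\right| \le 2|y|.
\]
Since $L$ is a compound Poisson martingale, $\nu$ is finite and the martingale condition forces $\int_{\R}|y|\,\nu(\d y)<\infty$; hence the jump term is bounded in absolute value by $2(T-t)\int_{\R}|y|\,\nu(\d y)=O(T-t)$. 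Multiplying by $\sqrt{2\pi}/\sqrt{T-t}$ gives $O(\sqrt{T-t})\to0$. Combining the three displays, $\sqrt{2\pi}\,V_t/\sqrt{T-t}\to\sigma_t$, which is \eqref{eq: level uncorrelated}.

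The main obstacle I anticipate is precisely this control of the jump term: obtaining the sharp $O(T-t)$ rate requires the uniform Lipschitz-type bound $|\Delta_{yy}\Bac|\le 2|y|$ rather than the blowing-up second-order bound, together with the finiteness of $\int_{\R}|y|\,\nu(\d y)$ guaranteed here by the compound Poisson and martingale assumptions. A secondary technical point is justifying the passage to the limit inside the conditional expectation $E_t[v_t]\to\sigma_t$, which follows from the uniform bounds of Hypothesis \ref{hyp 1} and conditional dominated convergence.
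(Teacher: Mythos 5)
Your proof is correct, and it uses the same essential ingredients as the paper (the Hull--White decomposition with $\rho=0$, the convergence $E_t[v_t]\to\sigma_t$, and the $O(T-t)$ bound on the jump term via $|\partial_x\Bac|\le 1$ and $\int_{\R}|y|\,\nu(\d y)<\infty$), but it handles the inversion step differently and more cleanly. The paper keeps $\Bac^{-1}$ abstract: it decomposes $I_t^0(k_t^*)$ into three differences of $\Bac^{-1}$ evaluated at $E_t[T_1+T_2]$, $T_1+T_2$ and $T_1$, controls two of them with the mean value theorem and $\partial_\sigma\Bac^{-1}=\sqrt{2\pi}/\sqrt{T-t}$, and for the remaining term $E_t\bigl(\Bac^{-1}(\cdot,E_t[T_1+T_2])-\Bac^{-1}(\cdot,T_1+T_2)\bigr)$ appeals to the arguments of Al\`os et al.\ in the no-jump setting. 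You instead observe that at the money $d(k_t^*,\sigma)=0$, so $\sigma\mapsto\Bac(T,t,S_t,k_t^*,\sigma)=\sigma\sqrt{T-t}/\sqrt{2\pi}$ is exactly linear and $\Bac^{-1}$ is explicit; this collapses the whole inversion to $I_t^0(k_t^*)=\sqrt{2\pi}\,V_t/\sqrt{T-t}$ and makes the paper's third term identically zero, so no external argument is needed. (The paper itself uses this closed form later, in the numerical section.) Your bound $|\Delta_{yy}\Bac|\le 2|y|$ is the same estimate the paper obtains via the mean value theorem, and your remark that the finiteness of $\int_{\R}|y|\,\nu(\d y)$ follows from $\nu(\R)<\infty$ together with the martingale condition on the large jumps is accurate. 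In short: same skeleton, but your exploitation of the ATM linearity of the Bachelier function buys a genuinely shorter and self-contained argument.
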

\begin{proof}
Recall that the ATM-IV is defined via the equivalent relationships
\[
V_t = \Bac(T,t,S_t, k_t^*, I^0_t(k_t^*)), \quad I_t^0(k_t^*) = \Bac^{-1}(T,t, S_t, k_t^*, V_t).
\]
The Hull-White formula derived in Section \ref{sec: Hull-White formula} applied for the uncorrelated case ($\rho = 0$) tells us that
\begin{align*}
    V_t= & E_t(\Bac(T,t,S_t, k^*_t, v_t)) + E_t \left( \int_t^T \int_{\R} \Delta_{yy} \Bac(T,r,S_{r-}, k, v_r) \nu(\d y) \d r\right)_{k = k^*_t} \\
    = &E_t[T_1] + E_t[T_2].
\end{align*}
Hence,
\begin{align*}
    &I_t^0(k^*_t) \\
    = &E_t\left( \Bac^{-1}(T,t,S_t, k^*_t, E_t[T_1 + T_2]) \right) \\
    = &E_t\left(\Bac^{-1}(T,t,S_t, k^*_t, E_t[T_1+T_2]) 
    - \Bac^{-1}(T,t,S_t, k^*_t, T_1 + T_2)
    + \Bac^{-1}(T,t,S_t, k^*_t, T_1 + T_2) \right).
\end{align*}
We will analyze first the last term. Observe that we can write
\begin{align*}
    &E_t \left( \Bac^{-1}(T,t,S_t,k^*_t, T_1+T_2)\right)\\
    = & E_t \left(  \Bac^{-1}(T,t,S_t,k^*_t, T_1+T_2) - \Bac^{-1}(T,t,S_t, k^*_t, T_1)\right) + \E(\Bac^{-1}(T,t,S_t, k^*_t, T_1)).
\end{align*}
Using the mean value theorem, we observe that
\begin{align*}
    E_t \left(  \Bac^{-1}(T,t,S_t,k^*_t, T_1+T_2) - \Bac^{-1}(T,t,S_t, k^*_t, T_1)\right) 
     = &E_t \left(\partial_{\sigma}\Bac^{-1}(T,t,S_t, k^*_t, \xi)T_2 \right) \\
      = &E_t\left( \frac{\sqrt{2\pi}}{\sqrt{T-t}} T_2\right).
\end{align*}
Now, in order to analyze $T_2$, we resort to the mean value theorem in order to claim that exists $\xi_r$ such that
\begin{align*}
T_2 = &\int_t^T \int_{\R} \Delta_{yy} \Bac(T,r,S_{r-}, k, v_r) \nu(\d y) \d r
\\= &\int_t^T \int_{\R} (\partial_x \Bac(T,r, \xi_r, k, v_r) - \partial_x \Bac(T,r,S_{r-}, k, v_r) ) y\nu( \d y) \d r.
\end{align*}
Using now the facts that $|\partial_x\Bac(T,t,x,k,\sigma)|\leq 1$ and $\int_{\R} |y|\nu(\d y) < \infty$ we deduce that
\[
|T_2| \leq 2(T-t) ||y||_{L^1(\nu)}
\]
and therefore,
\[
E_t\left( \frac{\sqrt{2\pi}}{\sqrt{T-t}}T_2\right) \to 0, \text{ as } T \to t.
\]
On the other hand,
\begin{align*}
   E_t(\Bac^{-1}(T,t,S_t,k^*_t,T_1)) = &E_t(\Bac^{-1}(T,t,S_t, k^*_t, \Bac(T,t, S_t, k^*_t, v_t)) \\
    =  &E_t(v_t) \\
    \to & \sigma_t, \text{ as } T \to t.
    \end{align*}
    Finally, using the same arguments as in \cite{alos2023implied} for the term
    \[
    E_t\left(\Bac^{-1}(T,t,S_t, k^*_t, E_t[T_1+T_2]) - \Bac^{-1}(T,t,S_t, k^*_t, T_1 + T_2) \right)
    \]
    we see that this term goes to zero as $T \to t$, concluding therefore that the identity \eqref{eq: level uncorrelated} holds.
\end{proof}

    \begin{remark}
        Regarding the properties of $L$, we have relied on the fact that $||y||_{L^1(\nu)} < \infty$. Since this hypothesis is also satisfied for the infinite activity and finite variation case, we could have assumed that $L$ has infinite activity and finite variation and the result would be the same.
    \end{remark}
    \subsubsection{Extending the uncorrelated case to a general pure-jump Lévy process}
    In order to extend the result to the general pure-jump Lévy case we use an approximation argument with some suitable approximators. Let $S$ follow the model \eqref{model of study} and consider $(a, 0, \nu)$ the Lévy triplet of $L$. We construct these approximators as follows. Consider $0 < \eps <1$. Let $\tilde{S}^{\eps}$ be the process resulting of considering the model \eqref{model of study} with Lévy process $L^{\eps}$ defined as the Lévy process with triplet $(a,0, \nu_{\eps})$, where 
    \[
    \nu_{\eps} (\d x) = \1_{|x| > \eps} \nu(\d x).
    \]
    Notice that, by construction, $L_t^{\eps}$ has a finite amount of jumps and the magnitude of the jumps is greater than $\eps$. Moreover, from the fact that $L_t^{\eps}$ converges to $L_t$ as $\eps \to 0$ uniformly on $t$, we can assume (by replacing $\eps$ by a subsequence if needed) that $|L_t^{\eps} - L_t| < \eps / 2$. We denote by $\tilde{S}^{\eps}$ the stock price process that follows the same dynamics as in Equation \eqref{model of study} with Lévy martingale $L^{\eps}$. The sequence of approximators we will use is
    \[
    S_t^{\eps} = \tilde{S}_t^{\eps} + \frac{3\eps}{2}.
    \]
    It is easy to see that $S_t^{\eps} \to S_t$ a.s. and in $L^2$ uniformly on $t$. Moreover,
    \[
    S_t^{\eps} - S_t = \frac{3\eps}{2} + L_t^{\eps} - L_t \geq \frac{3 \eps}{2} - \frac{\eps}{2} = \eps > 0.
    \]
      We now want to proof that if $S$ follows the model \eqref{model of study} with $\rho = 0$ and $L$ being an arbitrary Lévy process with measure $\nu$, then the following result holds.
       \begin{prop} \label{prop: general level uncorrelated}
        Let $S$ follow the model \eqref{model of study} with $\rho = 0$ and $L$ being a general pure-jump Lévy martingale. Then,
        \[
        \lim_{T \to t} I_t^0(k^*_t) = \sigma_t.
        \]
    \end{prop}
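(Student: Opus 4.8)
The plan is to transfer the compound-Poisson result of Proposition \ref{prop: uncorrelated level} to the general Lévy martingale $L$ by means of the approximators $S^{\eps} = \tilde{S}^{\eps} + \tfrac{3\eps}{2}$ constructed above, exploiting the fact that in the Bachelier model the ATM implied volatility is just a rescaling of the ATM call price. Indeed, since $\Bac(T,t,x,k,\sigma)$ depends on $x,k$ only through the moneyness $x-k$, and $d(k,\sigma)=0$ when $k=x$, one has the exact identity $\Bac(T,t,x,x,\sigma)=\sigma\sqrt{T-t}/\sqrt{2\pi}$, so that the ATM implied volatilities of $S$ and of the approximator $S^{\eps}$ (whose $\rho=0$ implied volatility I denote $I_t^{0,\eps}$) are given explicitly by
\[
I_t^{0}(k_t^{*})=\frac{\sqrt{2\pi}}{\sqrt{T-t}}\,E_t\big[(S_T-S_t)_{+}\big],
\qquad
I_t^{0,\eps}(k_t^{*,\eps})=\frac{\sqrt{2\pi}}{\sqrt{T-t}}\,E_t\big[(S_T^{\eps}-S_t^{\eps})_{+}\big],
\]
with $k_t^{*}=S_t$ and $k_t^{*,\eps}=S_t^{\eps}$. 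Since for $\eps<1$ the truncation $\nu_{\eps}=\1_{|y|>\eps}\nu$ leaves the large-jump part $|y|\ge 1$ (and hence the martingale drift $a$) unchanged while $\nu_{\eps}(\R)<\infty$, the process $L^{\eps}$ is a compound Poisson martingale and $S^{\eps}$ is again of the form \eqref{model of study} with $\rho=0$; thus Proposition \ref{prop: uncorrelated level} applies verbatim and gives $\lim_{T\to t} I_t^{0,\eps}(k_t^{*,\eps})=\sigma_t$ for every fixed $\eps$.

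The next step is to estimate the difference of the two ATM prices. Because the deterministic shift $\tfrac{3\eps}{2}$ cancels in the increment $S_T^{\eps}-S_t^{\eps}$, the two increments differ only through the removed small jumps,
\[
(S_T^{\eps}-S_t^{\eps})-(S_T-S_t)=-\int_t^T\!\!\int_{|y|\le\eps} y\,\tilde{N}(\d s,\d y).
\]
Using that $x\mapsto x_{+}$ is $1$-Lipschitz, together with Jensen's inequality and the Itô isometry for the compensated Poisson integral, I would then bound
\[
\big|I_t^{0}(k_t^{*})-I_t^{0,\eps}(k_t^{*,\eps})\big|
=\frac{\sqrt{2\pi}}{\sqrt{T-t}}\,\big|E_t[(S_T-S_t)_{+}]-E_t[(S_T^{\eps}-S_t^{\eps})_{+}]\big|
\le \sqrt{2\pi}\left(\int_{|y|\le\eps} y^2\,\nu(\d y)\right)^{1/2}.
\]
The decisive feature is that the price difference carries a factor $\sqrt{T-t}$, coming from the $L^2$-norm of the small-jump martingale over $[t,T]$, which cancels exactly the $1/\sqrt{T-t}$ of the ATM formula; the resulting bound is uniform in the maturity and tends to $0$ as $\eps\to 0$, since $\int_{|y|\le 1}y^2\,\nu(\d y)<\infty$ for any Lévy measure.

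Combining the two estimates via the triangle inequality yields, for every $\eps$,
\[
\limsup_{T\to t}\big|I_t^{0}(k_t^{*})-\sigma_t\big|
\le \sqrt{2\pi}\left(\int_{|y|\le\eps} y^2\,\nu(\d y)\right)^{1/2}
+\limsup_{T\to t}\big|I_t^{0,\eps}(k_t^{*,\eps})-\sigma_t\big|
=\sqrt{2\pi}\left(\int_{|y|\le\eps} y^2\,\nu(\d y)\right)^{1/2},
\]
and letting $\eps\to0$ gives the claim. I expect the genuine obstacle to be exactly the interchange of the two limits $T\to t$ and $\eps\to 0$: a crude uniform bound such as $|S^{\eps}-S|\le 2\eps$ would be useless once the price difference is divided by $\sqrt{T-t}$, so the whole argument hinges on extracting the correct power $\sqrt{T-t}$ from the small-jump contribution through the Itô isometry. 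This is precisely what makes the error estimate uniform in $T$ and allows $\eps$ to be sent to zero in the last step; it also explains the role of the $L^2$-type control and clarifies that the deterministic shift $\tfrac{3\eps}{2}$, though convenient for ensuring $S_t^{\eps}>S_t$, is immaterial for the ATM analysis because it cancels in the moneyness.
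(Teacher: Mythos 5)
Your proof is correct, and the overall strategy (truncate to a compound Poisson process, apply Proposition \ref{prop: uncorrelated level} to the approximator, show the approximation error is uniform in $T-t$, and interchange the limits) is the same as the paper's. Where you genuinely diverge is in how the uniform-convergence step is carried out. The paper proves a separate lemma (Lemma \ref{lemma: uniform convergence uncorrelated}) that compares $\Bac^{-1}(T,t,S_t^{\eps},k_t^{*,\eps},V_t^{\eps})$ with $\Bac^{-1}(T,t,S_t,k_t^{*},V_t)$ argument by argument, using the mean value theorem in the strike, the spot and the price variables, bounding $\partial_k\Bac^{-1}$, $\partial_x\Bac^{-1}$ and $\partial_\sigma\Bac^{-1}$ separately; this is why it needs the shift $\tfrac{3\eps}{2}$ in the approximator, to control the sign of $S_t^{\eps}-S_t$ and keep the intermediate points on the right side of the money. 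You instead observe that at the money $d=0$, so $\Bac(T,t,x,x,\sigma)=\sigma\sqrt{T-t}/\sqrt{2\pi}$ and the ATM implied volatility is \emph{exactly} $\sqrt{2\pi}\,E_t[(S_T-S_t)_+]/\sqrt{T-t}$ (an identity the paper itself uses later, in Section \ref{section: examples and numerical analysis}); this collapses the whole comparison to a single price estimate, which you control with the same Lipschitz-plus-It\^o-isometry bound $|V_t-V_t^{\eps}|\le\bigl((T-t)\int_{|y|\le\eps}y^2\,\nu(\d y)\bigr)^{1/2}$ that appears as the $A_2$ term in the paper's lemma. Your version is more elementary and makes the essential mechanism transparent: the $\sqrt{T-t}$ produced by the isometry cancels the $1/\sqrt{T-t}$ of the ATM inversion, which is precisely what makes the bound uniform in maturity and legitimizes sending $\eps\to0$ after $T\to t$. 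The price of your shortcut is that it is specific to the ATM point and would not transfer to statements requiring control of the implied volatility away from the money or of its strike derivative, where the paper's $\Bac^{-1}$-based machinery is reused (e.g.\ in Lemma \ref{approximate skew}); within the scope of this proposition, however, your argument is complete and arguably cleaner, and it correctly identifies the deterministic shift $\tfrac{3\eps}{2}$ as immaterial here.
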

      In order to prove the result, we will use a lemma that will help to get to the conclusion via an approximation argument. Consider $S_t^{\eps}$ the sequence of approximators defined as above. The result that we will use in order to prove Proposition \ref{prop: general level uncorrelated} is the following lemma.
      \begin{lema} \label{lemma: uniform convergence uncorrelated}
          Let $I_t^0(k_t^*)$ be the ATM implied volatility for the price model driven by $S$ and let $I_t^{0,\eps}(k_t^{\eps,*})$ be the ATM implied volatility for the price model driven by $S^{\eps}$. Then,
          \[
          \lim_{\eps \to 0} I_t^{0,\eps}(k_t^{\eps,*}) = I_t^0(k_t^*)
          \]
          uniformly in $T-t$.
      \end{lema}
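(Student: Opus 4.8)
The plan is to reduce the statement to a clean estimate on option prices by exploiting the closed form of the Bachelier function at the money. Since $k_t^* = S_t$ forces $d(k_t^*,\sigma)=0$, one has $\Bac(T,t,S_t,S_t,\sigma)=\phi(0)\,\sigma\sqrt{T-t}=\sigma\sqrt{T-t}/\sqrt{2\pi}$, and inverting,
\[
I_t^0(k_t^*) = \frac{\sqrt{2\pi}}{\sqrt{T-t}}\, E_t\big[(S_T - S_t)_+\big].
\]
The same computation applies verbatim to $S^\eps$: the shift $3\eps/2$ cancels in $S_T^\eps - k_t^{\eps,*} = S_T^\eps - S_t^\eps = \tilde S_T^\eps - \tilde S_t^\eps$, so that
\[
I_t^{0,\eps}(k_t^{\eps,*}) = \frac{\sqrt{2\pi}}{\sqrt{T-t}}\, E_t\big[(S_T^\eps - S_t^\eps)_+\big].
\]
Hence the difference of the two implied volatilities is nothing but the difference of two at-the-money prices, amplified by $1/\sqrt{T-t}$, and the whole task is to show that this amplified difference tends to $0$ with $\eps$ without deteriorating as $T\to t$.

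First I would isolate the source of the discrepancy. Because the diffusive parts of $S$ and $\tilde S^\eps$ share the same $\sigma$ and the same $B$ (recall $\rho=0$), the increments differ only through the jump parts:
\[
(S_T^\eps - S_t^\eps) - (S_T - S_t) = (L_T^\eps - L_t^\eps) - (L_T - L_t) = -\int_t^T\!\!\int_{|y|\le\eps} y\,\tilde{N}(\d s,\d y) =: \Delta.
\]
The naive route, bounding $|\Delta|\le |L_T^\eps - L_T| + |L_t^\eps - L_t| \le \eps$ uniformly in $t$, is precisely the step that fails: after dividing by $\sqrt{T-t}$ one is left with $\eps/\sqrt{T-t}$, which blows up as $T\to t$. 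The key idea is instead to measure $\Delta$ in $L^2$ over the interval $[t,T]$. Since $\Delta$ is a compensated Poisson integral over $(t,T]$, hence independent of $\mathcal{F}_t$, the isometry for compensated Poisson integrals yields
\[
E_t[\Delta^2] = (T-t)\int_{|y|\le\eps} y^2\,\nu(\d y),
\]
which carries exactly the factor $T-t$ needed to neutralize the amplification.

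The conclusion then follows by combining the $1$-Lipschitz property of $x\mapsto x_+$ with conditional Jensen:
\[
\big|E_t[(S_T^\eps - S_t^\eps)_+] - E_t[(S_T - S_t)_+]\big| \le E_t[|\Delta|] \le \big(E_t[\Delta^2]\big)^{1/2} = \sqrt{(T-t)\int_{|y|\le\eps} y^2\,\nu(\d y)},
\]
so that, after multiplying by $\sqrt{2\pi}/\sqrt{T-t}$,
\[
\big|I_t^{0,\eps}(k_t^{\eps,*}) - I_t^0(k_t^*)\big| \le \sqrt{2\pi}\left(\int_{|y|\le\eps} y^2\,\nu(\d y)\right)^{1/2}.
\]
The right-hand side is free of $T$ and tends to $0$ as $\eps\to 0$, since $\int_{|y|\le 1} y^2\,\nu(\d y)<\infty$ for any Lévy measure; this delivers the convergence uniformly in $T-t$. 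I expect the main obstacle to be exactly the recognition that pointwise (uniform-in-$t$) control of $L^\eps - L$ is too crude, and that one must instead quantify the truncated small-jump remainder by its quadratic variation over $[t,T]$ in order to recover the $\sqrt{T-t}$ scaling; once that observation is made, the estimate is short and requires neither Hypothesis \ref{hyp 1} nor the decomposition formula.
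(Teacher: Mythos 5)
Your proof is correct, and it takes a genuinely cleaner route than the paper's. The paper works with the general inverse Bachelier function and decomposes $I_t^{0,\eps}(k_t^{*,\eps})-I_t^0(k_t^*)$ into three pieces (a strike shift, a spot shift $A_1$, and a price difference $A_2$), each controlled by the mean value theorem applied to $\Bac^{-1}$ in a different variable; this forces it to bound $\partial_k\Bac^{-1}$ and $\partial_x\Bac^{-1}$ along the way. You instead observe that at the money $d(k_t^*,\sigma)=0$, so the Bachelier price is exactly linear in $\sigma$ and $I_t^0(k_t^*)=\sqrt{2\pi}\,E_t[(S_T-S_t)_+]/\sqrt{T-t}$ (a formula the paper itself uses later, in its numerical section); since the additive shift $3\eps/2$ and the spot both cancel in $S_T^\eps-k_t^{*,\eps}$, the spot- and strike-shift terms disappear identically and only the price difference survives. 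From that point on your argument coincides with the paper's treatment of $A_2$: the $1$-Lipschitz property of $x\mapsto x_+$, conditional Jensen, and the compensated-Poisson isometry yield the bound $\sqrt{2\pi}\bigl(\int_{|y|\le\eps}y^2\,\nu(\d y)\bigr)^{1/2}$, which is free of $T-t$ and vanishes as $\eps\to 0$ because $\int_{|y|\le 1}y^2\,\nu(\d y)<\infty$ for any L\'evy measure. What your version buys is the elimination of the $A_1$-type terms, whose uniform-in-$(T-t)$ control in the paper is asserted somewhat briskly (derivatives of $\Bac^{-1}$ in the spot variable are not obviously bounded uniformly as $T\to t$ away from the money); the trade-off is that your argument is genuinely tied to the ATM point and would not extend verbatim to a statement about $I_t^{0,\eps}(k)$ at a fixed strike $k\neq S_t^\eps$, whereas the paper's scaffolding is reused almost unchanged for the skew approximation in Lemma \ref{approximate skew}, where the explicit ATM formula no longer collapses everything.
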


      \begin{proof}
          Recall that if $V_t^{\eps}$ and $V_t$ stand for the price of a call option with strikes $k_t^{*.\eps} = S_t^{\eps}$ and $k^*_t = S_t$ under the stock models $S^{\eps}$ and $S$ respectively then
        \[
        I_t^{0,\eps}(k^{*,\eps}_t) = \Bac^{-1}(T,t,S_t^{\eps},k^{*,\eps}_t, V_t^{\eps}), \quad I_t^{0}(k^*) = \Bac^{-1}(T,t, S_t,k^*_t, V_t).
        \]
        Thus, we can write
        \begin{align*}
           &I_t^{0,\eps}(k^{*,\eps}_t) -  I_t^{0}(k_t^*) \\
           = &I_t^{0,\eps}(k^{*,\eps}_t) - \Bac^{-1}(T,t,S_t^{\eps},k_t^*, V_t^{\eps}) +  \Bac^{-1}(T,t,S_t^{\eps},k_t^*, V_t^{\eps}) - I_t^{0}(k_t^*).
        \end{align*}
        First, notice that by the mean value theorem
        \[
        I_t^{0,\eps}(k_t^{*,\eps}) - \Bac^{-1}(T,t,S_t^{\eps},k_t^*, V_t^{\eps}) = \partial_k \Bac^{-1}(T,t,S_t^{\eps}, \xi, V_t^{\eps})(k_t^{*,\eps} - k_t^{*})
        \]
        where 
        \begin{equation} \label{mean value}
            S_t = k_t^* \leq \xi \leq k_t^{*,\eps} = S_t^{\eps}.
        \end{equation}
        On the one hand,
        \[
        \partial_k \Bac^{-1}(T,t,S_t^{\eps}, \xi, V_t^{\eps}) =  \frac{-1}{\Phi(d(\xi,\Bac^{-1}(T,t,S_t^{\eps},\xi, V_t^{\eps})))}.
        \]
        Then, since we got
        \[
        \Bac(T,t,x,k,*) : (0, \infty) \to (0,\infty)
        \]
        whenever $x > k$, we have that
        \[
        \Bac^{-1}(T,t,S_t^{\eps},\xi, V_t^{\eps}) > 0
        \]
        This, together with the fact that \eqref{mean value} implies $S_t^{\eps} - \xi \geq 0$ leads us to
        \[
        |  I_t^{0,\eps}(k_t^{*,\eps}) - \Bac^{-1}(T,t,S_t^{\eps},k_t^*, V_t^{\eps})| \leq \frac{1}{\inf_{z \geq 0} \Phi(z)}|k_t^{*,\eps} - k_t^*|.
        \]
      On the other hand, using that $k_t^{*,\eps} = S_t^{\eps}$ and $k_t^* = S_t$ we see that $|  I_t^{0,\eps}(k_t^{*,\eps}) -\Bac^{-1}(T,t,S_t^{\eps},k_t^*, V_t^{\eps})|$ tends to zero uniformly on $T-t$. We now have to deal with $\Bac^{-1}(T,t,S_t^{\eps},k_t^*, V_t^{\eps}) - I_t^{0}(k^*_t)$. We can write
        \begin{align*}
             &\Bac^{-1}(T,t,S_t^{\eps},k_t^*, V_t^{\eps}) - I_t^{0}(k^*)\\
             = & \Bac^{-1}(T,t,S_t^{\eps},k^*_t, V_t^{\eps}) -  \Bac^{-1}(T,t, S_t,k^*_t, V_t) \\
             = & \Bac^{-1}(T,t,S_t^{\eps},k^*_t, V_t^{\eps}) -  \Bac^{-1}(T,t, S_t,k^*_t, V_t) + \Bac^{-1}(T,t,S_t, k_t^*, V_t^{\eps}) -\Bac^{-1}(T,t,S_t, k_t^*, V_t^{\eps}) \\
             = & \left(\Bac^{-1}(T,t,S_t^{\eps},k^*_t, V_t^{\eps}) - \Bac^{-1}(T,t,S_t, k_t^*, V_t^{\eps}) \right) + \left(\Bac^{-1}(T,t,S_t, k_t^*, V_t^{\eps})  -  \Bac^{-1}(T,t, S_t,k^*_t, V_t)  \right) \\
             = & A_1 + A_2.
        \end{align*}
        Let's analyze first $A_1$. Using the mean value theorem on the third variable we rewrite the first term as
        \[
        A_1 = \partial_x \Bac^{-1}(T,t,\xi, k_t^*, V_t^{\eps})(S_t^{\eps} - S_t).
        \]
        In the same way as we discussed before, we deduce that $|A_1| \to 0$ as $\eps \to 0$ uniformly on $T-t$. Regarding $A_2$, by means of the mean value theorem we know that there exists a mid point $\xi_{\eps}$ between $V_t$ and $V_t^{\eps}$ such that
        \[
        A_2 =  \partial_{\sigma}\Bac^{-1}(T,t,S_t,k^*_t, \xi_{\eps})(V_t^{\eps} - V_t).
        \]
        Hence,
        \[
       A_2 = \frac{ \sqrt{2\pi}}{\sqrt{T-t}}(V_t^{\eps} - V_t).
        \]
        Observe that
        \begin{align*}
        |V_t - V_t^{\eps}| = |E_t[(S_T - S_t)_+] - E_t[(S_T^{\eps} - S_t^{\eps})_+] |.
        \end{align*}
        Since the function $(\cdot)_{+}$ is Lipschitz continuous with Lipschitz constant $1$, we can rewrite the last expression as
        \[
        |V_t - V_t^{\eps}| \leq E_t\left( |(S_T - S_t) - (S_T^{\eps} - S_t^{\eps}) | \right),
        \]
        from which we deduce that
        \begin{align*}
            |V_t - V_t^{\eps}| \leq &E_t\left( |(S_T - S_t) - (S_T^{\eps} - S_t^{\eps}) | \right)
            \leq E_t\left(\left|\int_t^T \int_{|x| < \eps} x \Tilde{N}(\d s, \d x) \right| \right).
        \end{align*}
        Now, we use the isometry property of the compensated Poisson integral and the fact that it is a martingale with independent increments we have
        \[
        E_t\left[\left|\int_t^T \int_{|x| < \eps} x \tilde{N}(ds, dx)\right| \right] \leq \sqrt{(T-t) \int_{|x| < \eps} x^2 \nu(\d x)}.
        \]
        This implies that $|A_2| \to 0$ as $\eps \to 0$ uniformly on $T-t$.
      \end{proof}
    Using this lemma we can prove the desired result for a general pure-jump Lévy process.
     \begin{proof}[Proof (Proposition \ref{prop: general level uncorrelated})]
        Since the jumps of the approximator $S^{\eps}$ are modeled with a compound Poisson process with drift, we can rely on Proposition \ref{prop: uncorrelated level} to say that for every $\eps > 0$,
        \[
        \lim_{T \to t} I^{0, \eps}_t(k_t^{*,\eps}) = \sigma_t.
        \]
        Moreover, Lemma \ref{lemma: uniform convergence uncorrelated} gives us that
        \[
        \lim_{\eps \to 0} I^{0, \eps}_t(k_t^{*,\eps}) = I_t^0(k_t^*)
        \]
        uniformly on $T-t$. Hence, by the Moore-Osgood theorem we can switch limits and conclude that
        \begin{align*}
        \lim_{T \to t} I_t^0(k^*) = &\lim_{T \to t} \lim_{\eps \to 0}  I_t^{0,\eps}(k^{*,\eps}_t) \\
        =& \lim_{\eps \to 0} \lim_{T \to t} I_t^{0,\eps}(k_t^{*,\eps})\\
        = & \lim_{\eps \to 0} \sigma_t \\
        = & \sigma_t.
        \end{align*}
    \end{proof}
    With this result, we can now attack the correlated case.
    \subsection{The correlated case}
    In the same way as we have proved the uncorrelated case, we can prove that the ATM implied volatility for the correlated satisfies the same asymptotic behavior. The result stating this conclusion is the following.
    
    \begin{prop} \label{prop: level arbitrary Lévy}
        Let $S$ follow the model \eqref{model of study} with $L$ an arbitrary pure-jump Lévy martingale. Then,
        \[
        \lim_{T \to t} I_t(k^*_t) = \sigma_t.
        \]
    \end{prop}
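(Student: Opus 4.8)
The plan is to reproduce the scheme used for the uncorrelated case (Proposition \ref{prop: uncorrelated level}), now retaining the extra correlation term in the Hull-White decomposition, and then to upgrade from the compound Poisson setting to an arbitrary pure-jump Lévy martingale by the same approximation argument as before. First I would treat the compound Poisson case. Writing the Hull-White formula as $V_t = E_t[T_1] + E_t[T_2^{\rho}] + E_t[T_2]$, where $T_1 = \Bac(T,t,S_t,k^*_t,v_t)$, $T_2^{\rho} = \frac{\rho}{2}\int_t^T H(T,r,S_r,k^*_t,v_r)\Lambda_r\,\mathrm{d}r$ and $T_2 = \int_t^T\int_{\R}\Delta_{yy}\Bac(T,r,S_{r-},k^*_t,v_r)\nu(\mathrm{d}y)\,\mathrm{d}r$, I would invert $\Bac$ in its volatility argument exactly as in the proof of Proposition \ref{prop: uncorrelated level}, decomposing $I_t(k^*_t)=\Bac^{-1}(T,t,S_t,k^*_t,V_t)$ into: (i) the principal term $E_t(\Bac^{-1}(T,t,S_t,k^*_t,T_1)) = E_t(v_t)$, which tends to $\sigma_t$ by continuity of $\sigma$ and dominated convergence under Hypothesis \ref{hyp 1}; (ii) the jump correction from $T_2$, handled verbatim as before, where $|\partial_x\Bac|\le 1$ and $\int_{\R}|y|\nu(\mathrm{d}y)<\infty$ give $|T_2|\le 2(T-t)\|y\|_{L^1(\nu)}$, so its implied-volatility contribution is $O((T-t)^{1/2})\to 0$; and (iii) the Jensen-type gap $E_t(\Bac^{-1}(\cdots,E_t[T_1+T_2^{\rho}+T_2]) - \Bac^{-1}(\cdots,T_1+T_2^{\rho}+T_2))$, which vanishes by the argument of \cite{alos2023implied}.

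The genuinely new ingredient, and the step I expect to be the main obstacle, is controlling the correlation term $T_2^{\rho}$. Applying the mean value theorem as for $T_2$ yields an implied-volatility contribution of the form $E_t\bigl(\tfrac{\sqrt{2\pi}}{\sqrt{T-t}}\,T_2^{\rho}\bigr)$, so I must show $E_t\bigl[\int_t^T H\Lambda_r\,\mathrm{d}r\bigr] = o((T-t)^{1/2})$. Here I would condition on $\mathcal{G}_t = \mathcal{F}_t\vee\F_T^{W}\vee\F_T^{L}$, under which $\Lambda_r$ is measurable, and apply Lemma \ref{lemma: well defined integral} with $n=1$ to $H=\partial_x G$ to obtain $|E(H\mid\mathcal{G}_t)|\le C(\int_t^T\sigma_s^2\,\mathrm{d}s)^{-1}\le C(\alpha^2(T-t))^{-1}$ via Hypothesis \ref{hyp 1}. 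For $\Lambda_r = \sigma_r\int_r^T D_r\sigma_{\theta}^2\,\mathrm{d}\theta$, Hypothesis \ref{hyp 3} (with $p=1$) together with $\alpha\le\sigma\le\beta$ gives $E_t(|\Lambda_r|)\le C(T-r)^{H+1/2}$. Combining these through the tower property yields $|E_t[\int_t^T H\Lambda_r\,\mathrm{d}r]|\le C(T-t)^{-1}\int_t^T(T-r)^{H+1/2}\,\mathrm{d}r = C(T-t)^{H+1/2}$, so the contribution is $O((T-t)^{H})\to 0$ since $H>0$. This is the same Greek-times-Malliavin-weight estimate carried out in \cite{alos2007short}; the only delicate point is the correct bookkeeping of the conditional-expectation orders.

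Finally, to pass from the compound Poisson case to an arbitrary pure-jump Lévy martingale, I would reuse the approximators $S_t^{\eps}$ introduced before Lemma \ref{lemma: uniform convergence uncorrelated}. The proof of that lemma relies only on the Lipschitz property of $(\cdot)_+$ and the isometry of the compensated jump integral — neither of which involves $\rho$, since the diffusion parts of $S$ and $S^{\eps}$ coincide — so its correlated analog, $\lim_{\eps\to 0} I_t^{\eps}(k_t^{*,\eps}) = I_t(k^*_t)$ uniformly in $T-t$, follows by the identical argument. As each $S^{\eps}$ has compound Poisson jumps, the first part gives $\lim_{T\to t} I_t^{\eps}(k_t^{*,\eps}) = \sigma_t$ for every $\eps$, and the Moore-Osgood theorem permits the exchange of the two limits, concluding that $\lim_{T\to t} I_t(k^*_t) = \sigma_t$.
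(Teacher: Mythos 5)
Your proposal is correct and follows essentially the same route as the paper: reduce to the compound Poisson case via the Hull--White decomposition, show the jump term and the correlation term contribute $o(\sqrt{T-t})$ to the option price so that their implied-volatility contributions vanish, and then pass to a general pure-jump Lévy martingale with the $S^{\eps}$ approximators and Moore--Osgood. The only cosmetic difference is that the paper expresses the correlated correction via the classical Itô formula applied to $r\mapsto \Bac^{-1}(T,t,S_t,k_t^*,\Gamma_{t,r})$ and cites earlier work for the resulting estimate, whereas you obtain the identical integral by the mean value theorem (legitimate, since $\partial_{\sigma}\Bac^{-1}$ at the money equals the deterministic constant $\sqrt{2\pi}/\sqrt{T-t}$) and carry out the Greek-times-Malliavin-weight bound $O((T-t)^{H+1/2})$ explicitly.
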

    \begin{proof}
         We use the same strategy as with the uncorrelated case. We will deduce the result for the case where $L_t = -c_1t + J_t$ with $J$ a compound Poisson process since the approximation argument performed in the correlated case is identical as with the correlated case and therefore we skip it for the sake of conciseness. The Hull-White formula tells us that
    \begin{align*}
    V_t = &E_t(\Bac(T,t,S_t,k^*_t,v_t)) + \frac{\rho}{2}E_t\left[ \int_t^T H(T,r,S_r, k^*_t, v_r) \Lambda_r \d r\right] \\
    + &E_t\left[ \int_t^T \int_{\R} \Delta_{yy} \Bac(T,r,S_{r-}, k_t^*,v_r) \nu(\d y) \d r\right].
    \end{align*}
    Hence, if we let 
    \[
    \Gamma_{t,s} := E_t(\Bac(T,t,S_t, k_t^*, v_t))) + \frac{\rho}{2}E_t\left[ \int_t^s H(T,r,S_r, k^*_t, v_r) \Lambda_r \d r\right]
    \] and \[
    A_{t,T}:= E_t\left[ \int_t^T \int_{\R} \Delta_{yy} \Bac(T,r,S_{r-}, k_t^*,v_r) \nu(\d y) \d r\right]
    \]
    then we have
    \begin{align*}
        I_t(k^*_t) = &\Bac^{-1}(T,t,S_t, k^*_t, V_t) \\
        = & E_t \left( \Bac^{-1}(T,t,S_t, k^*_t, \Gamma_{t,T} + A_{t,T})\right) \\
        =  & E_t \left( \Bac^{-1}(T,t,S_t, k^*_t, \Gamma_{t,T} + A_{t,T}) -\Bac^{-1}(T,t,S_t, k^*_t, \Gamma_{t,T}) + \Bac^{-1}(T,t,S_t, k^*_t, \Gamma_{t,T}) \right).
    \end{align*}
    Now, using the mean value theorem, we have that there exists $\xi$ with
    \begin{align*}
        &E_t \left( \Bac^{-1}(T,t,S_t, k^*_t, \Gamma_{t,T} + A_{t,T}) -\Bac^{-1}(T,t,S_t, k^*_t, \Gamma_{t,T})\right) \\
        & = E_t \left( \partial_{\sigma} \Bac^{-1}(T,t,S_t, S_t, \xi)A_{t,T}\right)
    \end{align*}
    which, using the same argument as with the uncorrelated case, tends to zero as $T$ tends to $t$. For the remaining term we have
    \begin{align*}
        &E_t \left(\Bac^{-1}(T,t,S_t, k^*_t, \Gamma_{t,T}) \right) \\
        =  &E_t \left(\Bac^{-1}(T,t,S_t, k^*_t, \Gamma_{t,T})  - \Bac^{-1}(T,t,S_t, k^*_t, \Gamma_{t,t}) + \Bac^{-1}(T,t,S_t, k^*_t, \Gamma_{t,t})\right) \\
        = &I_t^{0}(k^*_t) + E_t \left(\Bac^{-1}(T,t,S_t, k^*_t, \Gamma_{t,T})  - \Bac^{-1}(T,t,S_t, k^*_t, \Gamma_{t,t}) \right)
    \end{align*}
    because $\Gamma_{t,t}$ corresponds to the price of a call option in the uncorrelated case. Using the classical Itô formula, we obtain
    \begin{align*}
         &E_t \left(\Bac^{-1}(T,t,S_t, k^*_t, \Gamma_{t,T})  - \Bac^{-1}(T,t,S_t, k^*_t, \Gamma_{t,t})\right) \\
         = & E_t \left( \int_t^T \partial_{\sigma} \Bac^{-1}(T,t,S_t,k^*_t, \Gamma_{t,r}) H(T,r,S_r, k ^*_t, v_r) \Lambda_r \d r\right).
    \end{align*}
    Using the same argument as in \cite{alos2021malliavin} and \cite{alos2023implied} we can conclude that
    \[
    E_t \left( \int_t^T \partial_{\sigma} \Bac^{-1}(T,t,S_t,k^*_t, \Gamma_r) H(T,r,S_r, k ^*, v_r) \Lambda_r \d r\right) \to 0, \quad \text{ as } T \to t.
    \]
    This implies that 
    \[
    \lim_{T \to t} I_t(k_t^*) = \lim_{T \to t} I_t^{0}(k_t^*) = \sigma_t.
    \]
   As mentioned at the beginning of the proof, the extension to the general Lévy case is proved using the same approximation argument as with the uncorrelated case.
    \end{proof}

    Notice that Proposition \ref{prop: level arbitrary Lévy} indeed proves the first part of Theorem \ref{th: main theorem}.

\section{Short-Time behavior of the ATM Implied Volatility Skew} \label{sec: Short-time behavior of the ATM Implied Volatility skew}
The objective now is to give an expression for the derivative with respect to the strike of the implied volatility under the Bachelier model in the at-the-money scenario, that is, when the strike equals the price of the stock at time $t$ (i.e. $k^*_t = S_t$). The first technical result that we need in order to understand the behavior of the skew as $T \to t$ is the following.
\begin{prop} \label{p: pre-expansion}
    Assume $S$ follows the model \eqref{model of study} with $\sigma$ satisfying Hypotheses \ref{hyp 1}, \ref{hyp 2} and \ref{hyp 3}. Then,
    \[
    \partial_k I_t(k^*_t) = \frac{E_t\left(\int_t^T \partial_k F(T, r,S_r, k, v_r) \d r \right)_{|k = k_t^*}}{\partial_{\sigma}\Bac(T,t, S_t, k_t^*, I_t(k^*_t))}
    \]
    where
    \begin{align*}
        F(T, r, S_r, k, v_r) = & \frac{\rho}{2}H(T,r, S_r, k, v_r) \Lambda_r + \int_{\R} \Delta_{yy}\Bac(T,r,S_{r-},k,v_r) \nu (\d y).
    \end{align*}
\end{prop}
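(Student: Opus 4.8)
The plan is to merge the Hull--White decomposition established in Section \ref{sec: Hull-White formula} with the definition of the implied volatility and then differentiate in the strike at the money. Writing the decomposition compactly as
\[
V_t = E_t\big(\Bac(T,t,S_t,k,v_t)\big) + E_t\Big(\int_t^T F(T,r,S_r,k,v_r)\,\d r\Big),
\]
with $F$ as in the statement, and recalling that by definition $V_t=\Bac(T,t,S_t,k,I_t(k))$, I would equate the two expressions to obtain, for $k$ in a neighborhood of $k_t^*=S_t$,
\[
\Bac(T,t,S_t,k,I_t(k)) = E_t\big(\Bac(T,t,S_t,k,v_t)\big) + E_t\Big(\int_t^T F(T,r,S_r,k,v_r)\,\d r\Big).
\]

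The core of the argument is to differentiate this identity in $k$ and evaluate at $k=k_t^*$. On the left, the chain rule produces $\partial_k\Bac + \partial_\sigma\Bac\,\partial_k I_t(k)$, all Greeks evaluated at $(T,t,S_t,k,I_t(k))$. The decisive observation is that $\Bac$ depends on $x$ and $k$ only through $x-k$, so $\partial_k\Bac = -\partial_x\Bac = -\Phi(d(k,\sigma))$; at the money ($k=S_t$, i.e.\ $d=0$) this equals $-\Phi(0)=-\tfrac12$ \emph{independently of the volatility argument}. Hence the $\partial_k\Bac$ term on the left and the term $E_t\big(\partial_k\Bac(T,t,S_t,k,v_t)\big)=-\tfrac12$ arising from the first summand on the right cancel exactly at $k=k_t^*$. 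What remains is
\[
\partial_\sigma\Bac(T,t,S_t,k_t^*,I_t(k_t^*))\,\partial_k I_t(k_t^*) = E_t\Big(\int_t^T \partial_k F(T,r,S_r,k,v_r)\,\d r\Big)_{|k=k_t^*},
\]
and dividing by $\partial_\sigma\Bac=\phi(d)\sqrt{T-t}>0$, which is nonvanishing so that the division is legitimate, yields exactly the asserted formula.

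The main obstacle is justifying that $\partial_k$ may be exchanged with the conditional expectation $E_t$ and with the $\d r$- and $\nu(\d y)$-integrals defining $F$, and that $I_t(k)$ is indeed differentiable at $k_t^*$ (the latter following from the implicit function theorem applied to $V_t=\Bac(T,t,S_t,k,I_t(k))$, since $\partial_\sigma\Bac\neq 0$). For the interchange I would use that, because $\partial_k=-\partial_x$ on $\Bac$ and all of its Greeks, every summand of $\partial_k F$ can be reduced to spatial derivatives $\partial_x^n G$ of $G=\partial^2_{xx}\Bac$ after the same mean-value expansions and the same splitting of the $\nu$-integral into the regions $|y|\geq1$ and $|y|<1$ used in the proof of the Hull--White formula. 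On $\{|y|<1\}$ the shifts are bounded and each $\partial_x^n G$ is controlled, uniformly over those shifts, by the bound $C\big(\int_t^T\sigma_s^2\,\d s\big)^{-(n+1)/2}$ of Lemma \ref{lemma: well defined integral} and its subsequent remark, while on $\{|y|\geq1\}$ one uses $|\partial_x\Bac|\leq1$ together with $\int_{|y|\geq1}|y|\,\nu(\d y)<\infty$. Combined with $\int_{|y|<1}y^2\,\nu(\d y)<\infty$ and with Hypothesis \ref{hyp 1}, which keeps $\int_t^T\sigma_s^2\,\d s$ bounded away from $0$, these estimates furnish the dominating functions needed to differentiate under both integral signs and under $E_t$, thereby completing the proof.
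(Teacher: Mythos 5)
Your proposal is correct and follows essentially the same route as the paper: differentiate both the identity $V_t=\Bac(T,t,S_t,k,I_t(k))$ and the Hull--White decomposition in $k$, use $\partial_k\Bac=-\partial_x\Bac=-\Phi(d)$ so that the two $\partial_k\Bac$ terms equal $-\tfrac12$ at the money and cancel, and justify the remaining term via the $|y|\ge 1$ versus $|y|<1$ splitting together with Lemma \ref{lemma: well defined integral}. The only addition is your explicit invocation of the implicit function theorem for the differentiability of $I_t(k)$, which the paper leaves implicit.
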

\begin{proof}
    Since $V_t = \Bac(T,t, S_t, k, I_t(k))$, we can take partial derivatives with respect to $k$ in both sides of the equation to get
    \begin{equation} \label{numeric}
    \partial_k V_t(k) = \partial_k \Bac(T,t, S_t, k, I_t(k)) + \partial_{\sigma}\Bac(T,t, S_t, k, I_t(k)) \partial_k I_t(k).
    \end{equation}
    Furthermore, thanks to the Hull and White formula, we can write
    \[
    V_t(k) = E_t(\Bac(T,t,S_t, k, v_t) + E_t\left(\int_t^T F(T, r,S_r, k, v_r) \d r \right)
    \]
    where
    \begin{align*}
        F(T,r, S_r, k, v_r) = & \frac{\rho}{2}H(T,r, S_r, k, v_r) \Lambda_r
        +  \int_{\R}\Delta_{yy}\Bac(T,r,S_{r-},k,v_r) \nu (\d y).
    \end{align*}
    This readily implies that
    \[
    \partial_k V_t(k) = E_t(\partial_k \Bac(T,t, S_t, k, v_t)) + E_t\left( \int_t^T \partial_k F(T, r,S_r, k, v_r) \d r\right).
    \]
    We have to check that the two terms in the right hand side are well defined. Notice that we have the following relationship:
    \begin{align*}
        &\partial_k \Bac(T,t, S_t, k, v_t) \\
        = &-\Phi(d(k, v_t)) + (S_t - k)\Phi'(d(k,v_t))\partial_k d(k,v_t) + \Phi''(d(k,v_t))\partial_k d(k, v_t) v_t\sqrt{T-t} \\
        = & - \Phi(d(k,v_t)) - \Phi'(d(k,v_t)) d(k, v_t) - \Phi''(d(k,v_t)) \\
        = &- \Phi(d(k,v_t)) - \Phi'(d(k,v_t)) d(k, v_t) + d(k,v_t) \Phi'(d(k,v_t)) \\
        = & - \Phi(d(k,v_t)) \\
        = & -\partial_x \Bac(T,t, S_t, k, v_t).
    \end{align*}
    Since we know that $E_t(\partial_x \Bac(T,t, S_t, k, v_t))$ is well defined, we deduce that $E_t(\partial_k \Bac(T,t, S_t, k, v_t))$ is also well defined. This relationship also implies that
    \[
    \partial_k F(T,r,S_r, k, v_r) = - \partial_x F(T,r,S_r, k, v_r)
    \]
    so $E_t(\int_t^T \partial_k F(T, r, S_r, k, v_r) \d r)$ is well defined if and only if $E_t(\int_t^T \partial_x F(T, r, S_r, k, v_r) \d r)$ is well defined. Differentiating $F$ with respect to $x$ we obtain
    \begin{align*}
        \partial_x F(T, r,S_r, k, v_r)  = &\frac{\rho}{2} \partial_x H(T,r, S_r, k, v_r) \Lambda_r +\int_{\R}\Delta_{yy} \partial_x\Bac(T,r,S_{r-},k,v_r) \nu (\d y).
    \end{align*}
    Using Lemma \ref{lemma: well defined integral} and similar arguments as in the proof of the Hull-White formula we can show that
    \[
    E_t\left( \int_t^T \partial_xH(T,r,S_r, k, v_r) \Lambda_r \d r\right)
    \]
    is well defined and finite a.s. We shall now use similar arguments as in the proof of the Hull and White formula to conclude that the last term is well defined. For the term concerning $\nu$, we split the integral when $|y| \geq 1$ and when $|y|< 1$. If $|y|\geq 1$ we can apply the mean value theorem to get
    \[
    \partial_x \Bac(T,r,S_{r-}+ y, k, v_r) -  \partial_x \Bac(T,r,S_{r-}, k, v_r) = G(T,r,S_{r-} + \alpha y, k, v_r)y \quad \text{for some } \alpha \in [0,1].
    \]
    Hence, applying the tower property of the conditional expectation and Lemma \ref{lemma: well defined integral} to
    \[
    E\left(G(T,r,S_{r-} + \alpha y, k, v_r | \mathcal{G}_t \right) \quad \text{and} \quad E\left(G(T,r,S_{r-}, k, v_r | \mathcal{G}_t \right)
    \]
    we obtain the estimate
    \begin{align*}
        &E_t\left(\int_t^T \int_{|y|\geq 1} \Big[\partial_x \Bac(T,r,S_{r-}+ y, k, v_r)\right. \\
        &-  \partial_x \Bac(T,r,S_{r-}, k, v_r) - y G(T,r,S_{r-}, k, v_r) \nu \Big](\d y) \d r \Bigg) \\
        \leq & 2C (T-t)\left(\int_t^T \sigma_s^2 \d s \right)^{-1/2}E_t\left(\int_{|y|\geq 1} y \nu(\d y) \right).
    \end{align*}
    Using now the Hypothesis $\eqref{hyp 3}$ on $\sigma$ we conclude that there exists a constant $C' > 0$ such that
    \[
    2C (T-t)\left(\int_t^T \sigma_s^2 \d s \right)^{-1/2}E_t\left(\int_{|y|\geq 1} y \nu(\d y) \right) \leq 2C'(T-t)^{1/2}\int_{|y|\geq 1} y \nu(\d y)
    \]
    which is well defined because $\int_{|y| \geq 1} |y| \nu (\d y) < \infty$. For the second term, we use the mean value theorem twice and Lemma \ref{lemma: well defined integral} to get that
    \begin{align*}
           &E_t\left(\int_t^T \int_{|y|< 1} \Big[\partial_x \Bac(T,r,S_{r-}+ y, k, v_r)\right. \\
        &-  \partial_x \Bac(T,r,S_{r-}, k, v_r) - y G(T,r,S_{r-}, k, v_r) \nu \Big](\d y) \d r \Bigg) \\
        \leq & 2CE_t\left(\int_{|y|< 1} y^2 \nu(\d y) \right).
    \end{align*}
  Since all terms are well defined we find that the volatility skew satisfies
    \begin{align*}
        \partial_k I_t(k) = &\frac{\partial_k V_t(k) - \partial_k \Bac(T,t, S_t, k, I_t(k))}{\partial_{\sigma}\Bac(T,t, S_t, k, I_t(k))} \\
        = &\frac{E_t(\partial_k \Bac(T,t, S_t, k, v_t)) + E_t\left(\int_t^T \partial_k F(r,S_r, k, v_r) \d r \right) - \partial_k \Bac(T,t, S_t, k, I_t(k))}{\partial_{\sigma}\Bac(T,t, S_t, k, I_t(k))}.
    \end{align*}
    Finally, using the fact that
\[
\partial_k\Bac(T,t,S_t, k^{*}_t, v_t) = -\frac{1}{2}
\]
we have that, the ATM-IV skew satisfies
\[
\partial_k I_t(k^{*}_t) = \frac{E_t\left(\int_t^T \partial_k F(T, r,S_r, k, v_r) \d r \right)_{|k = k_t^*}}{\partial_{\sigma}\Bac(T,t, S_t, k_t^*, I_t(k^*_t))}
\]
as desired.
\end{proof}
Thanks to this proposition, we will derive the asymptotic behavior of the skew in the case where the driving Lévy process is a compound Poisson process and we will be able to prove Equations \eqref{eq: main theorem - fractional case} and \eqref{eq: main theorem - rough case} for this simplified scenario. In order to conclude the result for general pure-jump Lévy processes we will use an approximation argument similar as the one performed for the ATM-IV level.

\subsection{The compound Poisson case}
Since the anticipating Itô formula and the Hull-White formula were made for a general Lévy process, making extra assumptions on $L_t$ will simplify some formulas. Let's look at the case where $L_t$ is a compound Poisson process, that is, $\int_{\R} \nu(\d y)  = \nu (\R) = \lambda < \infty$. Moreover, the constants
\[
c_0 = \int_{\R} \nu(\d y) \quad \text{and} \quad c_1 = \int_{\R} y \nu (\d y)
\]
are finite. Notice that, in the compound Poisson process case, the terms concerning the Lévy measure can be separated. Indeed,
\[
 E_t\left( \int_t^T \int_{\R} \Delta_{yy}\Bac(T,r,S_{r-}, k, v_r) \nu (\d y) \d r \right)
\]
can be rearranged to get
\[
-c_1 E_t \left( \int_t^T \partial_x \Bac(T,r,S_r,k,v_r) \d r\right) + E_t\left(\int_t^T \int_{\R} \Delta_y \Bac(T,r,S_{r-},k,v_r) \nu(\d y) \d r \right)
\]
where $\Delta_y \Bac(T,r,S_{r-},k,v_r) =\Bac(T,r,S_{r-} + y,k,v_r) - \Bac(T,r,S_{r-},k,v_r)$. Hence, by Proposition \ref{p: pre-expansion} we can write $\partial_kI_t(k_t^*)$ as
\[
\partial_kI_t(k_t^*) = \frac{ E_t\left( \int_t^T \partial_k F(T,r,S_r, k_t^*, v_r) \d r\right)}{\partial_{\sigma} \Bac(T,t,S_t, k_t^*, v_t)}
\]
and therefore, the relationship
\[
\partial_{\sigma} \Bac(T,t,S_t, k_t^*, v_t) \partial_kI_t(k_t^*) = E_t\left( \int_t^T \partial_k F(T,r,S_r, k_t^*, v_r) \d r\right)
\]
holds. Since we are in the compound Poisson case, we can rewrite $E_t\left( \int_t^T \partial_k F(T,r,S_r, k_t^*, v_r) \d r\right)$ as 
\begin{align*}
   E_t\left( \int_t^T\partial_k F(T,r,S_r, k_t^*, v_r)\d r \right)= -&\frac{\rho}{2}E_t\left( \int_t^T \partial_x H(T,r,S_r, k, v_r) \Lambda_r \right)_{|k = k_t^*}\\
   - & E_t\left( \int_t^T \int_{\R} \Delta_y \partial_x\Bac(T,r,S_{r-}, k, v_r) \nu (\d y) \d r\right)_{|k = k_t^*} \\
   + &c_1E_t\left( \int_t^T G(T,r,S_r, k, v_r)y  \d r\right)_{|k = k_t^*} \\
   = & -T_1 - T_2 + T_3.
\end{align*}
This allows us to deduce the following proposition.
\begin{prop}
        Consider $S$ following the model \eqref{model of study} with $L$ a compound Poisson process. Then, the skew satisfies 
        \begin{align*}
            \partial_{\sigma}\Bac(T,t,S_t,k_t^*, v_t) \partial_kI_t(k_t^*)= -&\frac{\rho}{2}E_t(\partial_xH(T,t,S_t,k_t^*, v_t) J_t)+ c_1E_t(G(T,t,S_t, k_t^*, v_t)(T-t)) \\
            + & O(T-t)^{\min(2H,1)}.
        \end{align*}
    \end{prop}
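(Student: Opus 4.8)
The plan is to start from the identity established immediately above, namely
\[
\partial_{\sigma}\Bac(T,t,S_t,k_t^*,v_t)\,\partial_k I_t(k_t^*)=E_t\!\left(\int_t^T \partial_k F(T,r,S_r,k_t^*,v_r)\,\d r\right)=-T_1-T_2+T_3,
\]
with $T_1=\tfrac{\rho}{2}E_t(\int_t^T\partial_xH(T,r,S_r,k_t^*,v_r)\Lambda_r\,\d r)$, $T_2=E_t(\int_t^T\int_{\R}\Delta_y\partial_x\Bac(T,r,S_{r-},k_t^*,v_r)\,\nu(\d y)\,\d r)$ and $T_3=c_1E_t(\int_t^T G(T,r,S_r,k_t^*,v_r)\,\d r)$. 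The goal is to show that $-T_1$ and $T_3$ reproduce the two announced leading terms, while $T_2$ together with all the errors produced by freezing the Greeks at $r=t$ is $O((T-t)^{\min(2H,1)})$.

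First I would dispose of $T_2$. Since $\partial_x\Bac=\Phi(d)\in(0,1)$, one has $|\Delta_y\partial_x\Bac|\le1$, and because $L$ is compound Poisson its Lévy measure has finite mass $\nu(\R)<\infty$; hence $|T_2|\le\nu(\R)(T-t)=O(T-t)$, which is absorbed into $O((T-t)^{\min(2H,1)})$ since $\min(2H,1)\le1$.

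For $T_1$ and $T_3$ I would freeze the corresponding Greek at the initial time $r=t$. The key simplification is that $\Bac(T,s,x,k,v_s)$ depends on $s$ only through $(S_s,Y_s)$: writing $\widetilde{\Bac}(x,Y):=(x-k)\Phi((x-k)/\sqrt{Y})+\phi((x-k)/\sqrt{Y})\sqrt{Y}$, one has $\Bac(T,s,x,k,v_s)=\widetilde{\Bac}(x,Y_s)$, together with the heat relation $\partial_Y\widetilde{\Bac}=\tfrac12\partial^2_{xx}\widetilde{\Bac}$, which is inherited by every $x$-derivative of $\widetilde{\Bac}$. Thus $G(T,s,S_s,k,v_s)=\widetilde{G}(S_s,Y_s)$ with $\widetilde{G}:=\partial^2_{xx}\widetilde{\Bac}$, and likewise for $\partial_xH=\partial^3_x\widetilde{\Bac}$. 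Applying the anticipating Itô formula of Proposition \ref{prop: Anticipating Ito Formula} to $s\mapsto\widetilde{G}(S_s,Y_s)$ on $[t,r]$ and using $\d Y_s=-\sigma_s^2\,\d s$, the two $\sigma_s^2\,\d s$ terms cancel by the heat relation, and taking $E_t$ annihilates the $W$-, $B$- and compensated-jump martingale parts. What survives is a correlation term $\rho\int_t^rE_t[\partial_x\partial_Y\widetilde{G}\,(D^{-}Y)_s\sigma_s]\,\d s$ and a jump-compensator term $E_t\int_t^r\int_{\R}\Delta_{yy}\widetilde{G}\,\nu(\d y)\,\d s$; the same expansion applies to $\partial_xH$. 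Freezing then yields the two displayed leading terms $\tfrac{\rho}{2}E_t(\partial_xH(T,t,S_t,k_t^*,v_t)\int_t^T\Lambda_r\,\d r)$ and $c_1E_t(G(T,t,S_t,k_t^*,v_t)(T-t))$, i.e. the announced expression with $J_t:=\int_t^T\Lambda_r\,\d r$.

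The main obstacle, and the bulk of the work, is to show that these freezing errors are $O((T-t)^{\min(2H,1)})$. Here I would combine Lemma \ref{lemma: well defined integral}, which gives the blow-up rate of the Greeks, $|E_t(\partial^n_x G)|\lesssim(\int_t^T\sigma_s^2\,\d s)^{-(n+1)/2}\asymp(T-\cdot)^{-(n+1)/2}$, with Hypotheses \ref{hyp 1} and \ref{hyp 3}, which yield $\Lambda_r\lesssim(T-r)^{H+1/2}$ and $(D^{-}Y)_s\lesssim(T-s)^{H+1/2}$. Each correlation correction is then a double time-integral of a singular Greek against a Malliavin-derivative factor; balancing the near-maturity singularity against this decay and integrating $r$ over $[t,T]$ produces a contribution of order at most $(T-t)^{2H}$, while the jump-compensator corrections and $T_2$ produce contributions of order $(T-t)$, so that the total error is $O((T-t)^{\min(2H,1)})$. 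The delicate point is precisely this balance: the Greeks blow up as $r\to T$, and one must verify that the decay supplied by $\Lambda$ and by the Malliavin derivatives is always enough to keep every correction of strictly higher order than the leading terms $(T-t)^{H}$ and $(T-t)^{1/2}$. I would control the resulting near-maturity integrals exactly as in the proof of the Hull--White formula above and as in \cite{alos2007short,alos2023implied}.
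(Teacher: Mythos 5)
Your overall route is the paper's route: you start from the identity of Proposition \ref{p: pre-expansion}, split the right-hand side into $-T_1-T_2+T_3$ exactly as the paper does in the compound Poisson case, kill $T_2$ by $|\Delta_y\partial_x\Bac|\leq 2$ and $\nu(\R)<\infty$, and extract the leading terms of $T_1$ and $T_3$ by expanding the Greeks with the anticipating It\^o formula and bounding the corrections via Lemma \ref{lemma: well defined integral} together with Hypotheses \ref{hyp 1} and \ref{hyp 3}. The orders you announce for the corrections ($(T-t)^{2H}$ for the correlation terms, $(T-t)$ for the jump-compensator terms) are the ones the paper obtains.

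There is, however, one concrete flaw in your treatment of $T_1$. You propose to expand $\partial_x H(T,r,S_r,k,v_r)$ alone on $[t,r]$ and assert that ``taking $E_t$ annihilates the $W$-, $B$- and compensated-jump martingale parts.'' That is true for $T_3$, where the expanded Greek stands by itself under $E_t$, but it fails for $T_1$, where the expansion is subsequently multiplied by the anticipating weight $\Lambda_r$ (equivalently, integrated against $J_r=\int_r^T\Lambda_u\,\d u$). Since $\Lambda_r$ is $\mathcal{F}^W_T$-measurable, the duality relation gives $E_t[\delta(u)\,\Lambda_r]=E_t[\int u_s D_s\Lambda_r\,\d s]\neq 0$, so the Skorohod-integral part of the expansion contributes a nonzero covariance term. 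The paper avoids this by applying the anticipating It\^o formula directly to the product $\mathcal{H}(T,\cdot,S_\cdot,k,v_\cdot)\,J_\cdot$ with $\mathcal{H}=\partial_x H$, which produces the extra term
\[
A_2=\rho\,E_t\!\left(\int_t^T \partial_x\mathcal{H}(T,r,S_r,k,v_r)\,(D^{-}J)_r\,\sigma_r\,\d r\right),
\]
whose control requires the second-Malliavin-derivative bound of Hypothesis \ref{hyp 3} (through $E_t(\int_r^T D_r\Lambda_u\,\d u)\lesssim (T-r)^{2H+1}$) and yields $O((T-t)^{2H})$. The term you do mention, $\rho\int\partial_x\partial_Y\widetilde{G}\,(D^{-}Y)_s\sigma_s\,\d s$, is the drift correction internal to the It\^o formula and is distinct from this covariance term. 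The omission does not change the final order of the error, but as written your argument for $T_1$ drops a term of exactly the critical order $(T-t)^{2H}$ and the claim used to drop it is false; replacing the expansion of $\mathcal{H}$ alone by the expansion of the product $\mathcal{H}J$ repairs it.
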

    \begin{proof}
    The proof of this result follows the same ideas as in \cite{alos2007short}. The strategy consists in analyzing the three terms $T_1$, $T_2$ and $T_3$.
    
        \textit{Step 1:} The idea to analyze $T_1$ is applying the anticipating Itô formula (Proposition \ref{prop: Anticipating Ito Formula}) to $\frac{\rho}{2}\partial_k H(T,t, S_t, k, v_t) \int_t^T \Lambda_s \d s$. In order to make the notation easier, we define $J_t = \int_t^T \Lambda_r \d r$. We will apply the Itô formula to the process $\mathcal{H}(t,S_t,k, v_t)J_t$ where $\H = \partial_x H$ and then we will deduce the result multiplying everything by $\rho/2$ and changing the sign. We have
    \begin{align*}
        &E_t\left(  \int_t^T \H(T,r,S_r, k, v_r) \Lambda_r \d r\right)\\
        = &E_t\left(\H(T,t,S_t, k, v_t) J_t \right)  \\
        +  &\frac{\rho}{2}E_t\left(\int_t^T \partial^3_{xxx}\H(T,r,S_r, k, v_r)J_r \Lambda_r \d r \right) \\
        + &\rho E_t\left(\int_t^T \partial_x \H(T,r,S_r, k, v_r) (D^{-}J)_r \sigma_r \d r \right) \\
        + &E_t\left(\int_t^T \int_{\R}\Delta_{y}\H(T,r,S_{r-},k,v_r) J_r \nu(\d y) \d r \right) \\
        - & c_1E_t\left( \int_t^T \partial_x \H(T,r,S_{r}, k, v_r) J_r \d r\right) \\
        = & E_t\left(\H(T,t,S_t, k, v_t) J_t \right) + \sum_{j=1}^4 A_j.
    \end{align*}
   Observe that
   \begin{align*}
   \int_t^T |\Lambda_r| \d r \leq &\int_t^T |\sigma_r|\int_r^T 2|D_r\sigma_u| \d u \d r\\
   \leq &C (T-t) \int_r^T |D_r\sigma_u| \d r \d u \\ \leq &C (T-t)\left( \int_t^T \int_r^T (D_r\sigma_u)^2 \d u \d r \right)^{1/2}.
   \end{align*}
   Moreover, using Hölder's inequality, we have that
   \begin{align*}
       |J_r| \leq &2\left( \int_r^T \sigma_{\theta}^2 \d \theta\right)\left(\int_r^T \int_u^T (D_u \sigma_{\theta})^2 \d \theta \d u \right)^{1/2} \\
       \leq & 2C(T-r)\left(\int_r^T \int_u^T (D_u \sigma_{\theta})^2 \d \theta \d u \right)^{1/2}.
   \end{align*}
   Now, using that $|E(\partial_{xxx}^3 \mathcal{H}(r,S_r, k, v_r) | \mathcal{G}_t)| \leq C(T-t)^{-3}$ we have that
   \begin{align*}
   |E_t\left(\int_t^T \partial^3_{xxx}\H(T,r,S_r, k, v_r)J_r \Lambda_r \d r \right)| \leq &\frac{C}{T-t}E_t\left(\int_t^T \int_r^T (D_r\sigma_{\theta})^2 \d \theta \d r \right) \\
   = &O(T-t)^{2H},
   \end{align*}
    so $A_1 = O(T-t)^{2H}$. Let's deal now with $A_2$. We have $(D^{-}J)_r = \int_r^{T} D_r\Lambda_u \d u$. This implies that
    \begin{align*}
        \int_r^T D_r\Lambda_u \d u = &\int_r^T D_r\left(\sigma_u \int_u^T D_u \sigma^2_{\theta} \d \theta\right) \d u\\
        = &\int_r^T \left[D_r\sigma_u \int_u^T D_u \sigma_{\theta}^2 \d \theta + \sigma_u \int_u^T D_r D_u \sigma^2_{\theta} \d \theta \right]\d u.
    \end{align*}
    Now,
    \[
    D_r D_u \sigma^2_{\theta} = 2D_r(\sigma_u D_u \sigma_{\theta}) = 2(D_r \sigma_u D_u \sigma_{\theta} + \sigma_u D_r D_u \sigma_{\theta}).
    \]
    Hence,
    \begin{align*}
       E_t\left( \sigma_u \int_u^T D_r D_u \sigma_{\theta}^2 \d \theta\right) \leq &C\int_u^T (\theta-r)^{H-1/2}(\theta-u)^{H-1/2} \d \theta \\
        \leq & (T-r)^{2H}
    \end{align*}
    and
    \begin{align*}
        E_t\left(D_r\sigma_u \int_u^T D_u \sigma_{\theta}^2 \d \theta \right) \leq &CE_t\left(D_r\sigma_u \int_u^T D_u \sigma_{\theta} \d \theta \right) \\
        \leq & C(u-r)^{H-1/2}\int_u^T (\theta-u)^{H-1/2} \d \theta \\
        = & C(u-r)^{H-1/2}(T-u)^{H+1/2}.
    \end{align*}
    This implies that
    \begin{align*}
        E_t\left( \int_r^T D_r\Lambda_u \d u\right) \leq& C\int_r^T (T-r)^{2H} + (u-r)^{H-1/2}(T-u)^{H+1/2} \d u \\
        = & C(T-r)^{2H+1} + \int_r^T (u-r)^{H-1/2}(T-u)^{H+1/2} \d u \\
        \leq & C (T-r)^{2H+1}.
    \end{align*}
    Finally, using again Lemma \ref{lemma: well defined integral} in $A_2$ we have
    \begin{align*}
        A_2 = &E_t\left( \int_t^T  E\left[\partial^3_{xxx}G(T,r,S_r, k, v_r) |\mathcal{G}_r \right] (D^{-}J)_r \sigma_r \d r\right) \\
        \leq & CE_t\left( \int_t^T \left(\int_t^T \sigma_r^2 \d r \right)^{-2} \left(\int_r^T D_r\Lambda_u \d u\right) \d r \right) \\
        \leq & (T-t)^{-2} \int_t^T C(T-r)^{2H+1} \d r \\
         = & O(T-t)^{2H}.
    \end{align*}
    For $A_3$, we use that $|E\left(\Delta_y \H(T,r,S_r, k, v_r) | \mathcal{G}_t \right)| \leq 2C(T-t)^{-3/2}$ and the fact that $\nu(\R) < \infty$ to conclude that
    \[
    A_3 \leq C(T-t)^{-3/2}E_t\left( \int_t^T J_r \d r \right) = O(T-t)^{H+1}.
    \]
    Since $H > 0$, this also implies that
    \[
    A_3 = O(T-t).
    \]
    Regarding $A_4$, we can use similar arguments to derive that
    \[
    A_4 \leq C(T-t)^{-2} E_t\left( \int_t^T J_r \d r \right) = O(T-t)^{H+1/2}.
    \]
    Now, since for $H > 1/2$ we have $H + 1/2 > 1$ we have
    \[
    A_4 = O(T-t)
    \]
    if $H > 1/2$. If $H < 1/2$ then $H + 1/2 > 2H$, so
    \[
   A_4 = O(T-t)^{2H}
    \]
    whenever $H < 1/2$. Therefore, $A_4 = O(T-t)^{\min(2H,1)}$. Plugging the asymptotics of the terms $A_j$, $j = 1,2,3,4$ together we conclude that 
    \[
    T_1 = E_t\left(\H(T,t,S_t, k, v_t) J_t \right) + O(T-t)^{\min(2H, 1)}.
    \]
    
    \textit{Step 2:} Using that $|\Delta_y \partial_x \Bac(T,r,S_{r-}, k, v_r)| \leq 2$ and $\nu(\R) < \infty$ we can easily see that $T_2 = O(T-t)$.
    
    \textit{Step 3:} Regarding $T_3$, we use the anticipating Itô formula to $G(T,r,S_r, k, v_r)$ to deduce that
    \begin{align*}
        G(T,r,S_r, k, v_r) = &G(T,t,S_t, k, v_t) + \frac{\rho}{2}E_t\left( \int_t^r \partial^3_{xxx} G(T,u, S_u, k, v_u) \Lambda_u \d u\right) \\
        - &c_1 E_t\left( \int_t^r \partial_x G(T,u,S_u, k, v_u) \d u\right) \\
        + & E_t\left( \int_t^r \int_{\R} \Delta_{y} G(T,u, S_{u-}, k, v_u) \nu(\d y) \d u\right).
    \end{align*}
    Now, using Lemma \ref{lemma: well defined integral} in all terms together with  the fact that $\int_{\R} y  \nu(\d y) < \infty$ and integrating from $t$ to $T$ with respect to $r$ we find that 
    \[
    T_3 = c_1 E_t(G(T,t,S_t,k_t^*, v_t)(T-t)) + O(T-t)^{\min(2H,1)}.
    \]
    Joining the estimates for $T_1$, $T_2$ and $T_3$ we obtain the desired result.
    \end{proof}
Notice that when we are at the ATM case (i.e. $k = k_t^* = S_t$) we can write the Greeks of the Bachelier function as follows:
\[
\partial_{\sigma} \Bac(T,t,S_t, k^*_t, I_t(k^*_t)) = \frac{\sqrt{2\pi}}{\sqrt{T-t}},
\]
\[
G(T,t,S_t,k^*_t, v_t) = \frac{1}{\sqrt{2\pi}} \frac{1}{v_t \sqrt{T-t}},
\]
and
\[
\partial_k H(T,t,S_t, k^*_t, v_t) = \frac{1}{\sqrt{2\pi}} \frac{1}{v_t^3 (T-t)^{3/2}}.
\]
These explicit expressions of the Greeks are very helpful in order to prove the asymptotic of the skew when the Lévy process is a compound Poisson process with drift.
\begin{prop} \label{prop: cpp}
    Let $S$ be a stock price with dynamics following model \eqref{model of study} with $L$ a compound Poisson process. Then,
    \begin{itemize}
        \item If $H \geq 1/2$, then
        \[
        \lim_{T \to t} \partial_kI_t(k^*_t) = \frac{c_1}{\sigma_t} + \lim_{T \to t} \frac{\rho}{\sigma_t (T-t)^2}\int_t^T \int_s^T E_t[D_s\sigma_u] \d u \d s.
        \]
        \item If $H < 1/2$, then
        \[
        \lim_{T \to t} (T-t)^{1/2-H}\partial_k I_t( k^*_t) = \lim_{T \to t} \frac{\rho}{\sigma_t (T-t)^{3/2+H}} \int_t^T \int_s^T E_t[ D_s \sigma_u] \d u \d s.
        \]
    \end{itemize}
\end{prop}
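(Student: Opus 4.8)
The plan is to divide the expansion from the previous proposition,
\[
\partial_\sigma\Bac(T,t,S_t,k_t^*,v_t)\,\partial_kI_t(k_t^*) = -\frac{\rho}{2}E_t\big(\partial_xH(T,t,S_t,k_t^*,v_t)J_t\big) + c_1 E_t\big(G(T,t,S_t,k_t^*,v_t)(T-t)\big) + O\big((T-t)^{\min(2H,1)}\big),
\]
by the ATM vega $\partial_\sigma\Bac(T,t,S_t,k_t^*,v_t)=(T-t)^{1/2}/\sqrt{2\pi}$ (which at the money is deterministic and independent of the volatility argument), and then substitute the closed-form ATM Greeks $G=\frac{1}{\sqrt{2\pi}}\frac{1}{v_t(T-t)^{1/2}}$ and $\partial_xH=-\partial_kH=-\frac{1}{\sqrt{2\pi}}\frac{1}{v_t^3(T-t)^{3/2}}$, recorded just above the statement. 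After the cancellations this reduces the identity to
\[
\partial_kI_t(k_t^*) = c_1E_t\!\left(\frac{1}{v_t}\right) + \frac{\rho}{2(T-t)^2}E_t\!\left(\frac{J_t}{v_t^3}\right) + O\big((T-t)^{\min(2H,1)-1/2}\big),
\]
and the task becomes the identification of the limit of each of the first two terms.

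The level term is immediate: since Hypothesis \ref{hyp 1} gives $\alpha\le v_t\le\beta$ and $v_t\to\sigma_t$ a.s. by continuity of $\sigma$, dominated convergence yields $c_1E_t(1/v_t)\to c_1/\sigma_t$. The remainder $O((T-t)^{\min(2H,1)-1/2})$ tends to zero for $H\ge1/2$, where the exponent equals $1/2$. Thus for $H\ge1/2$ it only remains to treat the Vanna term.

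For the Vanna term I would expand $J_t=\int_t^T\sigma_r\int_r^T 2\sigma_\theta D_r\sigma_\theta\,\d\theta\,\d r$ and compare against the claimed limit, writing the difference as
\[
\frac{\rho}{2(T-t)^2}E_t\!\left(\frac{J_t}{v_t^3}\right) - \frac{\rho}{\sigma_t(T-t)^2}\int_t^T\!\int_r^T E_t[D_r\sigma_\theta]\,\d\theta\,\d r = \frac{\rho}{(T-t)^2}E_t\!\left(\int_t^T\!\int_r^T\Big[\frac{\sigma_r\sigma_\theta}{v_t^3}-\frac{1}{\sigma_t}\Big]D_r\sigma_\theta\,\d\theta\,\d r\right),
\]
using that $\sigma_t$ is $\mathcal F_t$-measurable. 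I would bound the bracket by $\eta(T,t):=\sup_{r,\theta\in[t,T]}|\sigma_r\sigma_\theta v_t^{-3}-\sigma_t^{-1}|$, which is bounded by Hypothesis \ref{hyp 1} and tends to $0$ a.s. as $T\to t$ because $\sigma_r,\sigma_\theta,v_t\to\sigma_t$; a conditional Cauchy--Schwarz inequality then factors the estimate as $E_t(\eta^2)^{1/2}\,E_t\big((\int_t^T\int_r^T|D_r\sigma_\theta|\,\d\theta\,\d r)^2\big)^{1/2}$, where the first factor vanishes by dominated convergence and the second is $O((T-t)^{H+3/2})$ by Hypothesis \ref{hyp 3} (via Minkowski's integral inequality). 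The difference is therefore $E_t(\eta^2)^{1/2}\cdot O((T-t)^{H-1/2})$, which is $o(1)$ for $H\ge1/2$ and proves the first bullet.

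For $H<1/2$ I would multiply the reduced identity by $(T-t)^{1/2-H}$ before taking limits. This annihilates the level term, since $(T-t)^{1/2-H}c_1E_t(1/v_t)\to0$, and the remainder, which becomes $O((T-t)^{\min(2H,1)-H})=O((T-t)^{H})\to0$; the Vanna difference above, now carrying the factor $(T-t)^{1/2-H}$, becomes $E_t(\eta^2)^{1/2}\cdot O(1)\to0$, leaving exactly the rescaled Vanna limit and hence the second bullet. The main obstacle is precisely this Vanna estimate: one must simultaneously use the uniform closeness $\sigma_r,\sigma_\theta,v_t\to\sigma_t$ furnished by Hypotheses \ref{hyp 1}--\ref{hyp 2} to replace the stochastic volatility factors by the single constant $\sigma_t$, and the Malliavin regularity of Hypothesis \ref{hyp 3} to keep the double Malliavin integral of order $(T-t)^{H+3/2}$, so that the $(T-t)^{-2}$ normalization yields a finite limit in the regular case and, after the $(T-t)^{1/2-H}$ rescaling, the correct nontrivial limit in the rough case.
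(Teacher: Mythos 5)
Your proposal is correct and follows essentially the same route as the paper: divide the expansion of the preceding proposition by the ATM vega, substitute the closed-form ATM Greeks to obtain $\partial_k I_t(k_t^*) = c_1 E_t(1/v_t) + \frac{\rho}{2(T-t)^2}E_t(J_t/v_t^3) + O((T-t)^{\min(2H,1)-1/2})$, and pass to the limit term by term (with the $(T-t)^{1/2-H}$ rescaling when $H<1/2$). Your quantitative treatment of the Vanna term, via the supremum $\eta$, conditional Cauchy--Schwarz and Hypothesis \ref{hyp 3}, simply makes explicit the dominated-convergence step that the paper invokes without detail.
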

\begin{proof}
Using the previous computations of the Greeks, we can write the ATM implied volatility skew as
\begin{align*}
    \partial_kI_t(k^*_t) = &\frac{\rho}{2}E_t\left[ \frac{1}{v_t^3 (T-t)^2}\int_t^T \Lambda_s \d s\right] + c_1 E_t\left[\frac{1}{v_t}\right] + O(T-t)^{\min(2H,1)-1/2} \\
= & B_1 + B_2 + O(T-t)^{\min(2H,1)-1/2}.
\end{align*}
Let's first consider the case where $H \geq 1/2$ so that $\min(2H, 1) = 1$. On the one hand, since $\lim_{T \to t} v_t = \sigma_t$ and $\sigma_t$ is bounded from below uniformly in $t$ we have that, due to the dominated convergence theorem,
\[
\lim_{T \to t} c_1E_t\left[\frac{1}{v_t}\right] = \frac{c_1}{\sigma_t}.
\]
On the other hand
\[
B_1 = \frac{\rho}{ (T-t)^2} \E \left[ v_t^{-3}\int_t^T \sigma_s \int_s^T \sigma_u D_s\sigma_u \d u \d s\right].
\]
Then, using the dominated convergence theorem again, we find that
\[
\lim_{T \to t} B_1 = \lim_{T \to t} \frac{\rho}{\sigma_t (T-t)^2}\int_t^T \int_s^T E_t[D_s\sigma_u] \d u \d s.
\]
Hence,
\[
\lim_{T \to t} \partial_kI_t(k^*_t) = \frac{c_1}{\sigma_t} + \lim_{T \to t} \frac{\rho}{\sigma_t (T-t)^2} \int_t^T \int_s^T E_t[D_s\sigma_u] \d u \d s,
\]
where this last limit can be computed explicitly depending on the model. 

Let's consider now the case where $H < 1/2$. In this case, $O(T-t)^{\min(2H, 1)} = O(T-t)^{2H}$ we now aim to proceed in a similar way, but we have to take into account that $2H -1/2$ is not positive for all $H$. Indeed, if $H < 1/4$, then $2H < 1/2$. However, since $H > 0$, we can use the fact that $(1/2 - H) + (2H-1/2) > 0$ for all $H > 0$ to conclude that
\[
\lim_{T \to t} (T-t)^{1/2-H}\partial_k I_t( k^*_t) = \lim_{T \to t} \frac{\rho}{\sigma_t (T-t)^{3/2+H}} \int_t^T \int_s^T E_t[ D_s \sigma_u] \d u \d s.
\]
\end{proof}
Again, notice that Proposition \ref{prop: cpp} proves 
 Equations \eqref{eq: main theorem - fractional case} and \eqref{eq: main theorem - rough case} in Theorem \ref{th: main theorem} for the compound Poisson case.

\subsection{Extension to more general Lévy processes}
Let $S$ follow the dynamics given by Equation \eqref{model of study} and let $S^{\eps}$ be the approximator of $S$ introduced in Section \ref{sec: Short-time behavior of the ATM Implied Volatility level}. Then, the dynamics of $S^{\eps}$ follow Equation $\eqref{model of study}$ as well with Lévy process $L^{\eps}$. Recall that $L^{\eps}$ is a compound Poisson process  with drift and, moreover,  $S_t^{\eps} \to S_t$ in $L^2(\Omega)$ and a.s. uniformly in $t$ in compact sets. We denote by $I_t(k^*_t)$ the ATM implied volatility for $S$ and $I^{\eps}_t(k^{*,\eps}_t)$ the ATM implied volatility for $S^{\eps}$. Notice that Proposition \ref{prop: cpp} applies for $S^{\eps}$ and, in particular, we know that 
\[
\lim_{T \to t} \partial_k I^{\eps}_t(k^{*,\eps}_t) = \frac{c_1^{\eps}}{\sigma_t} + \lim_{T \to t} \frac{\rho}{\sigma_t (T-t)^2} \int_t^T \int_s^T E_t[D_s\sigma_u] \d u \d s
\]
if $H \geq 1/2$ and
\[
\lim_{T \to t} (T-t)^{1/2-H} \partial_k I^{\eps}_t(k^{*,\eps}_t) = \lim_{T \to t} \frac{\rho}{\sigma_t (T-t)^{3/2-H}} \int_t^T \int_s^T E_t[D_s\sigma_u] \d u \d s
\]
if $H < 1/2$, where $c_1^{\eps} = \int_{|y| > \eps} y \nu(\d y)$. This section then consists in proving a lemma that allow us to deduce a similar result for the infinite variation case. We have to show that, concerning the convergence of the implied volatilities, we can switch limits. The first step in order to be able to justify the exchange of limits is that one of the limits converges. The following result ensures us this property.

\begin{lema} \label{approximate skew}
    Let $I_t^{\eps}(k^{*,\eps}_t)$ be the ATM implied volatility associated to the model given by $S^{\eps}$ and $I_t(k^*_t)$ the ATM implied volatility associated to the model given by $S$. Then,
    \[
    \lim_{\eps \to 0} \partial_k I_t^{\eps}(k^{*,\eps}_t) = \partial_k I_t(k_t^*)
    \]
    uniformly on $T-t$.
\end{lema}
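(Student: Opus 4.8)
The plan is to mirror the structure of the level-case argument in Lemma \ref{lemma: uniform convergence uncorrelated}, exploiting that at the money the normalizing Greek is the same for both models. By Proposition \ref{p: pre-expansion} the skew is a quotient whose denominator is $\partial_\sigma\Bac(T,t,S_t,k_t^*,I_t(k_t^*))$; since the at-the-money moneyness $d$ vanishes, this denominator equals $\phi(0)\sqrt{T-t}=\sqrt{T-t}/\sqrt{2\pi}$ irrespective of the implied volatility, so it coincides for $S$ and $S^\eps$. Using $V_t(k)=E_t[(S_T-k)_+]$, hence $\partial_k V_t(k)=-P_t(S_T>k)$, together with $\partial_k\Bac(T,t,S_t,k_t^*,\cdot)=-\tfrac12$ at the money, one reduces the claim to
\[
\partial_k I_t^\eps(k_t^{*,\eps})-\partial_k I_t(k_t^*)=\frac{\sqrt{2\pi}}{\sqrt{T-t}}\bigl(\partial_k V_t^\eps(k_t^{*,\eps})-\partial_k V_t(k_t^*)\bigr)=\frac{\sqrt{2\pi}}{\sqrt{T-t}}\bigl(P_t(X>0)-P_t(X^\eps>0)\bigr),
\]
where $X=S_T-S_t$ and $X^\eps=S_T^\eps-S_t^\eps$. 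First I would insert intermediate terms exactly as in the level case to peel off the effect of the differing strike and spot: since $k_t^{*,\eps}-k_t^*=S_t^\eps-S_t=\tfrac{3\eps}{2}+(L_t^\eps-L_t)\to0$ uniformly in $t$, a mean value theorem applied to the relevant derivatives of $\Bac^{-1}$ shows these pieces vanish uniformly in $T-t$, precisely as the strike term and $A_1$ did before.

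The genuinely new object is the tail-probability difference, which I would treat using the decomposition $X=X^\eps+R^\eps$, where $R^\eps=\int_t^T\int_{|x|\le\eps}x\,\tilde N(\d s,\d x)$ is, conditionally on $\mathcal F_t$, independent of $X^\eps$, centered, and of variance $\|R^\eps\|_{L^2}^2=(T-t)\int_{|x|\le\eps}x^2\nu(\d x)$ by the isometry of the compensated Poisson integral. Writing $F(x)=P_t(X^\eps>-x)$ and using independence, $P_t(X>0)-P_t(X^\eps>0)=E_t[F(R^\eps)-F(0)]$, in which the first-order term is killed by $E_t[R^\eps]=0$. Since $X^\eps$ carries the nondegenerate Gaussian component $\sqrt{1-\rho^2}\int_t^T\sigma\,\d B$ whose conditional variance is at least $(1-\rho^2)\alpha^2(T-t)$ by Hypothesis \ref{hyp 1}, the density of $X^\eps$ and its derivative are bounded by $C/\sqrt{T-t}$ and $C/(T-t)$, so $\lVert F''\rVert_\infty\le C/(T-t)$ and a second-order Taylor expansion gives
\[
|P_t(X>0)-P_t(X^\eps>0)|\le \tfrac12\lVert F''\rVert_\infty\,\lVert R^\eps\rVert_{L^2}^2\le C\int_{|x|\le\eps}x^2\nu(\d x).
\]

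The main obstacle is precisely the uniformity in $T-t$ near maturity. The estimate above bounds the tail-probability difference by $C\int_{|x|\le\eps}x^2\nu(\d x)$, which carries no compensating power of $T-t$, so after division by $\sqrt{T-t}$ it does not by itself stay small as $T\to t$; this is the analytic shadow of the difficulty flagged in the introduction for the infinite-variation regime, where the leading term is an integral of a Greek against $\nu$. To push the argument through one must recover an extra factor $\sqrt{T-t}$ of smallness — so as to obtain $|\partial_k V_t^\eps-\partial_k V_t|\le\sqrt{T-t}\,g(\eps)$ with $g(\eps)=o(1)$, in analogy with the gain $|V_t^\eps-V_t|\le\sqrt{(T-t)\int_{|x|<\eps}x^2\nu(\d x)}$ that made the level case uniform. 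I would attempt this by retaining the martingale structure of the small-jump part and integrating by parts (in the Malliavin sense) against the nondegenerate $B$-component, rather than bounding crudely by the density, and by invoking Lemma \ref{lemma: well defined integral} and Hypothesis \ref{hyp 3} to control the resulting Greeks. Once such a bound is in hand, the factor $1/\sqrt{T-t}$ is absorbed and the convergence is uniform in $T-t$, which is exactly the hypothesis required for the subsequent Moore--Osgood exchange of limits.
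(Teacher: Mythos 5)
Your reduction of the skew difference to
\[
\frac{\sqrt{2\pi}}{\sqrt{T-t}}\bigl(P_t(X>0)-P_t(X^\eps>0)\bigr)
\]
plus spot/strike adjustment terms is legitimate, and your treatment of those adjustment terms matches the paper's. But the core of the lemma is precisely the uniformity in $T-t$, and there your argument stops short: the second-order Taylor bound you derive, $|P_t(X>0)-P_t(X^\eps>0)|\le C\int_{|x|\le\eps}x^2\nu(\d x)$, carries no power of $T-t$, so after division by $\sqrt{T-t}$ it diverges as $T\to t$ for each fixed $\eps$. You acknowledge this and propose to recover a factor $\sqrt{T-t}$ by a Malliavin integration by parts against the Brownian component, but you do not carry that step out, so the proof as written does not establish the uniform convergence needed for the Moore--Osgood exchange. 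This is a genuine gap, not a presentational one: the digital-price difference between the two models is exactly the kind of quantity the introduction warns is hard to control in the infinite-variation regime.

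The paper takes a different route that sidesteps the digital comparison altogether. It compares the two implied volatilities through intermediate evaluations of $\Bac^{-1}$, peeling off the strike, spot, and price discrepancies one at a time by the mean value theorem. The strike and spot terms are controlled by the deterministic bound $|S_t^\eps-S_t|\le 2\eps$ times second derivatives of $\Bac^{-1}$ whose apparent $1/\sqrt{T-t}$ singularity is cancelled by the $\sqrt{T-t}$ inside the moneyness $d$, giving bounds uniform in $T-t$. The price term is multiplied by $\partial^2_{\sigma k}\Bac^{-1}(T,t,S_t,S_t,\xi)$, and the key observation is that this cross-derivative vanishes identically at the money (because $\partial_\sigma d=-d/\sigma=0$ when $x=k$), so the difference $V_t-V_t^\eps$ contributes nothing and never needs to be divided by $\sqrt{T-t}$. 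If you want to salvage your approach, that identity is the ingredient to import: rather than estimating $P_t(X>0)-P_t(X^\eps>0)$ directly, route the price dependence through $\Bac^{-1}$ so that its ATM coefficient is exactly zero.
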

\begin{proof}
   We will do a similar strategy as with Lemma \ref{lemma: uniform convergence uncorrelated}. First, recall that
   \[
   I_t^{\eps}(k^{*,\eps}_t) = \Bac^{-1}(T,t,S_t^{\eps}, S_t^{\eps}, V_t^{\eps}), \quad V_t^{\eps} = E_t[(S_T^{\eps}- S_t^{\eps})_+]
   \]
   and
   \[
   I_t(k_t^*) = \Bac^{-1}(T,t,S_t,S_t, V_t), \quad V_t = E_t[(S_T - S_t)_+].
   \]
   Notice also that
   \[
   I_t^{\eps}(k_t^{*, \eps}) =   I_t^{\eps}(k_t^{*, \eps}) - \Bac^{-1}(T,t,S_t^{\eps},S_t, V_t^{\eps}) + \Bac^{-1}(T,t,S_t^{\eps},S_t, V_t^{\eps}),
   \]
   so
   \[
   \partial_k  I_t^{\eps}(k_t^{*, \eps}) = \partial_k I_t^{\eps}(k_t^{*, \eps}) - \partial_k\Bac^{-1}(T,t,S_t^{\eps},S_t, V_t^{\eps}) + \partial_k \Bac^{-1}(T,t,S_t^{\eps},S_t, V_t^{\eps}).
   \]
    Let's first analyze the difference
    \[
    \partial_k I_t^{\eps}(k_t^{*, \eps}) - \partial_k\Bac^{-1}(T,t,S_t^{\eps},S_t, V_t^{\eps}).
    \]
    Using the mean value theorem, we see that there exists a middle point $\xi$ between $S_t^{\eps}$ and $S_t$ (and due to the construction of $S_t^{\eps}$ we know that $S_t \leq \xi \leq S_t^{\eps}$) such that
    \[
     \partial_k I_t^{\eps}(k_t^{*, \eps}) - \partial_k\Bac^{-1}(T,t,S_t^{\eps},S_t, V_t^{\eps}) = \partial^2_{kk}\Bac^{-1}(T,t,S_t^{\eps},\xi, V_t^{\eps})(S_t^{\eps}-S_t).
    \]
    As we have seen before,
    \[
    \partial_k \Bac^{-1}(T,t,S_t^{\eps}, \xi, V_t^{\eps}) = \frac{-1}{\Phi(d(\xi, \Bac^{-1}(T,t,S_t^{\eps},\xi, V_t^{\eps}))}
    \]
    so
    \begin{align*}
        &\partial_{kk}^2 \Bac^{-1}(T,t,S_t^{\eps}, \xi, V_t^{\eps}) \\
        = & \frac{1}{\Phi(d(\xi,\Bac^{-1}(T,t,S^{\eps}_t,\xi,V_t^{\eps})))^2}\Phi'(d(\xi, \Bac^{-1}(T,t,S_t^{\eps}, \xi, V_t^{\eps}))\frac{\d}{\d k}d(k,\Bac^{-1}(T,t,S_t^{\eps},\xi, V_t^{\eps}))
    \end{align*}
    where $\frac{\d}{\d k}$ denotes the total derivative with respect to $k$. Notice that
    \[
    \frac{\d}{\d k}d(k,\sigma(k)) = \partial_k d(k, \sigma(k)) + \partial_{\sigma}d(k,\sigma(k)) \partial_k \sigma(k).
    \]
    Moreover,
    \[
    \partial_kd(k,\sigma(k)) = \frac{-1}{\sigma(k) \sqrt{T-t}}, \quad \partial_{\sigma} d(k,\sigma(k)) = \frac{k-x}{\sigma(k)^2\sqrt{T-t}}
    \]
    so
    \[
    \frac{\d}{\d k} d(k,\sigma(k)) = \frac{1}{\sqrt{T-t}}\left( \frac{-1}{\sigma(k)} + \frac{(k-x)\partial_k \sigma(k)}{\sigma(k)^2}\right).
    \]
   Therefore, applying the previous computations to Bachelier context we have that
    \begin{align*}
        &\frac{\d}{\d k} d(k,\Bac^{-1}(T,t, S_t^{\eps}, \xi, V_t^{\eps}))\\
        = &\frac{1}{\Bac^{-1}(T,t,S_t^{\eps}, \xi, V_t^{\eps})\sqrt{T-t}}\left(-1 + \frac{\xi - S_t^{\eps}}{(\Bac^{-1}(T,t,S_t^{\eps},\xi, V_t^{\eps})) \Phi(d(k,\Bac^{-1}(T,t,S_t^{\eps},\xi, V_t^{\eps})))}\right).
    \end{align*}
    First, using a slight modification of Proposition \ref{prop: level arbitrary Lévy} for our context we deduce that $\Bac^{-1}(T,t,S_t^{\eps}, \xi, V_t^{\eps}) \to I_t(k_t^*)$ as $\eps \to 0$ uniformly on $T-t$. Moreover, by the choice of our approximators,
    \[
    |\xi- S_t^{\eps}| \leq |S_t - S_t^{\eps}| \leq 2\eps.
    \]
    Finally, using that
    \[
    1 \leq \frac{1}{\Phi(d(\xi,\Bac^{-1}(T,t,S_t^{\eps}, \xi, V_t^{\eps})))} \leq 2
    \]
    we have that
    \[
    \lim_{\eps \to 0}\left|\frac{\d}{\d k} d(k,\Bac^{-1}(T,t, S_t^{\eps}, \xi, V_t^{\eps})) + \frac{1}{I_t(k_t^*))\sqrt{T-t}}\right| = 0
    \]
    uniformly on $T-t$. We can also interpret this limit as
    \[
    \lim_{\eps \to 0} \sqrt{T-t} \frac{\d}{\d k} d(\xi,\Bac^{-1}(T,t,S_t^{\eps}, \xi, V_t^{\eps})) = \frac{-1}{I_t(k_t^*)} 
    \]
    uniformly on $T-t$. For the remaining terms appearing in $\partial_{kk}^2\Bac^{-1}(T,t,S_t^{\eps}, \xi, V_t^{\eps})$, notice that
    \[
    1 \leq \frac{1}{\Phi(d(\xi, \Bac^{-1}(T,t,S_t^{\eps}, \xi, V_t^{\eps})))^2} \leq 4
    \]
    and $\Phi'(d(\xi, \Bac^{-1}(T,t,S_t^{\eps}, \xi, V_t^{\eps})))$ is uniformly bounded. In order to conclude this first step of the approximation argument, we derive the following estimate
    \begin{align*}
         &|\partial^2_{k,k}\Bac^{-1}(T,t,S_t^{\eps},\xi, V_t^{\eps})|\cdot |S_t^{\eps}-S_t| \\
          \leq &C\left(\sqrt{T-t} \left|\frac{\d}{\d k} d(\xi, \Bac^{-1}(T,t,S_t^{\eps}, \xi, V_t^{\eps}))\right| \right)|S_t - S_t^{\eps}|
    \end{align*}
    which tends to zero uniformly on $T-t$ as $\eps \to 0$. Up to this point, we have that
    \[
    \lim_{\eps \to 0} \left[ \partial_k I_t^{\eps}(k_t^{*, \eps}) - \partial_k I_t(k_t^*) \right]= \lim_{\eps \to 0} \left[ \partial_k \Bac^{-1}(T,t,S_t^{\eps}, S_t, V_t^{\eps}) - \partial_k I_t(k_t^*)\right].
    \]
    Now,
    \begin{align*}
       &\partial_k \Bac^{-1}(T,t,S_t^{\eps}, S_t, V_t^{\eps}) - \partial_k I_t(k_t^*) \\
       = &  \partial_k \Bac^{-1}(T,t,S_t^{\eps}, S_t, V_t^{\eps}) - \partial_k \Bac^{-1}(T,t,S_t, S_t, V_t^{\eps}) + \partial_k \Bac^{-1}(T,t,S_t, S,t, V_t^{\eps}) -     \partial_k I_t(k_t^*) \\
       = & A_1 + A_2.
    \end{align*}
    Regarding $A_1$, we resort to the mean value theorem to say that there exists $\xi$ between $S_t$ and $S_t^{\eps}$ such that
    \[
    |\partial_k \Bac^{-1}(T,t,S_t^{\eps}, S_t, V_t^{\eps}) - \partial_k \Bac^{-1}(T,t,S_t, S_t, V_t^{\eps})| = |\partial^2_{xk}\Bac^{-1}(T,t,\xi,S_t, V_t^{\eps})| |S_t -S_t^{\eps}|.
    \]
    Using the same arguments as with $|\partial^2_{kk}\Bac^{-1}(T,t,S_t^{\eps}, \xi, V_t^{\eps})|$, we can see that $A_1 \to 0$ as $\eps \to 0$ uniformly on $T-t$. For $A_2$, we use the mean value theorem again to say that there exists a middle point $\xi$ between $V_t$ and $V_t^{\eps}$ such that
    \[
   | \partial_k \Bac^{-1}(T,t,S_t, S,t, V_t^{\eps}) -     \partial_k I_t(k_t^*) |= |\partial^2_{\sigma k}\Bac^{-1}(T,t,S_t,S_t, \xi)| |V_t - V_t^{\eps}|.
    \]
     Now, from the fact that 
     \[
     \partial^2_{\sigma k}\Bac^{-1}(T,t,x,x,\sigma) = 0
     \]
     we see that $A_2$ vanishes. This proves then that
     \[
     \lim_{\eps \to 0} \partial_k I^{\eps}_t(k_t^{*.\eps}) = I_t(k_t^*) 
     \]
     uniformly on $T-t$.
    \end{proof}
    This lemma allows us to switch limits in the case where $\lim_{T \to t} I_t^{\eps}(k_t^{*,\eps})$ is well defined for almost every $\eps$. However, we have proved that when $H < 1/2$ the skew may exhibit a blowup, and has to be compensated with $(T-t)^{1/2-H}$ in order to obtain a finite limit when $T \to t$. Nevertheless, as a consequence of the previous lemma we can deduce the following corollary which deals with the $H < 1/2$ case.
    \begin{corol} \label{corol H < 1/2}
        Let $S$ follow the model \eqref{model of study} with $\sigma$ satisfying Hypotheses \ref{hyp 1}, \ref{hyp 2} and \ref{hyp 3} with $H \in (0,1/2)$. Then,
        \[
        \lim_{\eps \to 0} (T-t)^{1/2-H} \partial_k I_t^{\eps}(k_t^{*,\eps}) = (T-t)^{1/2 - H} \partial_k I_t(k_t^*)
        \]
        uniformly on $T-t$.
    \end{corol}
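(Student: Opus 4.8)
The plan is to reduce the statement directly to Lemma \ref{approximate skew} by pulling out the deterministic prefactor. First I would write the difference under consideration as
\[
(T-t)^{1/2-H}\partial_k I_t^{\eps}(k_t^{*,\eps}) - (T-t)^{1/2-H}\partial_k I_t(k_t^*) = (T-t)^{1/2-H}\left(\partial_k I_t^{\eps}(k_t^{*,\eps}) - \partial_k I_t(k_t^*)\right),
\]
so that the quantity whose limit we want factors as a purely deterministic weight times the difference of skews that has already been controlled uniformly in $T-t$ in Lemma \ref{approximate skew}.

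The one place where the hypothesis $H<1/2$ is used is the observation that $1/2-H>0$, which guarantees that the weight $(T-t)^{1/2-H}$ does not blow up as $T\to t$; on the contrary it tends to zero, and in particular it is bounded above by a finite constant $C$ on the bounded range of values of $T-t$ under consideration. I would then estimate
\[
\sup_{T-t}\left|(T-t)^{1/2-H}\left(\partial_k I_t^{\eps}(k_t^{*,\eps}) - \partial_k I_t(k_t^*)\right)\right| \leq C\,\sup_{T-t}\left|\partial_k I_t^{\eps}(k_t^{*,\eps}) - \partial_k I_t(k_t^*)\right|.
\]
By Lemma \ref{approximate skew} the supremum on the right-hand side tends to $0$ as $\eps\to 0$, and since it is multiplied by the fixed finite constant $C$, the left-hand side tends to $0$ as well. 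This is precisely the claimed uniform convergence on $T-t$, which concludes the proof.

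I do not expect any genuine obstacle here: the result is essentially a corollary of Lemma \ref{approximate skew} obtained by multiplying a uniformly convergent family by a uniformly bounded deterministic weight. The only point worth stating explicitly is the boundedness of that weight, which is immediate from $1/2-H>0$; had $H$ exceeded $1/2$ the prefactor would have been singular at $T=t$ and the argument would break down, which is exactly why the statement is confined to the regime $H\in(0,1/2)$ where the skew itself is expected to blow up at rate $(T-t)^{H-1/2}$.
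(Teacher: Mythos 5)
Your argument is correct and is exactly the one the paper intends: the corollary is stated there as an immediate consequence of Lemma \ref{approximate skew}, obtained by multiplying the uniformly convergent difference of skews by the deterministic factor $(T-t)^{1/2-H}$, which is bounded on the relevant range of $T-t$ precisely because $1/2-H>0$. Nothing is missing.
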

    Now we have all the necessary ingredients in order to prove a result concerning the short-time behavior of the skew when $L$ belongs to a wide class of pure-jump Lévy processes.
\begin{prop} \label{prop: inf act inf var}
      Let $S$ be a stock price with dynamics following model \eqref{model of study} with $L$ a pure-jump Lévy martingale satisfying that there exists a constant $c_1 \in \R$ such that $\lim_{\eps \to 0} c_1^{\eps} = c_1$. Then,
    \begin{itemize}
        \item If $H \geq 1/2$, then
        \[
        \lim_{T \to t} \partial_kI_t(k^*_t) = \frac{c_1}{\sigma_t} + \lim_{T \to t} \frac{\rho}{\sigma_t (T-t)^2} \int_t^T \int_s^T E_t[D_s\sigma_u] \d u \d s.
        \]
        \item If $H < 1/2$, then
        \[
        \lim_{T \to t} (T-t)^{1/2-H}\partial_k I_t( k^*_t) = \lim_{T \to t} \frac{\rho}{\sigma_t (T-t)^{3/2+H}}  \int_t^T \int_s^T E_t[ D_s \sigma_u] \d u \d s.
        \]
    \end{itemize}
\end{prop}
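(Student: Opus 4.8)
The plan is to deduce this proposition from the compound-Poisson result (Proposition \ref{prop: cpp}) by an interchange-of-limits argument, exactly in the spirit of the proof of Proposition \ref{prop: general level uncorrelated} for the level. I would take the family of approximators $S^{\eps}$ constructed in Section \ref{sec: Short-time behavior of the ATM Implied Volatility level}: each $L^{\eps}$ is a compound Poisson martingale with drift, so Proposition \ref{prop: cpp} applies verbatim to $S^{\eps}$, giving the inner limit $\lim_{T\to t}$ with the constant $c_1^{\eps}$ in place of $c_1$. On the other side, Lemma \ref{approximate skew} (and Corollary \ref{corol H < 1/2} in the rough regime) provides the convergence $\partial_k I_t^{\eps}(k_t^{*,\eps}) \to \partial_k I_t(k_t^*)$ as $\eps \to 0$, uniformly on $T-t$. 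Having one iterated limit existing pointwise and the other converging uniformly is precisely the setting of the Moore--Osgood theorem, which then licenses the exchange of the two limits.

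Concretely, for $H \geq 1/2$ I would set $f(\eps, T) = \partial_k I_t^{\eps}(k_t^{*,\eps})$ and argue as follows. For each fixed $T$, Lemma \ref{approximate skew} gives $\lim_{\eps\to 0} f(\eps, T) = \partial_k I_t(k_t^*)$ uniformly in $T-t$; for each fixed $\eps$, Proposition \ref{prop: cpp} gives
\[
\lim_{T\to t} f(\eps, T) = \frac{c_1^{\eps}}{\sigma_t} + \lim_{T\to t} \frac{\rho}{\sigma_t (T-t)^2}\int_t^T\int_s^T E_t[D_s\sigma_u]\d u\d s.
\]
By Moore--Osgood the two iterated limits coincide, so $\lim_{T\to t}\partial_k I_t(k_t^*)$ exists and equals $\lim_{\eps\to 0}$ of the right-hand side above. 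The key observation is that the double-integral term involves only the common volatility process $\sigma$ and the conditional expectation $E_t$, hence it is identical for every approximator and carries no $\eps$-dependence; the only $\eps$-dependent quantity is $c_1^{\eps}$. Invoking the hypothesis $c_1^{\eps} \to c_1$ then yields the claimed formula.

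For $H < 1/2$ I would run the same argument on the rescaled quantity $(T-t)^{1/2-H}\partial_k I_t^{\eps}(k_t^{*,\eps})$, whose uniform convergence in $T-t$ is guaranteed by Corollary \ref{corol H < 1/2} and whose inner limit is supplied by Proposition \ref{prop: cpp}. Here the term $c_1^{\eps}(T-t)^{1/2-H}/\sigma_t$ vanishes in the limit $T\to t$ because $1/2-H>0$, so $c_1$ does not appear and the identity reduces to the stated $\rho$-integral expression. The main obstacle is verifying the precise hypotheses of Moore--Osgood, chiefly that the convergence in $\eps$ is genuinely uniform on the whole range of $T-t$ rather than merely pointwise; this is exactly the content of Lemma \ref{approximate skew} and Corollary \ref{corol H < 1/2}, so the analytic work has effectively been front-loaded into those results and what remains is the bookkeeping of assembling them.
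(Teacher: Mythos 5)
Your proposal is correct and follows essentially the same route as the paper: the authors also apply the Moore--Osgood theorem, using Proposition \ref{prop: cpp} for the inner limit in $T$, Lemma \ref{approximate skew} (resp.\ Corollary \ref{corol H < 1/2} when $H<1/2$) for the uniform convergence in $\eps$, and then observe that the only $\eps$-dependence sits in $c_1^{\eps}$. Your additional remark that the $\rho$-integral term is common to all approximators, and that the rescaled $c_1^{\eps}$ term vanishes when $H<1/2$, matches the paper's computation exactly.
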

\begin{proof}
    We first deal with $H > 1/2$. Since $L_t^{\eps}$ is a compound Poisson process and $\lim_{\eps \to 0} c_1^{\eps} = c_1 \in \R$ we know that $\partial_kI_t^{\eps}(k_t^{*,\eps})$  has a finite limit as $\eps \to 0$ uniformly on $T-t$. Moreover, when $H \geq 1/2$ we have that the limit as $T$ tends to $t$ of $\partial_k I_t^{\eps}(k_t^{*,\eps})$ exists for almost every $\eps$. Hence, when $H \geq 1/2$ we can directly apply the Moore-Osgood theorem in order to justify that
    \[
    \lim_{T \to t} \lim_{\eps \to 0}\partial_k I_t^{\eps}(k_t^{*,\eps}) = \lim_{\eps \to 0} \lim_{T \to t} \partial_k I_t^{\eps}(k_t^{*,\eps}).
    \]
    Now, combining this result with Proposition \ref{prop: cpp} we have that
    \begin{align*}
        \lim_{T \to t} I_t(k_t^*) = &\lim_{T \to t } \lim_{\eps \to 0} \partial_k I_t^{\eps}(k_t^{*,\eps}) \\
        = & \lim_{T \to t} \lim_{\eps \to 0} \left( \frac{c_1^{\eps}}{\sigma_t} + \frac{\rho}{\sigma_t (T-t)^2}\int_t^T \int_s^T E_t(D_s\sigma_u) \d u \d s\right) \\
        = & \lim_{\eps \to 0} \frac{c_1^{\eps}}{\sigma_t} + \lim_{T \to t} \frac{\rho}{\sigma_t(T-t)^2}\int_t^T \int_s^T E_t(D_s\sigma_u) \d u \d s \\
        = & \frac{c_1}{\sigma_t} + \lim_{T \to t} \frac{\rho}{\sigma_t(T-t)^2}\int_t^T \int_s^T E_t(D_s\sigma_u) \d u \d s.
    \end{align*}
    The case $H < 1/2$ follows the same lines as the case $H \geq 1/2$ but using Corollary \ref{corol H < 1/2} instead of Lemma \ref{approximate skew}.
\end{proof}
The proof of the main result of this paper, Theorem \ref{th: main theorem}, can now be done in one line.
\begin{proof}[Proof of Theorem \ref{th: main theorem}]
    On the one hand, Equation \eqref{eq: main theorem - level} is proved in proposition \ref{prop: level arbitrary Lévy}. On the other hand, Equations \eqref{eq: main theorem - fractional case} and \eqref{eq: main theorem - rough case} are a consequence of Proposition \ref{prop: inf act inf var}.
\end{proof}

\section{Examples and numerical analysis} \label{section: examples and numerical analysis}
The results from the previous sections have provided us with general formulas that we will develop in different particular scenarios. Moreover, we will perform computational experiments to test the veracity of these formulas for particular examples of stochastic volatilities and Lévy processes. Recall that the stock price $S$ is assumed to follow the equation
\[
S_t = S_0 + \int_0^t \sigma_r (\rho \d W_r + \sqrt{1-\rho^2} \d B_r) + L_t,
\]
where $L$ is a pure jump Lévy process with drift so that $L$ is a martingale and $\sigma$ satisfies Hypotheses \ref{hyp 1}, \ref{hyp 2} and \ref{hyp 3}. 

We will discuss 3 examples in this section.

\begin{enumerate}
    \item The first one is the generalized Bachelier-Bates model, that works under the assumption that $\sigma_t$ is a stochastic process and $L$ is a compound Poisson process. We will be looking at $\sigma$ following the Fractional Bergomi model, that is the natural generalization of the rough Bergomi model (see \cite{bayer2016pricing}) allowing $H \in (0,1)$ and not only $H < 1/2$. We will work with different Hurst indices $H$ in order to see reflected in the numerics all the factors involved in Theorem \ref{th: main theorem}.
    \item As a second example, we will let $\sigma$ follow again the Fractional Bergomi model and we will let $L$ be a CGMY Lévy process in order to allow the trajectories of $L$ to have infinite variation. In this example we will test some of the experiments we have performed to the Bachelier-Bates model for this scenario. The objective of this second example will be to detail the formulas in Theorem \ref{th: main theorem} for the case of a family of Lévy process with infinite activity and infinite variation paths. Moreover, we will test the veracity of the formulas with numerical experiments.
    \item  Finally, we will take a look at the Normal Inverse Gaussian process. This process does not satisfy the hypotheses of Theorem \ref{th: main theorem} regarding the skew because $\lim_{\eps \to 0} c_1^{\eps} = \infty$ (whenever the parameters are not chosen so that the law of the jumps of $L$ is symmetric). In such case, we will see that the failure in the applicability of Theorem \ref{th: main theorem} comes from the non-differentiability of the implied volatility surface in the ATM scenario.
\end{enumerate}  For the first two examples, the ATM-IV level will be computed as follows: since $k_t^* = S_t$ then
\[
E_t\left( (S_T - S_t)_+ \right) = \Bac(T,t,S_t, k_t^*, I_t(k_t^*)) = \frac{I_t(k_t^*)}{\sqrt{2\pi}} \sqrt{T-t}
\]
and therefore
\[
I_t(k_t^*) = \frac{E_t\left( (S_T - S_t)_+ \right) \sqrt{2\pi}}{\sqrt{T-t}}.
\]
In order to simulate the ATM-IV Skew, we will use the formula
\[
\partial_k I_t(k_t^*) = \frac{1/2 - E_t\left( \1_{\{S_T \geq S_t\}}\right)}{\sqrt{\frac{T-t}{2\pi}}}.
\]

 In order to perform the simulations for the last example, we won't rely on the ATM-IV level formula because we will explore what happens when we perturb the strike around $k_t^*$. In that case, we will use the algorithm in \cite{jackel2017implied} in order to simulate options OTM and ITM The code that we have utilized for this section can be found in the GitHub repository \url{https://github.com/oscarbures01/Bachelier-JD-SV-IV}.

\subsection{The generalized Bachelier-Bates model}
The generalized Bachelier-Bates model assumes that the stock price process $S$ follows the model \eqref{model of study} with $L$ a compound Poisson process with drift. For this case, we can write $L$ as the sum of a pure-jump process and a drift term. Hence, $S$ can be written as
\begin{equation} \label{model - poisson case}
S_t = S_0 - c_1 t + \int_0^t \sigma_r (\rho \d W_r + \sqrt{1-\rho^2} \d B_r) + Z_t,
\end{equation}
where $Z$ is a compound Poisson process with Lévy measure $\nu$ and $c_1 = \int_{\R} y \nu(\d y)$. This model is called the generalized Bachelier-Bates model because the dynamics are assumed to drive the price, not the log-price and we let $\sigma$ be a general stochastic process unlike the classical Bates model, where the volatility is assumed to follow the Heston model. Usually, the law of the jumps of $L$ is assumed to be a Gaussian distribution or a double exponential distribution (see \cite{tankov2003financial}). We will work with both distributions in order to see that the effect of the jumps in the level and the skew of the ATM implied volatility when the time to maturity is small does not depend on the law of the jumps as long as the parameters of the law are adjusted so that $c_1$ remains the same.
\subsubsection{Fractional Bergomi volatility with Gaussian/Laplace jumps}
We say that $S$ follows the stochastic volatility Fractional Bergomi dynamics if the volatility process $\sigma$ satisfies
\begin{equation} \label{fractional bergomi}
\sigma_t^2 = \sigma_0^2 \exp \left( \alpha W^H_t - \frac{1}{2} \alpha^2 t^{2H} \right)
\end{equation}
where $W^H$ is a fractional Brownian motion of Hurst index $H \in (0,1)$. For the particular case $H < 1/2$, the Model \eqref{fractional bergomi} coincides with the rough Bergomi model. Notice that, if we write $W^H_t = \int_0^t K_H(t,s) \d W_s$ for the kernel described in \cite{nualart2006malliavin} then
\begin{equation} \label{rough derivative}
D_s\sigma_u = \frac{1}{2}\sigma_u K_H(u,s)\1_{[0,u]}(s).
\end{equation}

\subsubsection{The case $H \geq 1/2$}

Regarding the level, we know due to Theorem \ref{th: main theorem} that
\begin{equation} \label{ATM-IV level theoretical}
\lim_{T \to 0} I_0(k^*) = \sigma_0.
\end{equation}
Moreover, regarding the skew, we can show as in \cite{alos2023implied} that when $H \geq 1/2$, then
\[
\lim_{T \to 0} \frac{\rho}{\sigma_0 T^2 }\int_0^T \int_s^T \E[D_s \sigma_u] \d u \d s = \begin{cases}
    0 & H > 1/2 \\
    \frac{\rho \alpha}{4} & H = 1/2.
\end{cases}
\]
In virtue of Theorem \ref{th: main theorem} we have
\begin{equation} \label{ATM-IV skew theoretical}
\lim_{T \to 0} \partial_k I_0(k^*) = \begin{cases}
    \frac{c_1}{\sigma_0} & H > 1/2 \\
    \frac{c_1}{\sigma_0} + \frac{\rho \alpha}{4} & H =1/2
\end{cases}
\end{equation}
where $c_1$ depends on the law of the jumps. For instance, if we assume $Z_t = \sum_{i = 1}^{N_t} X_i$ where $N$ is a Poisson process of intensity $\lambda$ and $X_i$ are i.i.d. random variables with $\delta = \E[X_i]$. Then, $c_1 = \lambda \delta$.
  
\begin{ex} \label{ex: Bates H > 1/2}
We will numerically show that Equations \eqref{ATM-IV level theoretical} and \eqref{ATM-IV skew theoretical} hold when we fix the law of the jumps (and therefore $\lambda \delta$ is fixed) and we let $\sigma_0$ vary. The parameters we have selected for the simulations are 
\[
T = 10^{-5}, \quad S_0 = 10, \quad \alpha = 0.5, \quad \rho = -0.3, \quad H = 0.7.
\]
In order to simulate the effect of the jumps, we assume that the Lévy process $L$ is a compound Poisson process with intensity $\lambda = 5$ and law of the jumps $N(0.01, 0.2)$, so that $c_1 = \lambda \delta = 5 \times 0.001 = 0.05$.

To check the behavior of the ATM-IV level and skew as we change $\sigma_0$, we will perform simulations for $\sigma_0 \in \{0.1, \dots, 1.4\}$. The simulations are done with a  Monte Carlo method with $2$ million paths simulated with antithetic variables. 
\begin{figure}[H]
        \centering
        \includegraphics[width = 13cm, height = 6cm]{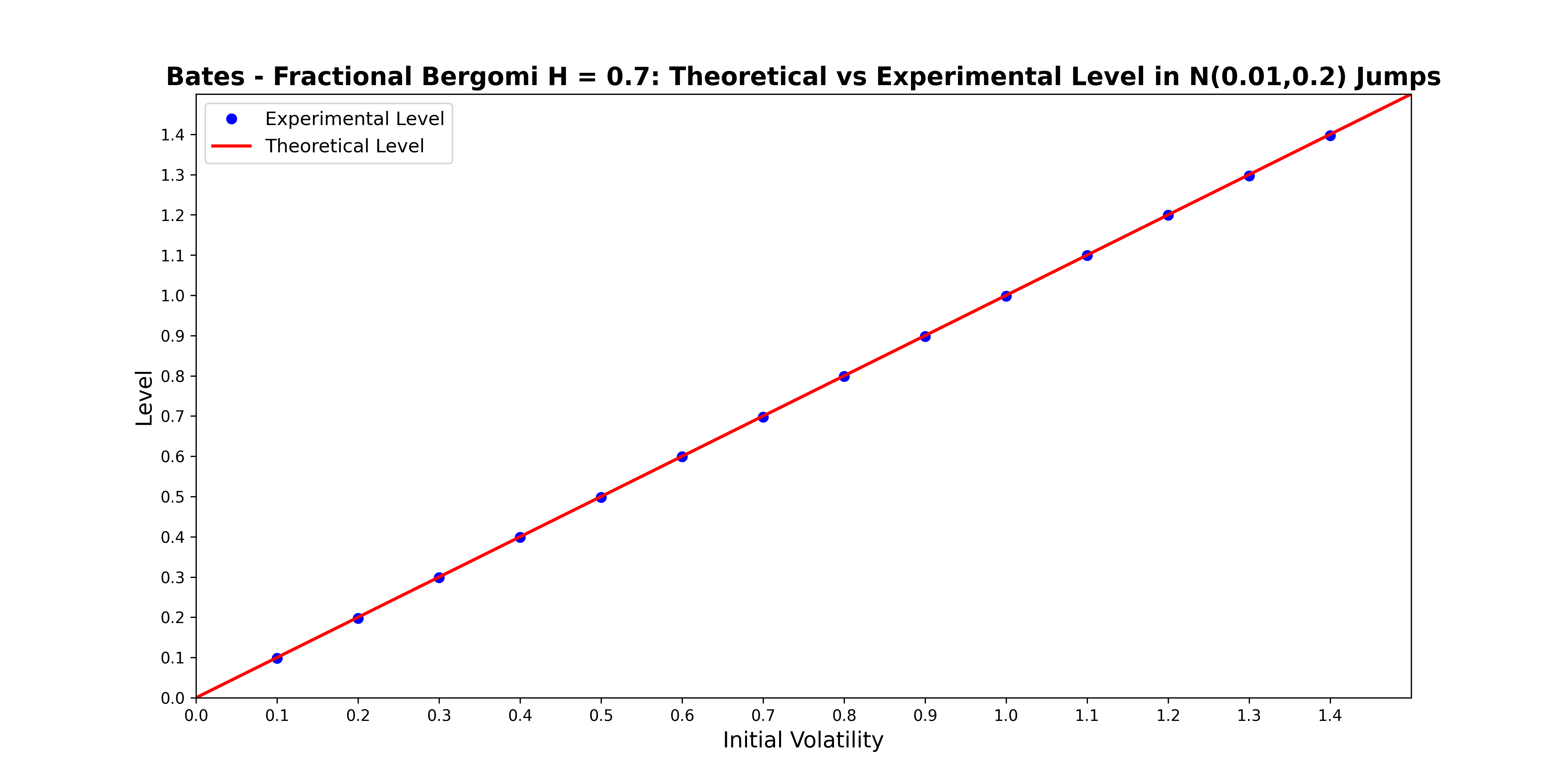}
        \caption{In red we can see the plot of the identity map $(\sigma_0,\sigma_0)$. In blue we can see how the simulated pairs $(\sigma_0, I_0(k^*))$ with $T = 10^{-5}$ for $\sigma_0 \in \{0.1, \dots, 1.4\}$ are overlapped with the theoretical results (plotted in red).}
        \label{fig: Bates level Gaussian jumps}
    \end{figure}
In order to simulate the ATM-IV skew, we choose the same set of parameters as with the ATM-IV level. Notice that Equation \eqref{eq: main theorem - fractional case} in Theorem \ref{th: main theorem} applied to $\sigma$ following the dynamics given by \eqref{fractional bergomi} with our set of parameters states that
\[
\lim_{T \to 0} \partial_k I_0(k^*) = \frac{c_1}{\sigma_0}.
\]
Hence, we want to check that the dependence of the ATM-IV skew with respect to $\sigma_0$ when $c_1$ is fixed is reciprocal to $\sigma_0$. To do so, we let
\begin{equation} \label{eq: Gaussian jumps}
Z_t = \sum_{i=1}^{N_t} X_i, \quad N_t \sim Poiss(5 t), \quad X_i \sim N(\delta, 0.2) \text{ with }\delta \in \{-0.01, 0\},
\end{equation}
so $c_1 = \lambda \delta \in \{-0.05, 0\}$. In order to stress that the effect the jumps have on the ATM-IV skew is encapsulated in $c_1$ and not on the specific law of the jumps, we now simulate the ATM-IV skew with jumps modeled as
\begin{equation} \label{eq. Laplace jumps}
Z_t = \sum_{i=1}^{N_t} X_i, \quad N_t \sim Poiss(5t), \quad X_i \sim L(0.01, 1).
\end{equation}
Recall that the Laplace or Double Exponential distribution $L(\delta, b)$ is a family of absolutely continuous laws with probability density function
\[
f(x; \delta, b) = \frac{1}{2b} \exp \left( - \frac{|x-\delta|}{b} \right).
\]
In this case, we have $\E[X_i] = 0.01$, so $c_1 = 0.05$. In Figure \ref{fig: Bates model different means} we can see the ATM-IV skew when the compound Poisson process is of the form \eqref{eq: Gaussian jumps} and \eqref{eq. Laplace jumps}.
\begin{figure}[H]
        \centering
        \includegraphics[width = 13cm, height = 6cm]{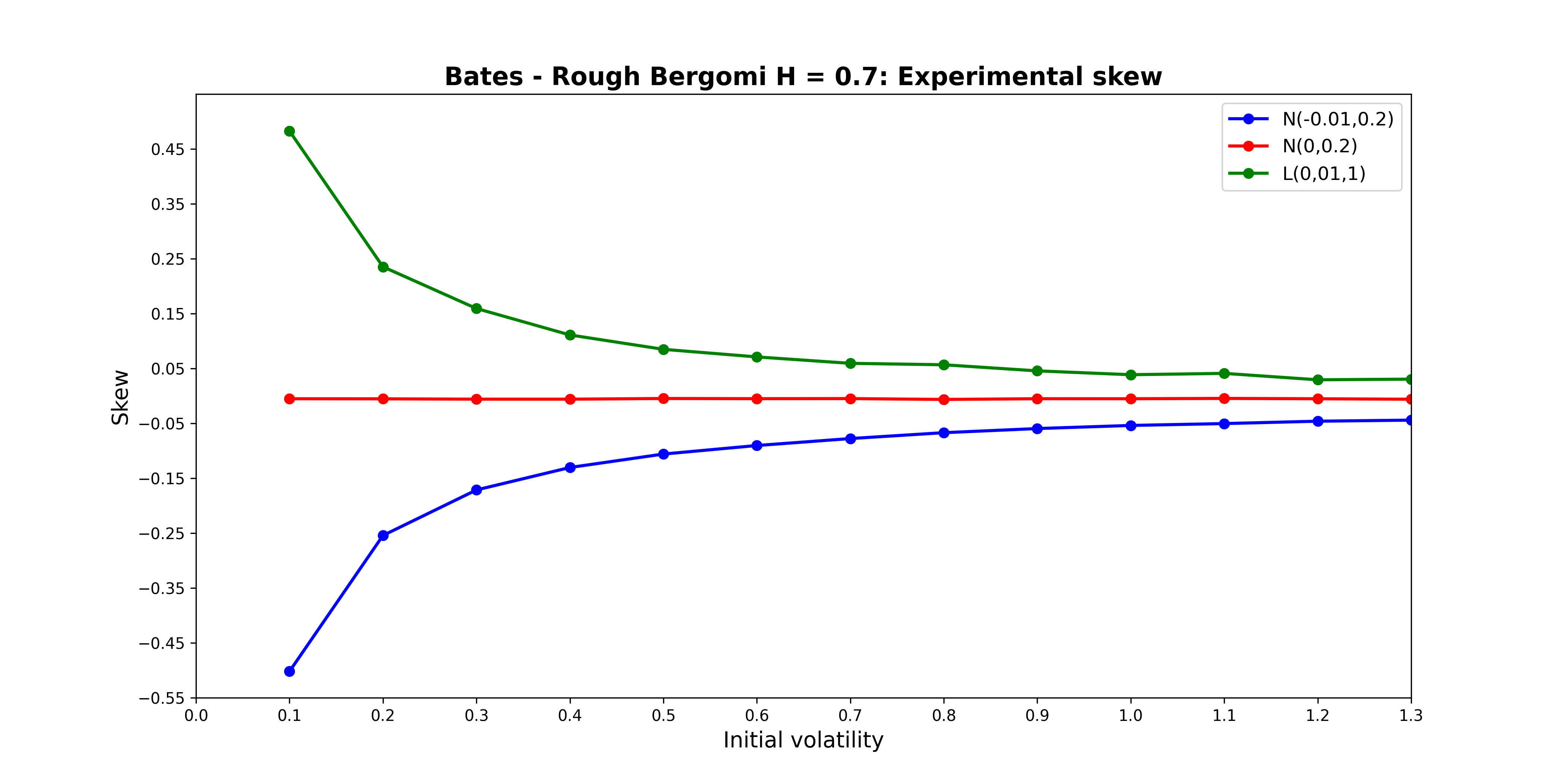}
        \caption{In blue, we can see the plot of the simulated map $\sigma_0 \mapsto \partial_kI_0(k^*)$ in the case where the jumps follow a $N(-0.01, 0.2)$ law. In red, we can see the plot of the simulated map $\sigma_0 \mapsto \partial_kI_0(k^*)$ in the case were the jumps follow a $N(0,0.2)$ law. In green  we can see the plot of the simulated map $\sigma_0 \mapsto \partial_kI_0(k^*)$ in the case were the jumps follow a $L(0.01,1)$ law.}
        \label{fig: Bates model different means}
    \end{figure}
    It can be observed in Figure \ref{fig: Bates model different means} that, apart from the reciprocal dependence on the initial volatility, the curve with Gaussian jumps of mean $-0.01$ is symmetric to the curve with Laplace jumps of mean $0.01$. This highlights that the behavior of the skew when we fix $\sigma_0$ depends on the mean of the jumps, rather than on their specific distribution.
    \end{ex}
\subsubsection{The case $H < 1/2$}
In the case $H < 1/2$, we can rely on the computations done in \cite{alos2023implied} to compute the formulas in Theorem \ref{th: main theorem}  for the generalized Bachelier-Bates model. If we let $H < 1/2$ and  $Z$ be a compound Poisson process with intensity $\lambda$ and mean of the jumps $\delta$. Then,
    \begin{equation} \label{ATM-IV level and skew theoretical}
    \lim_{T \to 0} I_0(k^*) = \sigma_0 \quad \text{ and } \quad \lim_{T \to 0} T^{1/2-H}\partial_k I_0(k^*) =  \frac{2\rho \alpha \sqrt{2H}}{(3+4H(2+H))}.
    \end{equation}
    Moreover,
    \[
     \lim_{T \to 0} \partial_k I_0(k^*) = \begin{cases}
        \infty & \rho > 0 \\
        -\infty & \rho < 0.
     \end{cases}
    \]

\begin{ex} In this example we will perform numerical simulations to show that the equalities presented in Equation \eqref{ATM-IV level and skew theoretical} hold. Regarding the ATM-IV level we select the parameters
\[
T = 0.001, \quad S_0 = 100, \quad \alpha = 0.5, \quad \rho = -0.3, \quad H = 0.4,
\]
and we assume that the Lévy process is a compound Poisson process with intensity $\lambda = 5$ and the law of the jumps is a $L(0.1,0.1)$ distribution, so $c_1 = \lambda \delta  =0.5$.

Regarding the behavior of the ATM-IV skew as a function of the time to maturity $T$, we choose the same parameters for $\sigma$ and we assume that the Lévy process is a compound Poisson process with intensity $\lambda = 5$ and jumps following a $N(\delta, 0.2)$ distribution with $\delta \in \{-0.1, 0, 0.1\}$. In Figure \ref{fig: level and skew bates laplace & gaussian} we plot the results of the simulations.
\begin{figure}[H]
    \centering
    \subfigure[]{\includegraphics[width = 7cm, height = 4cm]{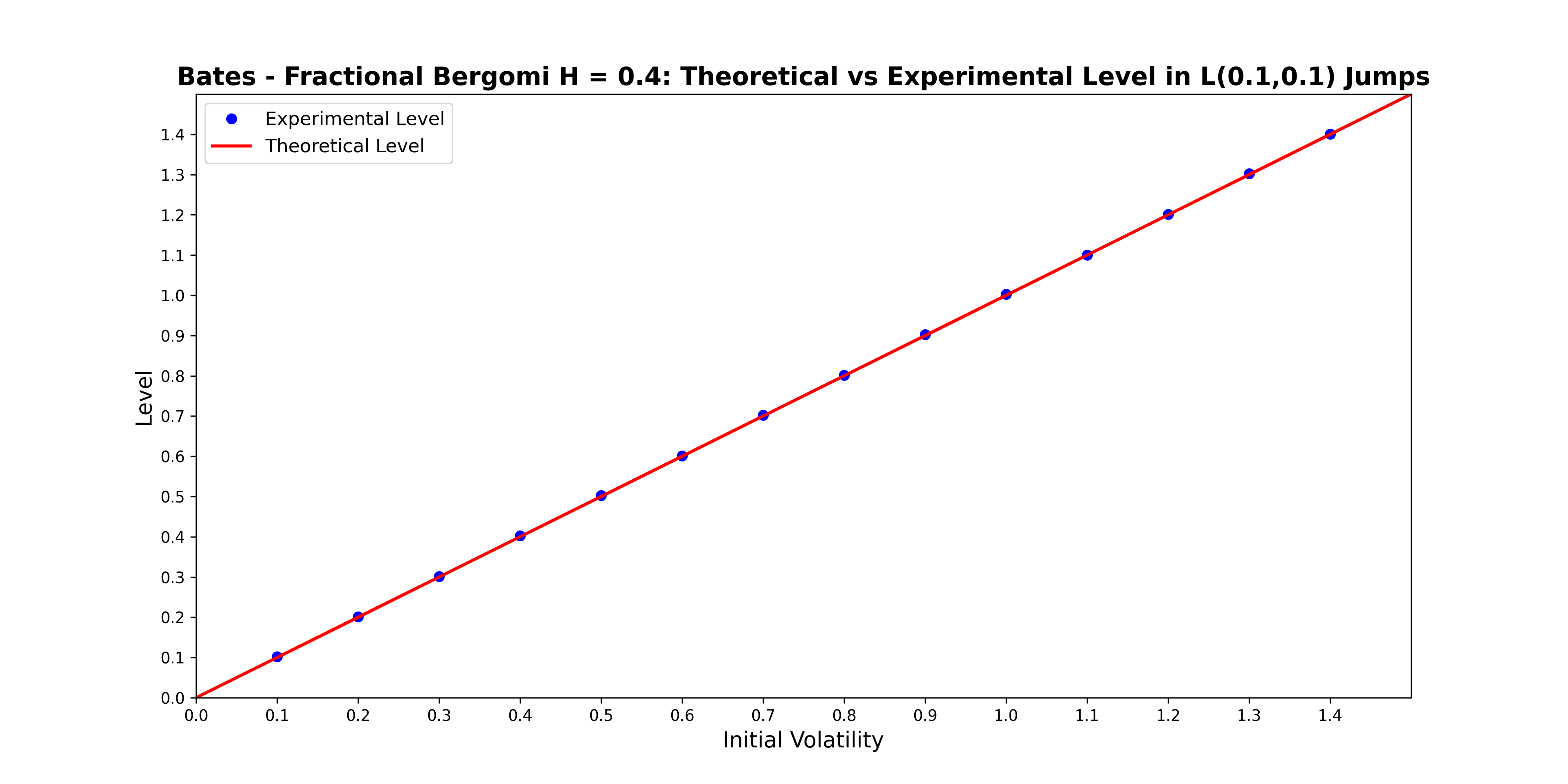}}
    \subfigure[]{\includegraphics[width = 7cm, height = 4cm]{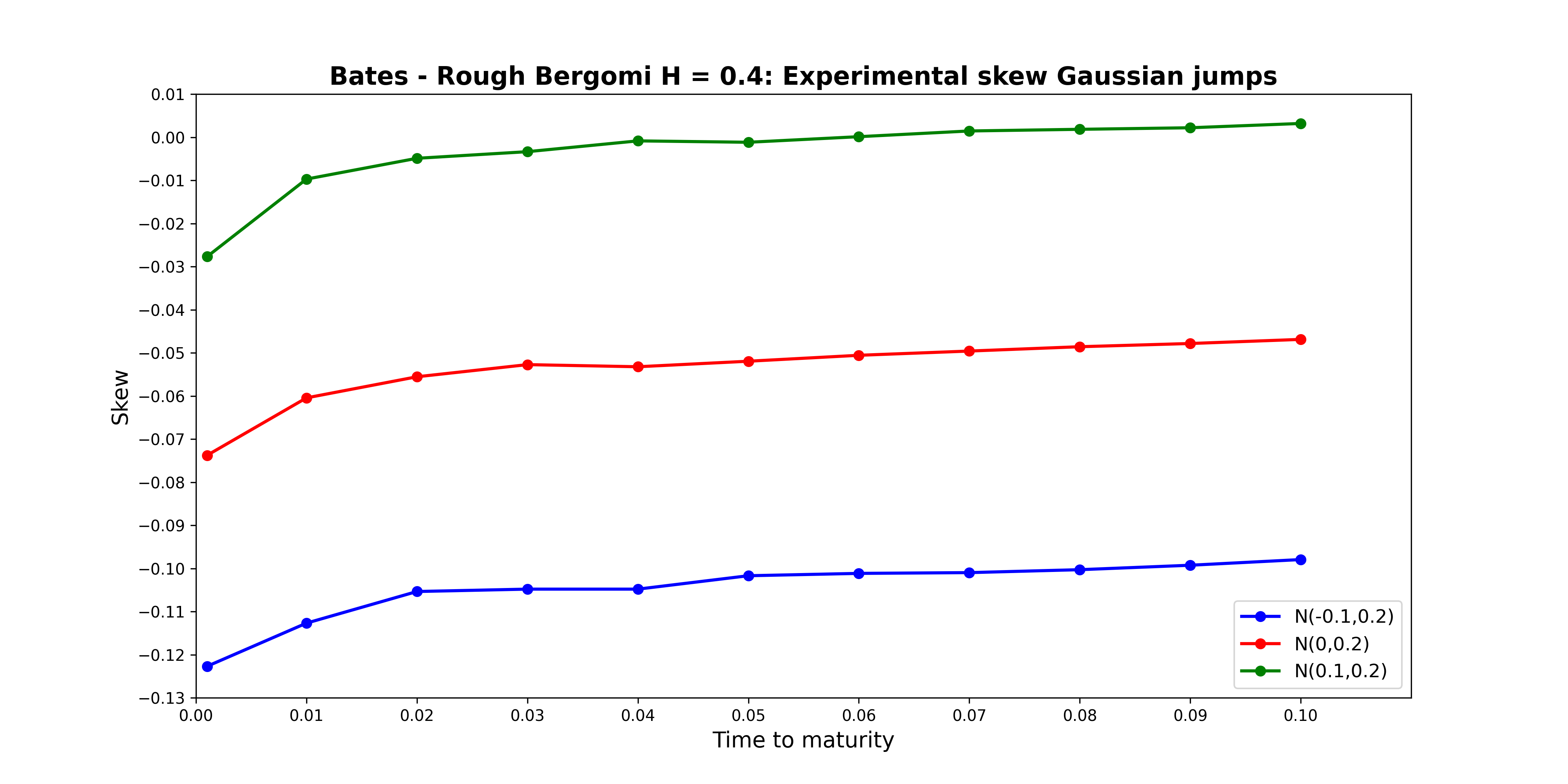}}
    \caption{(a) Theoretical vs experimental ATM implied volatility when the jumps follow a $L(0.1, 0.1)$ distribution. (b) Experimental skew when the jumps follow a $N(\delta, 0.2)$ distribution with $\delta \in \{-0.1, 0, 0.1\}$}
    \label{fig: level and skew bates laplace & gaussian}
\end{figure}
On the one hand, it is clear that the simulated ATM-IV level is perfectly aligned to the theoretical level. On the other hand, regarding the skew we can easily see that, as we change $\delta$ we obtain the same curve but shifted vertically. However, the fact that the chosen means $\delta$ are spread prevent us from appreciating any blow-up.

Let's take a closer look to the behavior when we change the mean of the jumps to make show in a more explicit way the blow-up effect. To do so, we fix $\sigma_0 = 0.3$ and we keep the same parameters for $\sigma$ and the same intensity of the compound Poisson process as in the ATM-IV level simulation. The law of the jumps is now assumed to be $N(\delta, 0.2)$ with $\delta \in \{-0.001, 0, 0.001\}$. In Figure \ref{fig: Bates model Gaussian jumps zoom} we can see more explicitly the tendency to $-\infty$ of the ATM-IV skew and we can also appreciate the fact that a modification on $c_1$ leads to a vertical translation of the graph of the ATM-IV skew as a function of $T$.
\begin{figure}[H]
        \centering
        \includegraphics[width = 14cm, height = 6.5cm]{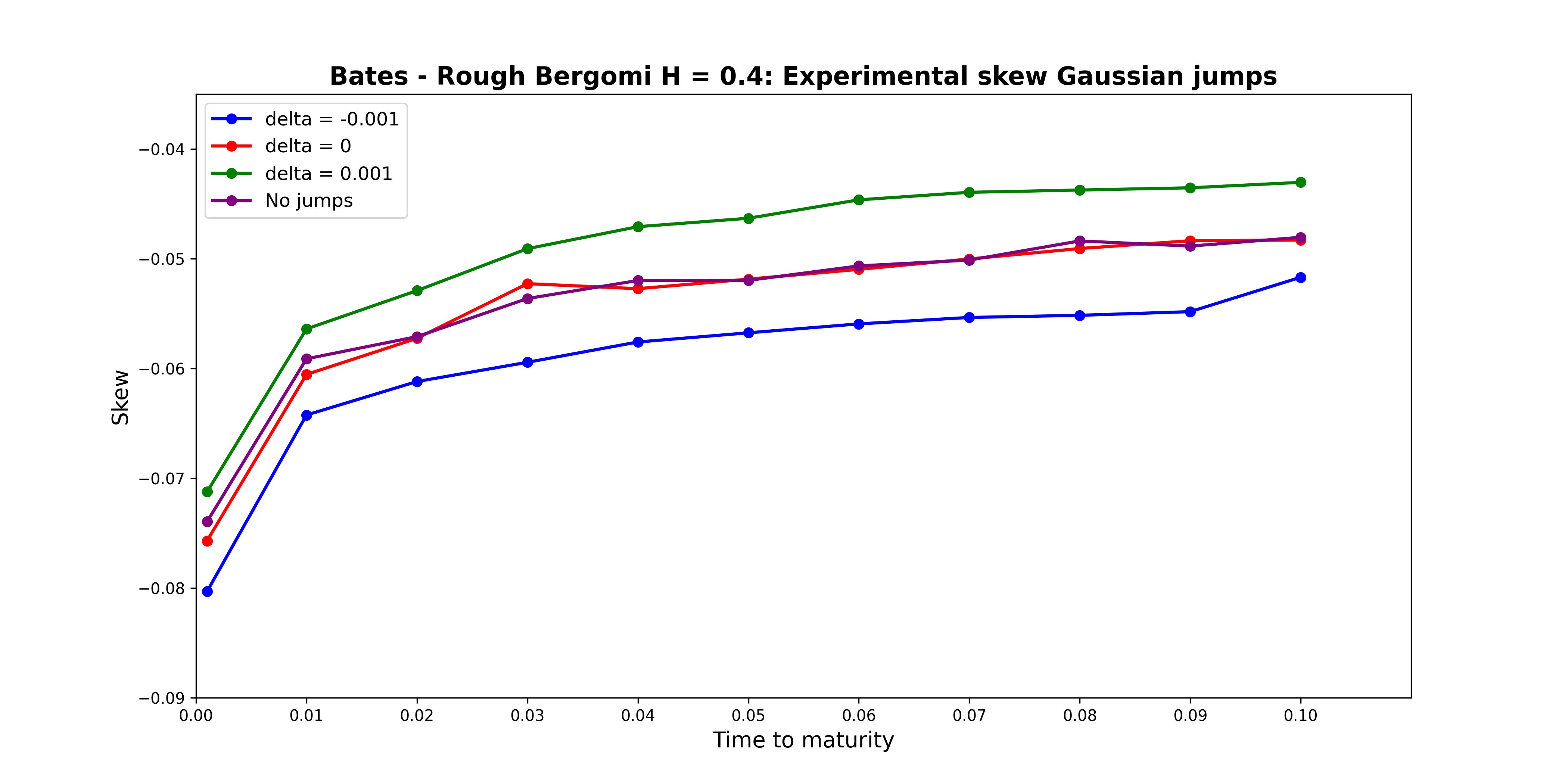}
        \caption{Plot of the maps $T \mapsto \partial_kI_0(k^*)$. In blue, we plot the map in the case where the jumps follow a $N(-0.001, 0.2)$ law ($c_1 = -0.005$). In green we plot the map in the case where the jumps follow a $N(0.001, 0.2)$ law ($c_1 = 0.005$). In red and purple we plot the map in the case where the jumps follow a $N(0,0.2)$ law and we have absence of jumps respectively ($c_1 = 0$).}
        \label{fig: Bates model Gaussian jumps zoom}
    \end{figure}
Notice that both the red and the purple curves in Figure \ref{fig: Bates model Gaussian jumps zoom} (that are the ones simulated with centered jumps and no jumps involved respectively) coincide with the curve obtained in \cite{alos2023implied}, where the authors study the asymptotic behavior of the ATM-IV level and skew for the stochastic volatility Bachelier model without jumps.

In order to produce all the images of this section, we have used a Monte Carlo method with 2 million simulated paths and antithetic variables.
\end{ex}
\subsection{Stochastic volatility with CGMY jumps}
For this example, we will assume that $S$ follows the model \eqref{model of study}, that is,
\[
S_t = S_0 + \int_0^t \sigma_r (\rho \d W_r + \sqrt{1-\rho^2} \d B_r) + L_t
\]
where $L$ is a pure jump Lévy martingale with Lévy measure
\[
\nu(\d y) = \begin{cases}
    C\frac{1}{|y|^{1+Y}}\exp\left( - G|y|\right)\d y & y < 0 \\
    C \frac{1}{y^{1+Y}}\exp(-M y) \d y & y > 0
\end{cases}
\]
where $C,G,M > 0$ and $Y \in (-\infty,2)$. This Lévy process is a particular case of a larger class of processes, named Tempered Stable Lévy processes (see \cite{ROSINSKI2007677} for a broader insight on the subject). Since for $Y \in (1,2)$ the trajectories of the process are of infinite variation, the application of Theorem \ref{th: main theorem} (and therefore, its usage to compare if the simulated results coincide with the theoretical ones) is subject to the existence of a constant $c_1$ such that $\lim_{\eps \to 0} c_1^{\eps} = c_1$. As a consequence, a preliminary step to the numerical analysis consists in discussing the existence of $c_1$ in the case $Y \in (1,2)$. Consider the Lévy measures $\nu_{\eps}(\d y)$ defined by
\[
\nu_{\eps}(\d y) = \begin{cases}
    C\frac{1}{|y|^{1+Y}}\exp\left( - G|y|\right) \d y & y < -\eps \\
    C \frac{1}{y^{1+Y}}\exp(-M y) \d y & y > \eps.
\end{cases}
\]
It is clear that $y \in L^1(\R \backslash [-\eps,\eps])$ so the constants $c_1^{\eps}$ are well defined. In order to compute $c_1^{\eps}$ notice that
\[
c_1^{\eps} = \int_{|y| > \eps} y \nu(\d y) = I_1(\eps) + I_2(\eps)
\]
where
\[
I_1(\eps) = \int_{\eps}^{\infty} \frac{1}{y^{Y}}\exp(-My) \d y
\]
and
\[
I_2(\eps) = -\int_{-\infty}^{-\eps} \frac{1}{(-y)^Y}\exp(Gy) \d y.
\]
For $I_1(\eps)$ we can perform the cange of variables $t = Mx$ in order to get
\[
I_1(\eps) = \int_{M\eps}^{\infty} \frac{M^{Y-1}}{t^Y}e^{-t} \d t = M^{Y-1} \Gamma(1-Y, M\eps),
\]
where 
\[
\Gamma(z,x) = \int_x^{\infty} t^{z-1}e^{-t} \d t
\]
denotes the upper incomplete Gamma function. Notice that 
\[
\lim_{\eps \to 0} M^{Y-1}\Gamma(1-Y, M\eps) = M^{Y-1}\Gamma(1-Y)
\]
as long as $Y \neq 1$ due to the monotone convergence theorem. If we study $I_2(\eps)$, we can repeat previous computations in order to get
\[
I_2(\eps) = - G^{Y-1}\Gamma(1-Y, G\eps).
\]
Hence, if $Y \neq 1$, then we have that
\[
\lim_{\eps \to 0} c_1^{\eps} =  \left(M^{Y-1} - G^{Y-1}\right)\Gamma(1-Y)=:c_1.
\]
\begin{remark}
    Notice that in the case $Y = 1$ we can still argue that $c_1$ exists in the case where $G = M$. Indeed, in this case we have $c_1^{\eps} = 0$ for all $\eps > 0$ and we can define $c_1 = 0$.
\end{remark}
We will study this two cases (symmetric $Y = 1$ and general $Y \in (1,2)$) and we will test them with fractional stochastic volatility models. Both cases are performed using a Monte Carlo method with $2$ million paths simulated with antithetic variables.

\subsubsection{The symmetric $Y = 1$ case}
Since we are forced to choose $G = M$, we will choose, for instance, $(C,G,M,Y) = (1,5,5,1)$ so that
\[
\nu(\d y) = \begin{cases}
    \frac{1}{|y|^2}\exp(-5|y|), & y < 0 \\
    \frac{1}{y^2} \exp( - 5y), & y > 0
\end{cases}
\]
and $c_1^{\eps} = c_1 = 0$. Since $c_1 = 0$, the objective of this section will be showing that not only Theorem \ref{th: main theorem} holds in the case where the Lévy process $L$ has infinite variation trajectories but also checking that the numerical results in this case match the numerical results when we don't consider jumps in the model. An application of Theorem \ref{th: main theorem} for our case allow us to derive the following conclusion.
\begin{ex}
    Let $S$ follow model \eqref{model of study} with $\sigma$ satisfying Equation \eqref{fractional bergomi}. Assume that $L$ is a CGMY process with $(C,G,M,Y) = (1,5,5,1)$. Then,
    \[
        \lim_{T \to 0}  I_0(k^*) = \sigma_0 \quad \text{and} \quad \lim_{T \to 0}T^{\max(1/2-H, 0)}\partial_kI_0(k^*) = \begin{cases}
    0 & H > 1/2 \\
    \frac{\rho \alpha}{4} & H = 1/2 \\
    \frac{2 \rho \alpha \sqrt{2H}}{3+4H(2+H)} & H < 1/2.
     \end{cases}
    \]  
\end{ex}
If we compare this formulas with the ones in \cite{alos2023implied} we observe that the contribution of the CGMY jumps in the symmetric case is null, since the formulas coincide with the case where there are no jumps. We will explore numerically the fact that having no jumps and having symmetric CGMY jumps leads to the same short-time ATM-IV behavior. Concerning the level, we simulate the process $S$ with $\sigma$ satisfying the Rough Bergomi dynamics \eqref{fractional bergomi} with parameters
\[
T = 0.001, \quad S_0 = 100, \quad \alpha = 0.5, \quad \rho = -0.3, \quad H = 0.4,
\]
and we simulate the level in the range $\sigma_0 \in \{0.1, \dots, 1.4\}$. For the CGMY jumps, we choose $(C,G,M,Y) = (1,5,5,1)$ as previously mentioned. In Figure \ref{fig: CGMY level symmetric} we can see how the experimental level is adjusted perfectly with the theoretical level.

\begin{figure}[H]
        \centering
        \includegraphics[width = 13cm, height = 6cm]{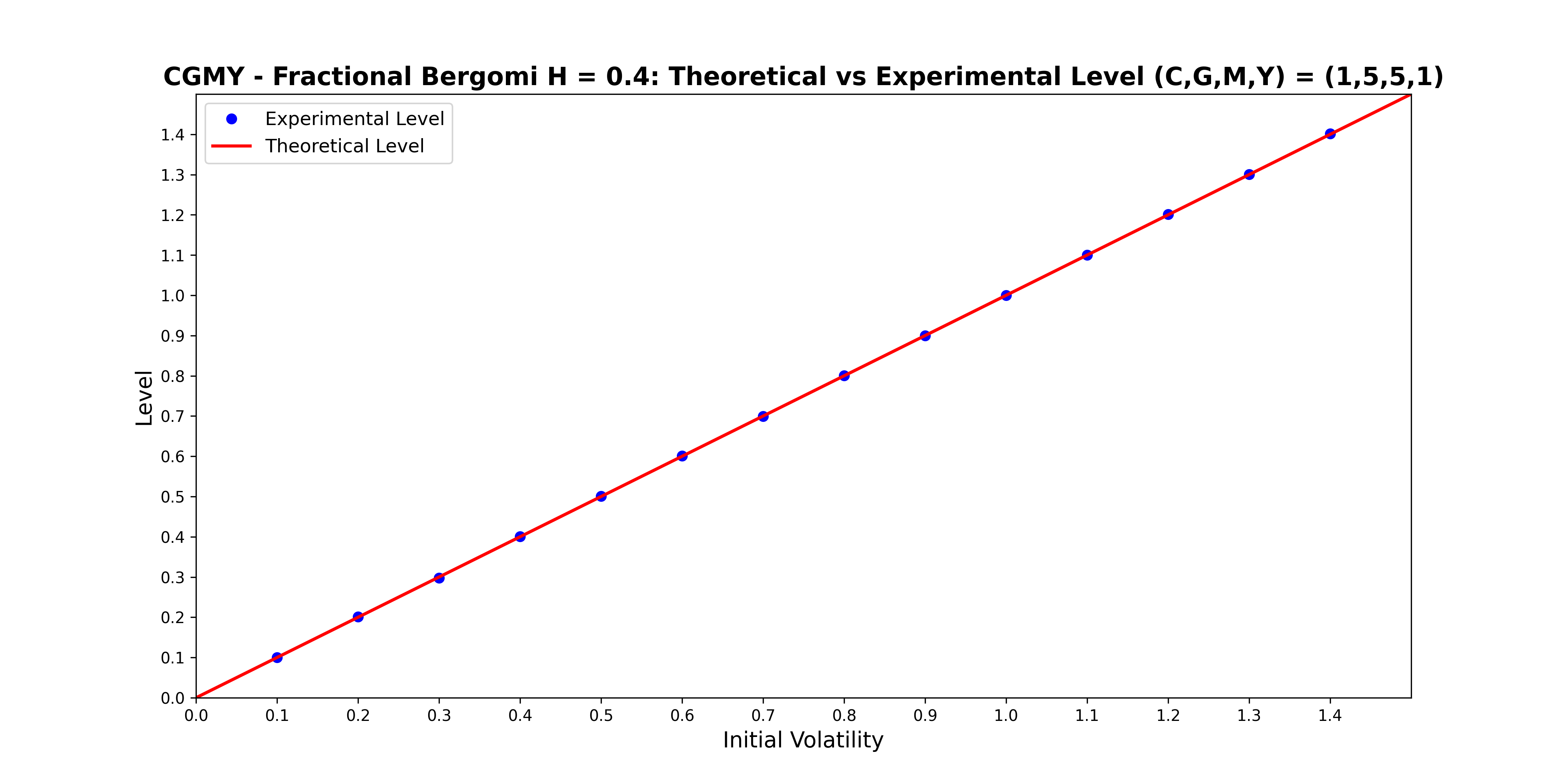}
        \caption{Overlapping of simulated pairs $(\sigma_0, I_0(k^*))$ (blue) with the theoretical curve of results $(\sigma_0,\sigma_0)$ (red).}
        \label{fig: CGMY level symmetric}
    \end{figure}

Regarding the skew, we consider the parameters 
\[
T = 0.001, \quad S_0 = 100, \quad \alpha = 0.5, \quad \rho = -0.3,
\]
and the parameters for the law of the jumps are $(C,G,M,Y) = (0.005, 5,5,1)$. We test Equations \eqref{eq: main theorem - fractional case} and \eqref{eq: main theorem - rough case} of Theorem \ref{th: main theorem} for $H \in \{0.4, 0.5, 0.7\}$ and $\sigma_0 \in \{0.1, \dots, 1.4\}$. We can see in Figures \ref{fig: CGMY skew symmetric 0.7}, \ref{fig: CGMY skew symmetric 0.5} and \ref{fig: CGMY skew symmetric 0.4} the successful results of the simulations of the ATM-IV skew functions of $\sigma_0$. 

\begin{figure}[H]
        \centering
        \includegraphics[width = 13cm, height = 6cm]{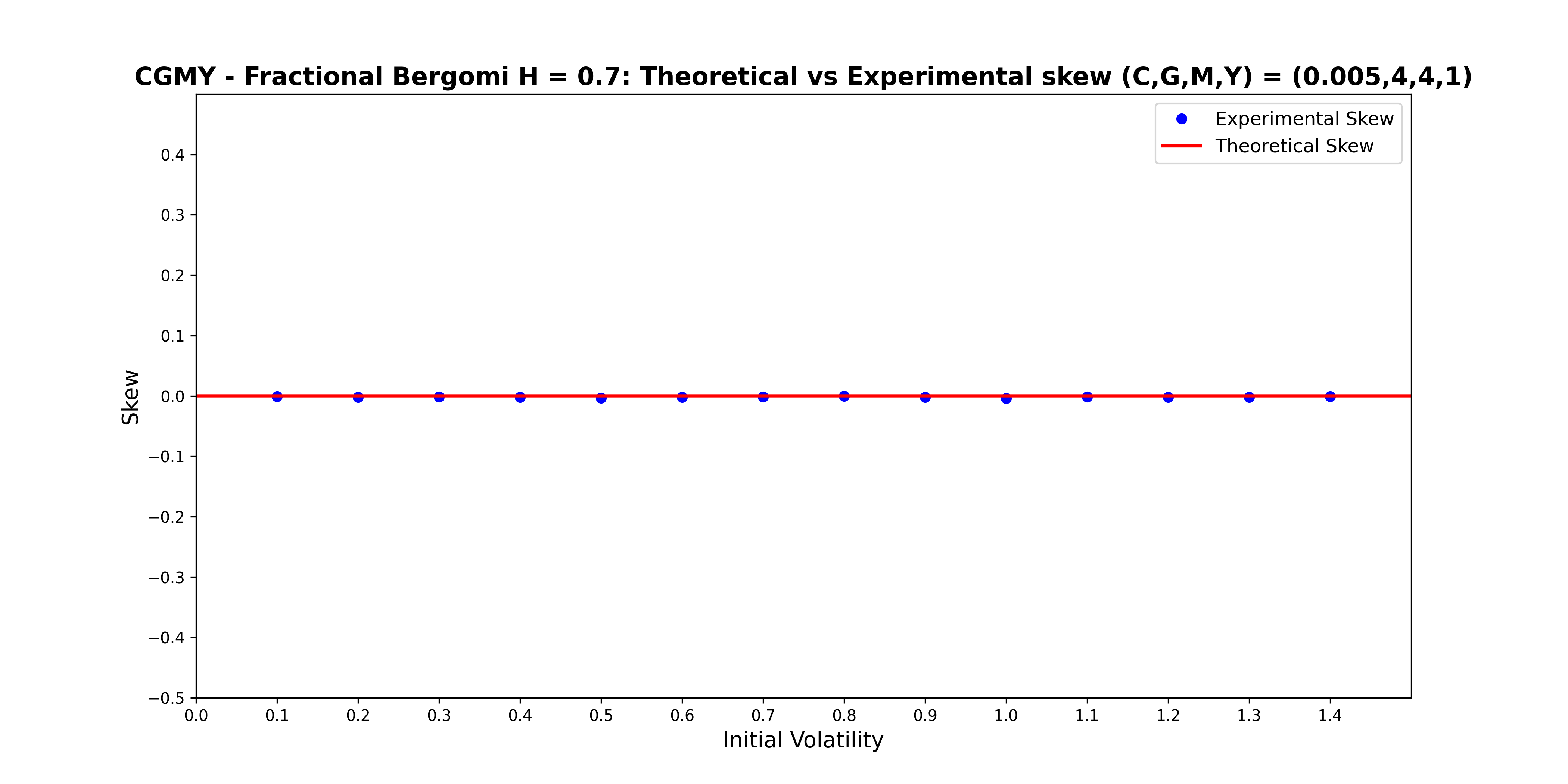}
        \caption{In blue we find the simulated values $(\sigma_0, \partial_k I_0(k^*))$ and we observe that they coincide with the theoretical curve $(\sigma_0, 0)$ plotted in red.}
        \label{fig: CGMY skew symmetric 0.7}
    \end{figure}

    \begin{figure}[H]
        \centering
        \includegraphics[width = 13cm, height = 6cm]{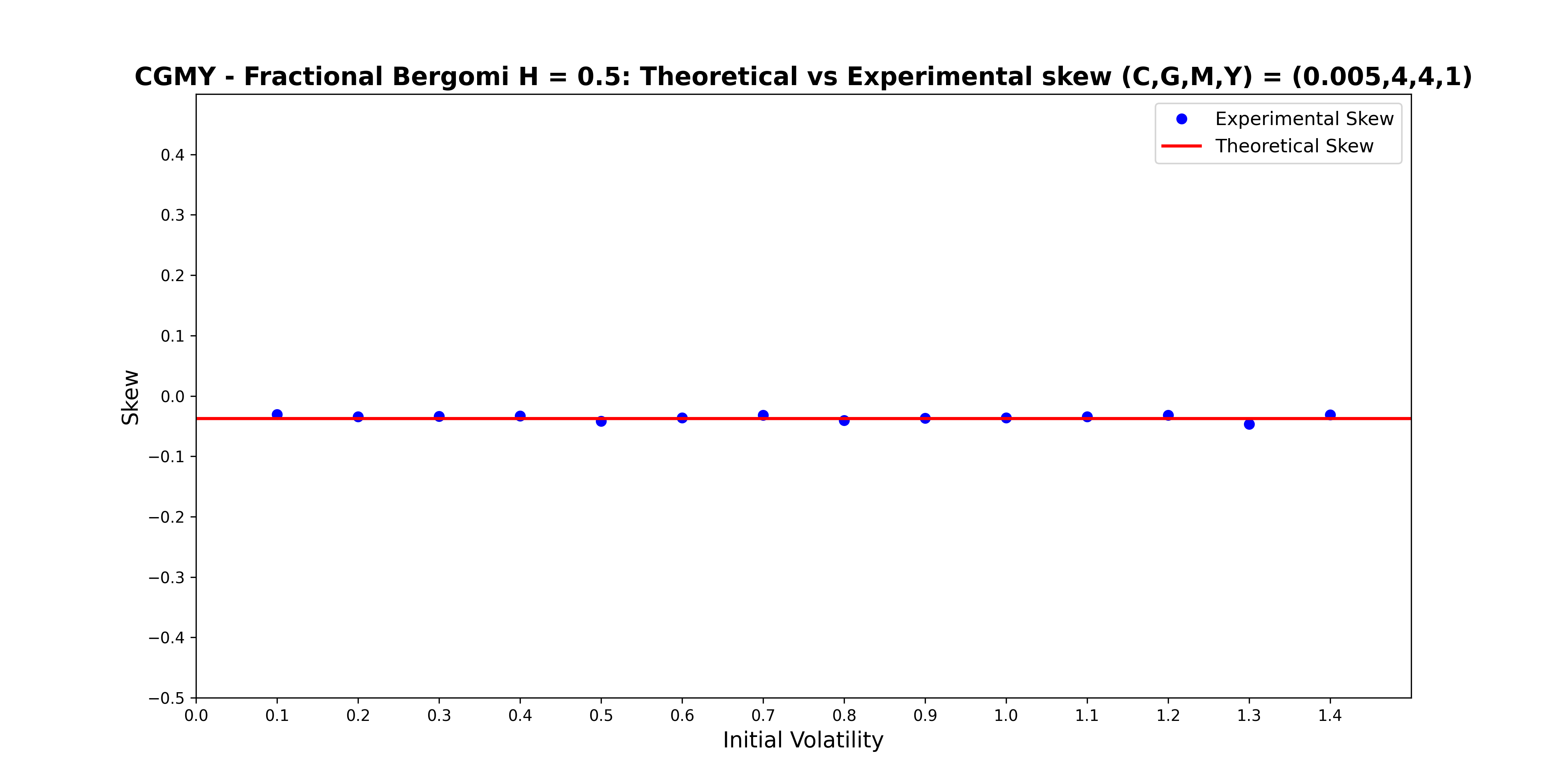}
        \caption{In blue we find the simulated values $(\sigma_0, \partial_k I_0(k^*))$ and we observe that they coincide with the theoretical curve $(\sigma_0, \frac{\rho \alpha}{4})$ plotted in red.}
        \label{fig: CGMY skew symmetric 0.5}
    \end{figure}

    \begin{figure}[H]
        \centering
        \includegraphics[width = 13cm, height = 6cm]{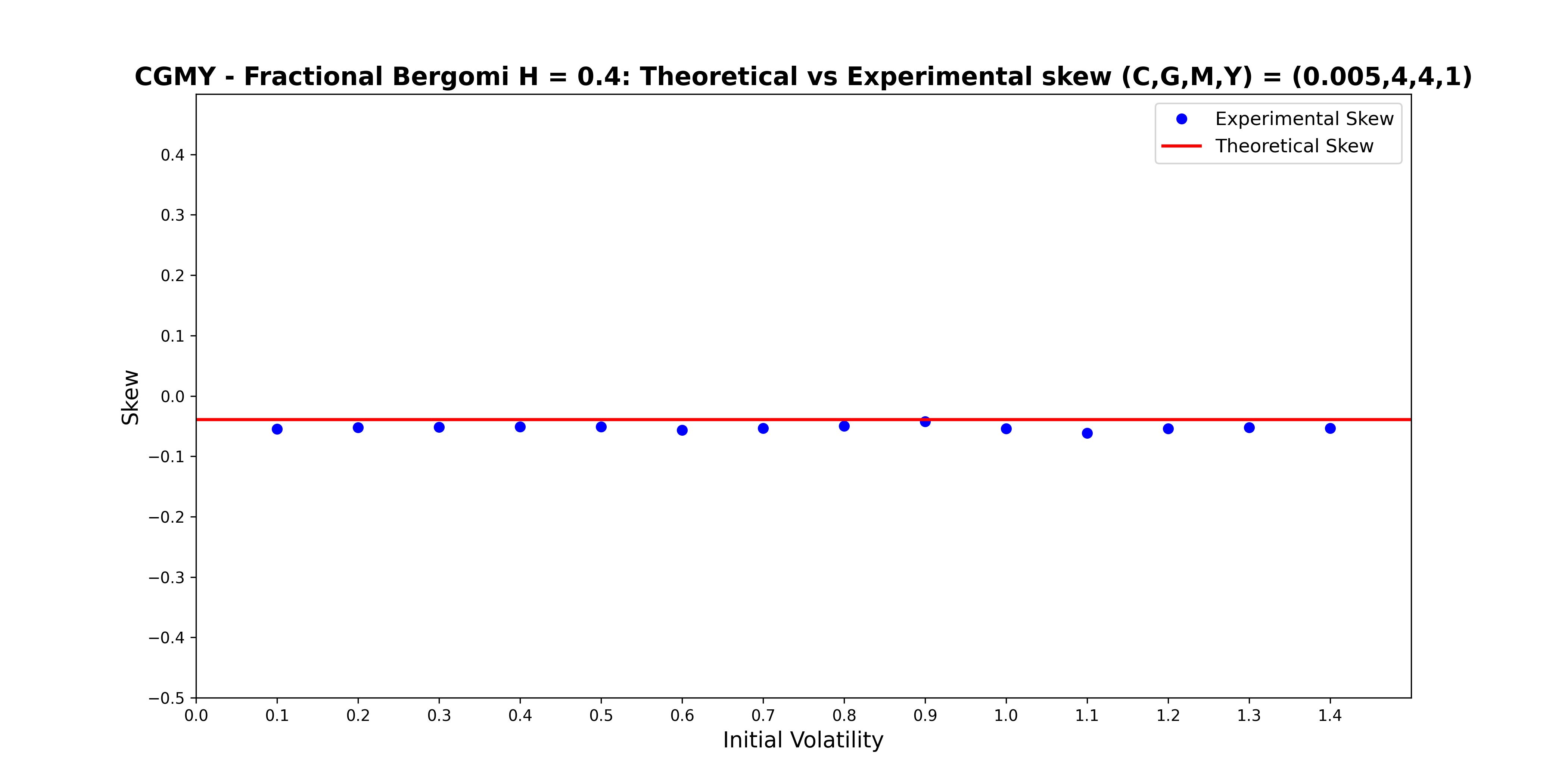}
        \caption{In blue we find the simulated values $(\sigma_0, T^{-0.1}\partial_k I_0(\sigma_0))$ and we observe that they coincide with the theoretical curve $(\sigma_0,  \frac{2 \rho \alpha \sqrt{2H}}{3+4H(2+H)})$ plotted in red.}
        \label{fig: CGMY skew symmetric 0.4}
    \end{figure}
Notice that Figures \ref{fig: CGMY skew symmetric 0.7}, \ref{fig: CGMY skew symmetric 0.5} and \ref{fig: CGMY skew symmetric 0.4} coincide with the figures in \cite{alos2023implied} where no jumps were considered. Therefore, the objective of checking that Theorem \ref{th: main theorem} holds for an example of infinite activity with infinite variation trajectories Lévy process and checking that the results match with the model without jumps is successfully accomplished. 
\subsubsection{The asymmetric case}

Now we will choose $Y \in (1,2)$. In this case, the formulas for the ATM-IV level and skew according to Theorem \ref{th: main theorem} are the following.
\begin{ex}
    Let $S$ follow Equation \eqref{model of study} with $\sigma$ satisfying \eqref{fractional bergomi} Assume that $L$ is a pure-jump CGMY Lévy martingale. Then
    \[
    \lim_{T \to 0} I_0(k^*) = \sigma_0, \quad \lim_{T \to 0} T^{\max(1/2-H,0)} \partial_kI_0(k^*) = \begin{cases}
        \frac{C(M^{Y-1} - G^{Y-1})\Gamma(1-Y)}{ \sigma_0}, & H > 1/2, \\
        \frac{C(M^{Y-1} - G^{Y-1})\Gamma(1-Y)}{\sigma_0} + \frac{\rho\alpha}{4}, & H = 1/2, \\
        \frac{2 \rho \alpha \sqrt{2H}}{3+4H(2+H)} & H < 1/2.
    \end{cases}
    \]
In order to numerically validate the previous identities We choose, for instance, $C = 0.05$, $G = 2, M = 4, Y = 1.5$ so that
\[
c_1 = \lim_{\eps \to 0} c_1^{\eps} = C(M^{Y-1} - G^{Y-1})\Gamma(1-Y)  =: c_1\approx -0.10382794271800314.
\]
For the ATM-IV level we assume that $\sigma$ follows the dynamics given by \eqref{fractional bergomi} with parameters
\[
T = 0.001, \quad S_0 = 100, \quad \alpha = 0.5, \quad \rho = -0.3, \quad H = 0.4,
\]
 In Figure \ref{fig: CGMY level asymmetric 0.4} we see that the experimental results are aligned with the conclusion of Theorem \ref{th: main theorem}.
\begin{figure}[H]
        \centering
        \includegraphics[width = 13cm, height = 6cm]{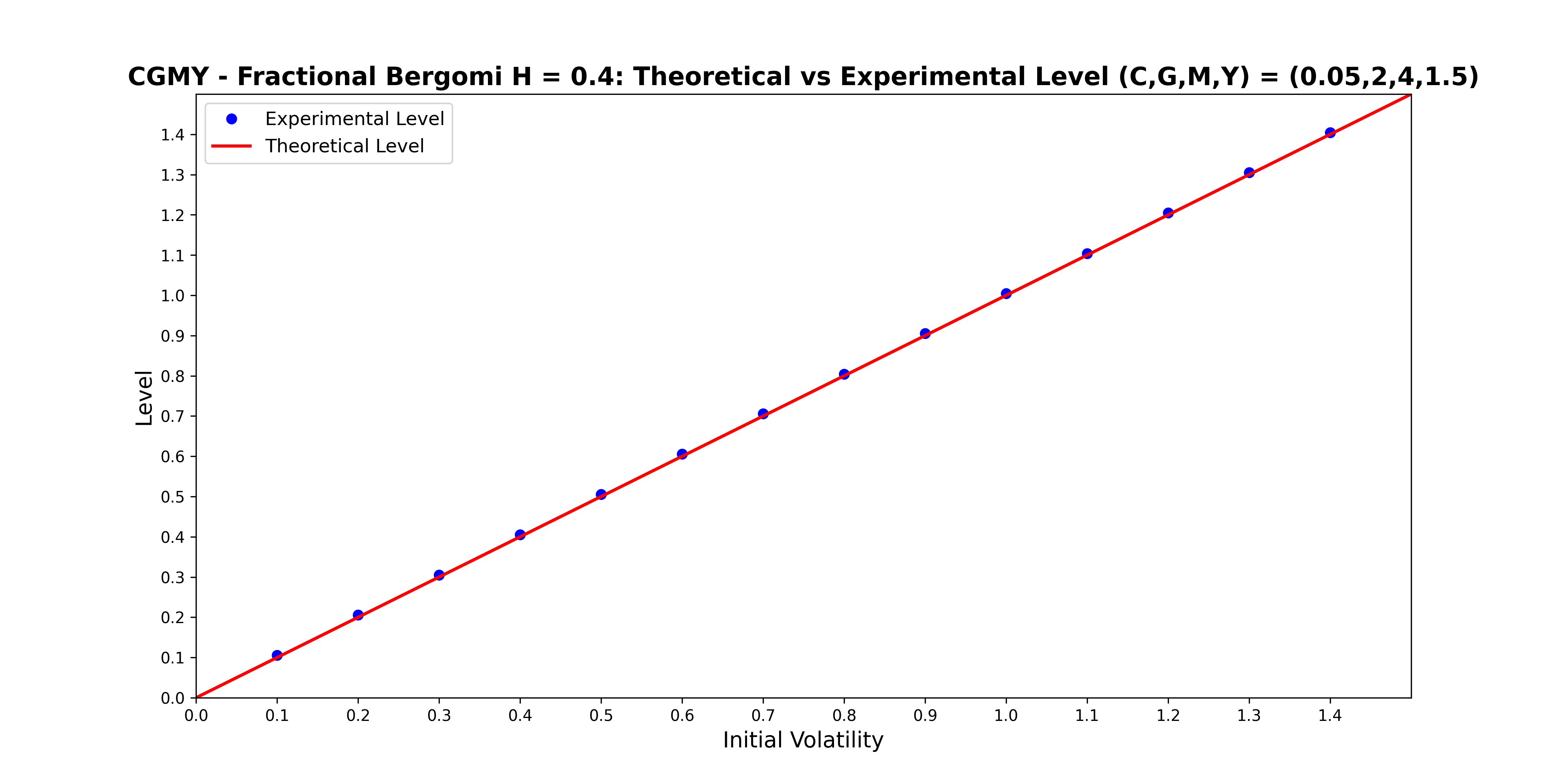}
        \caption{In blue we find how the simulated pairs $(\sigma_0, I_0(k^*))$ fit in the theoretical curve $(\sigma_0, \sigma_0)$. }
        \label{fig: CGMY level asymmetric 0.4}
    \end{figure}
Regarding the experiment with the skew, we select the set of parameters 
\[
T = 0.001, \quad S_0 = 100, \quad \alpha = 0.5, \quad \rho = -0.3, \quad H = 0.7.
\]
In Figure \ref{fig: CGMY skew asymmetric 0.7} we can observe how the simulated ATM-IV skew matches the theoretical skew.
  \begin{figure}[H]
        \centering
        \includegraphics[width = 13cm, height = 6cm]{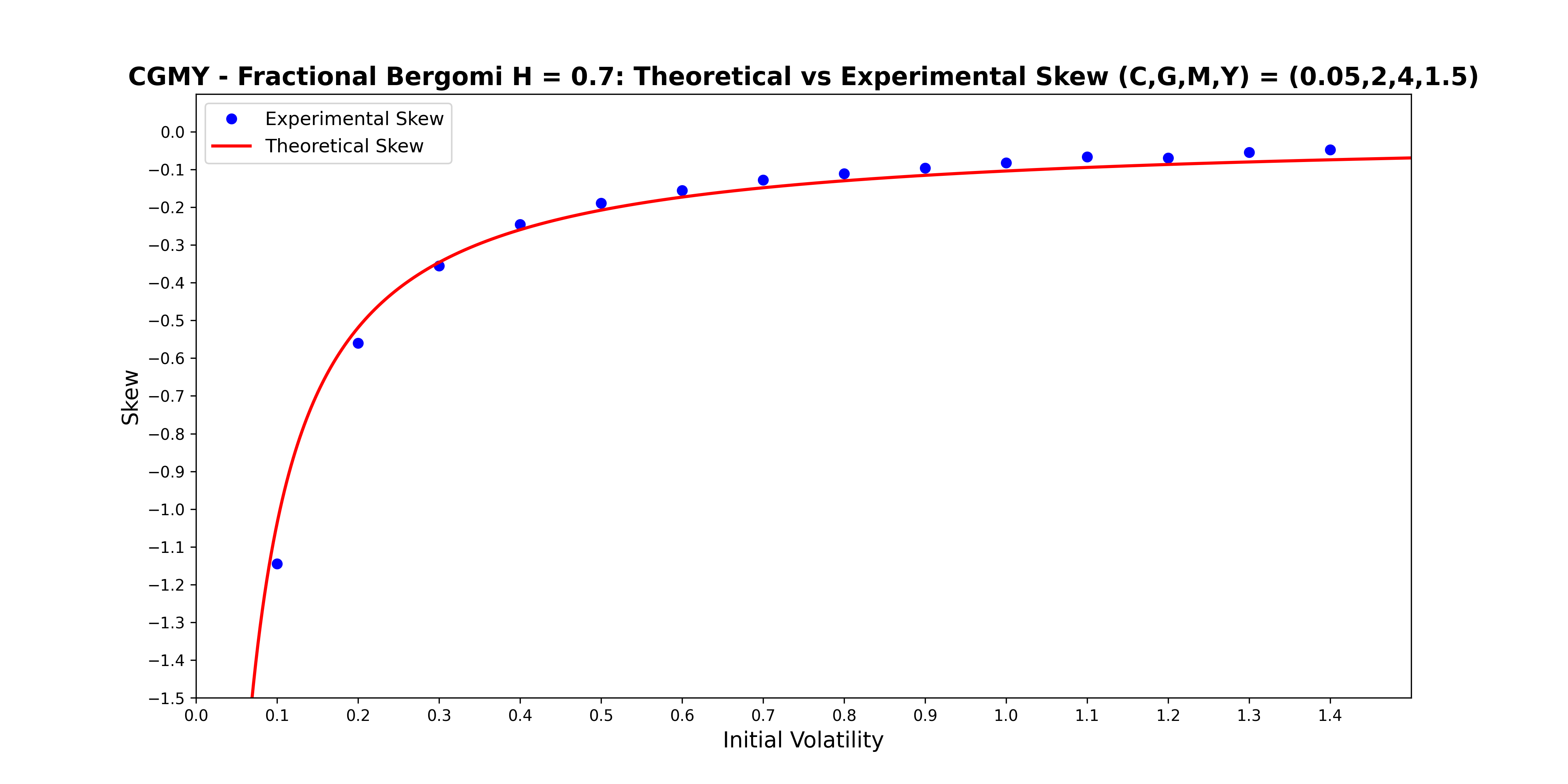}
        \caption{In blue we find the simulated pairs $(\sigma_0, \partial_k I_0(k^*))$ against the theoretical curve $(\sigma_0, \frac{c_1}{\sigma_0})$ plotted in red.}
        \label{fig: CGMY skew asymmetric 0.7}
    \end{figure}
\end{ex}
\subsection{The Normal Inverse Gaussian case}
Assume now that the jumps of the process $L$ follow a  Normal Inverse Gaussian (NIG) distribution (see \cite{barndorff1997processes}, \cite{aguilar2021explicit}). In this case, $L$ is a Lévy process with infinite activity and infinite variation with Lévy measure
\begin{align*}
    &\nu(\d y) = f(y; \alpha, \beta, \delta) \d y \\
    &f(y; \alpha, \beta, \delta) = \frac{ \delta \alpha }{\pi |y|}\exp(\beta y) K_1(\alpha |y|),
\end{align*}
where $K_1$ denotes the modified Bessel function of second kind and index 1, also known as the Macdonald function (for a complete reference on this function, check the reference \cite{spanier1987atlas}). In this particular case, we can check that the hypotheses of Theorem \ref{th: main theorem} are not satisfied for the skew. Indeed, one can check $\lim_{\eps \to 0} c_1^{\eps} = \infty$ because $K_1(\alpha |x|) \sim \frac{1}{\alpha |x|}$ for $|x| \to 0$ and $K_1(\alpha|x|) \sim |x|^{-1/2} e^{-\alpha|x|}$ when $|x| \to \infty$.  We expect then some phenomena that makes the implied volatility surface non-differentiable ATM. In order to make a plot of this phenomena, we assume that $\sigma_t$ follows the same dynamics as the SABR stochastic volatility model, i.e.
\[
\sigma_t  = \sigma_0 \exp\left( \hat{\alpha} W_t - \frac{\hat{\alpha}^2 t}{2}\right)
\]
with $\sigma_0 = 0.2$ and $\hat{\alpha} = 0.5$.  We choose again $\rho = -0.3$. The parameters for the Lévy process that we choose  are
\[
\alpha = 1.5, \quad \delta  = 1.0, \quad \beta = 0.5.
\]
As we can see in Figure \ref{fig: Implied volatility NIG}, there is a blow-up in the volatility surface coordinate curve of fixed time $T = 10^{-5}$ and variable strike. However, it is not direct from the picture but it can be observed that the conclusion regarding the level holds, i.e. 
\[
\lim_{T \to 0} I_0(k^*) = \sigma_0.
\]
In order to make it clearer, we can check in Table \ref{tab: implied volatility NIG} the values of the implied volatility for different strikes, $S_0 = 100$ and $T = 10^{-5}$.
\begin{table}[h!] \label{implied volatility NIG table}
\centering
\begin{tabular}{|c|c|}
\hline
\textbf{$k$} & \textbf{$I_0(k)$} \\ \hline
$95 $      &$ 376.9346295103556$ \\ \hline
$96 $      & $304.9319294669602$ \\ \hline
$97   $    & $232.03812003788872 $\\ \hline
$98 $      & $158.01739738115333$ \\ \hline
$99$       & $82.0954389422525$ \\ \hline
$100  $    & $0.198425367071128 $\\ \hline
$100.001 $ & $0.19913824561039337$ \\ \hline
$100.002 $ & $0.19929381260883816$ \\ \hline
$100.003 $ & $0.18570061448085348$ \\ \hline
$100.004 $ & $0$ \\ \hline
\end{tabular}
\caption{Implied volatility as a function of the strike when $T = 10^{-5}$.}
\label{tab: implied volatility NIG}
\end{table}

\begin{figure}[H]
        \centering
        \includegraphics[width = 13cm, height = 6cm]{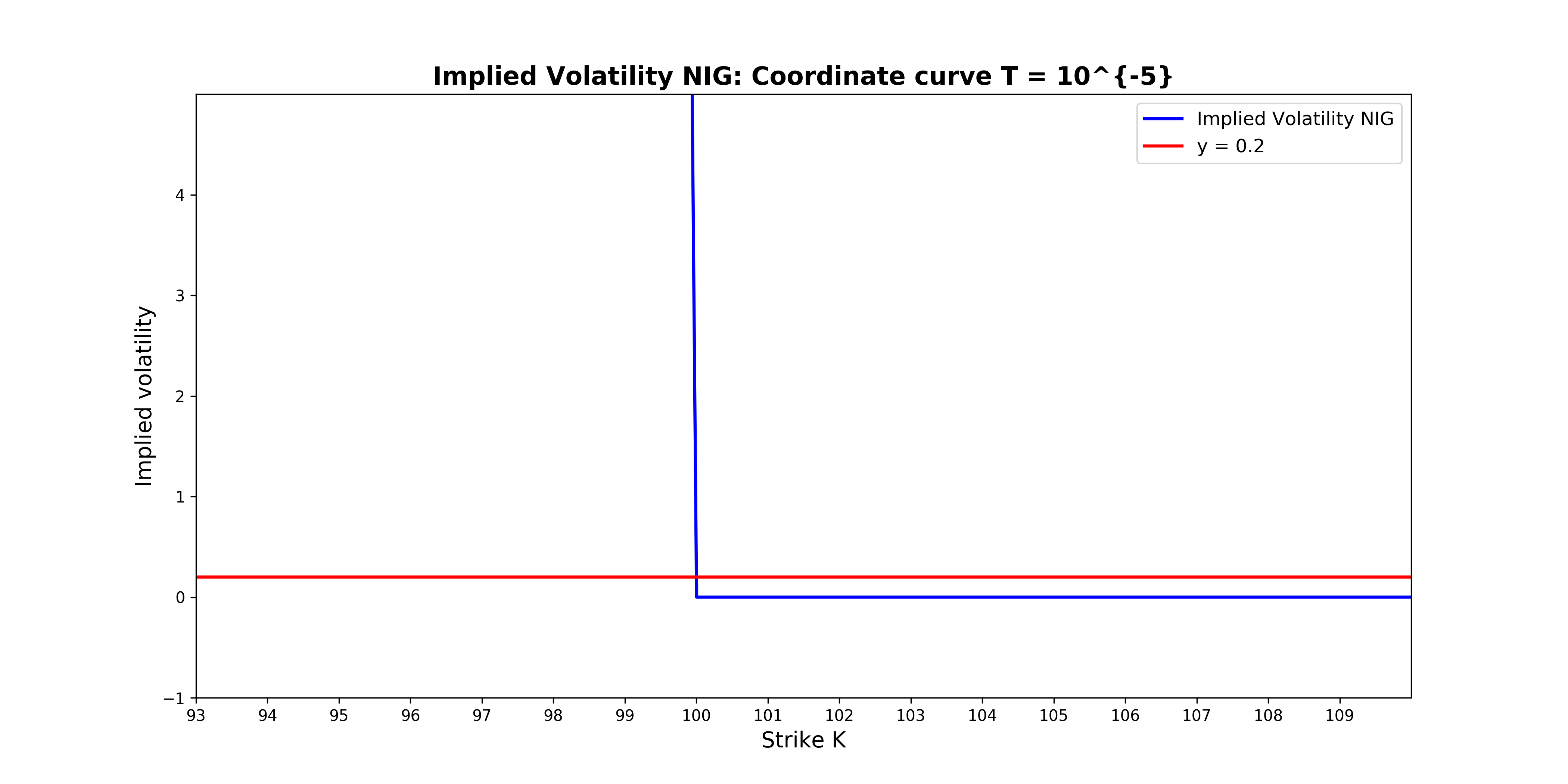}
        \caption{ATM-IV coordinate curve $I_0(k)$ for $k$ in a neighborhood of $k^*= 100$ and fixed maturity $T = 10^{ -5}$.}
        \label{fig: Implied volatility NIG}
    \end{figure}

\subsection{Summary of the numerical experiments}

The conclusion we can extract from these numerical experiments is that, when we fix the stochastic volatility model, the level remains the same for all Lévy processes that we add to the model regardless of its activity and the variation of the paths. We can see in Figures \ref{fig: Bates model different means}, \ref{fig: level and skew bates laplace & gaussian} and \ref{fig: Bates model Gaussian jumps zoom} the $\frac{c_1}{\sigma_0}$ additive factor in the skew that appears in Theorem \ref{th: main theorem}. Moreover, Figures  \ref{fig: CGMY skew symmetric 0.7}, \ref{fig: CGMY skew symmetric 0.5} and \ref{fig: CGMY skew symmetric 0.4} reflect that in the infinite activity and infinite variation case, the skew behaves in the same way as if there were no jumps (see \cite{alos2023implied}) due to the fact that $c_1 = \lim_{\eps \to 0}c_1^{\eps} = 0$. Moreover, Figures \ref{fig: CGMY skew asymmetric 0.7} and \ref{fig: Implied volatility NIG} show, respectively, the effect of $c_1 = \lim_{\eps \to 0}c_1^{\eps}$ in the case where $c_1$ is a real number and the non-differentiability of the volatility surface at the ATM point that occurs when $c_1^{\eps} \to \infty$.

In order to summarize the different behaviors of the skew, We can see in Figure \ref{fig: skew comparison 0.4} how the jumps don't affect the order of blow-up. Instead, a variation on the first order moment of $\nu$ causes a shift in the ATM-IV skew as a function of the time to maturity $T$. In other words, Figure \ref{fig: skew comparison 0.4} shows that the effect of the jumps contributes additively to the skew, not affecting the speed of the blow-up.
\begin{figure}[H]
        \centering
        \includegraphics[width = 12cm, height = 5cm]{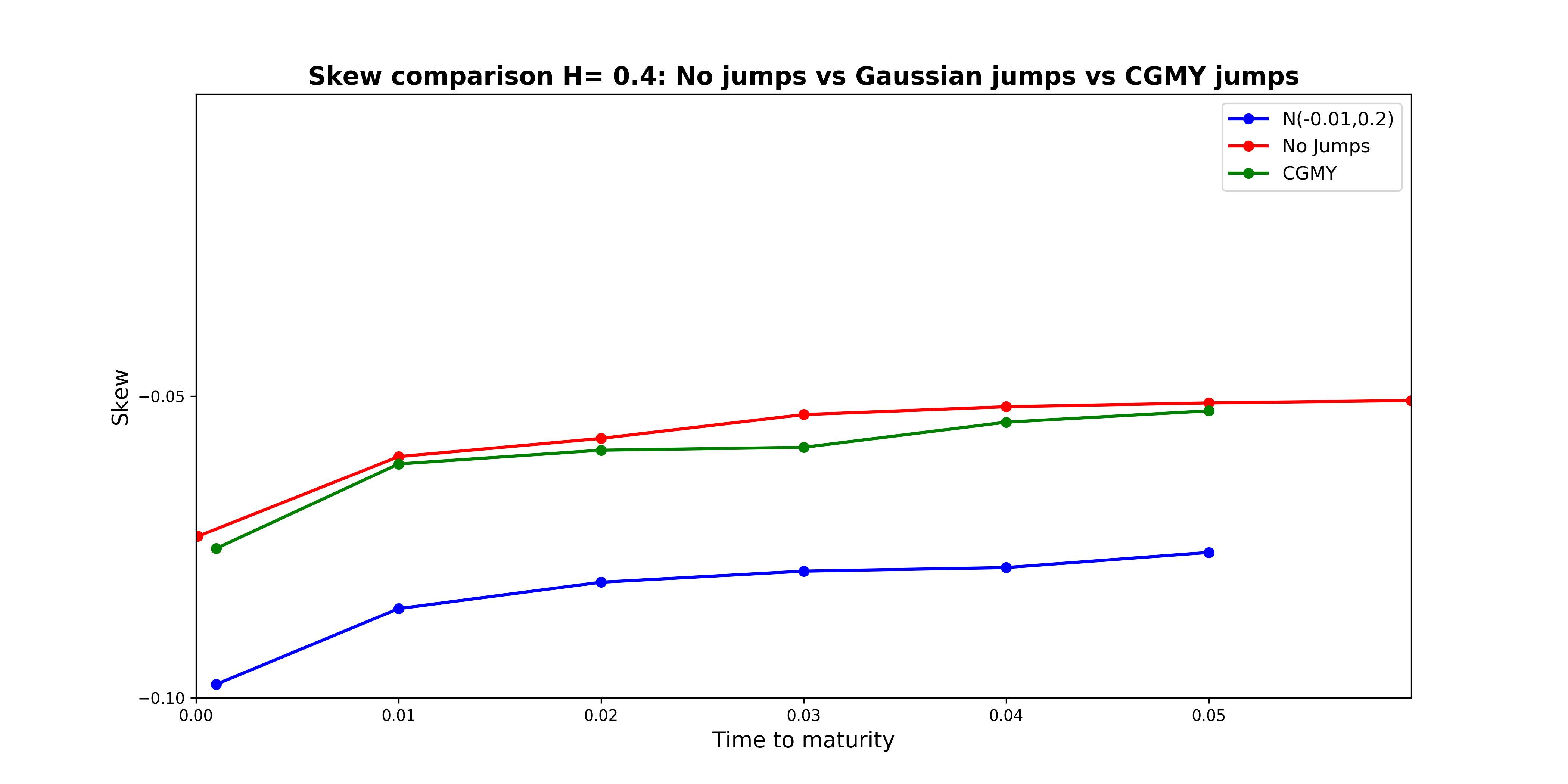}
        \caption{ATM-IV skew as a function of the time to maturity in the cases where the jumps are modeled with a compound Poisson process with Gaussian jumps (blue), in the case where the jumps follow a CGMY distribution (green) and in the case we don't consider jumps (red).}
        \label{fig: skew comparison 0.4}
    \end{figure}

Moreover, in Figure \ref{fig: skew comparison 0.7} we can see how the skew behaves (as a function of $\sigma_0$) when $H = 0.7$ and the jumps follow either a Laplace distribution, a CGMY distribution or there are no jumps.

\begin{figure}[H]
        \centering
        \includegraphics[width = 13cm, height = 6cm]{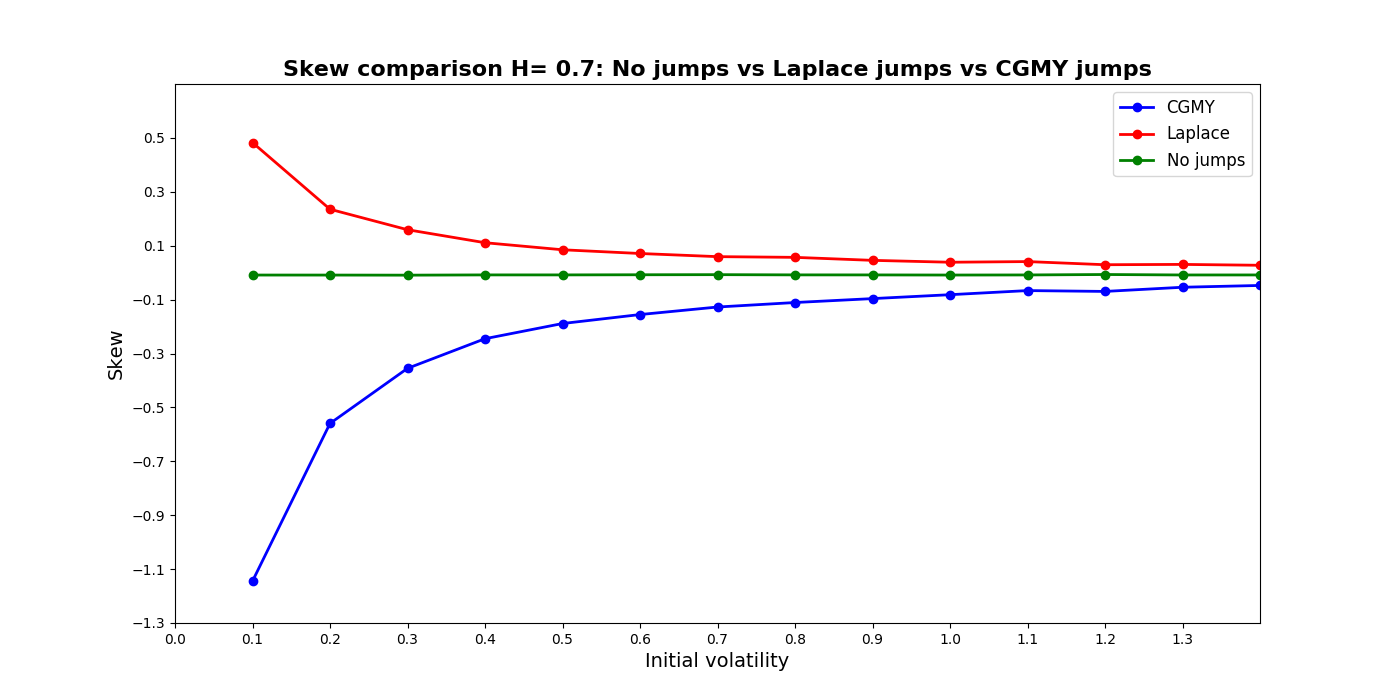}
        \caption{ATM-IV skew as a function of $\sigma_0$ for two different Lévy process and when no jumps are considered.}
        \label{fig: skew comparison 0.7}
    \end{figure}

\section{Conclusions}
In this paper we have seen how the Malliavin calculus provides analytic formulas for the short-time behavior for the ATM-IV level and skew for the jump-diffusion stochastic volatility Bachelier model in the case where the Lévy process has finite activity and finite variation. Moreover, since any Lévy process can be approximated by a compound Poisson process, we have shown how a direct approximation argument extends the results for a large class of Lévy processes. The numerical experiments confirm that the presence of the Lévy process does not change the short-time behavior of the ATM-IV level but it has an effect on the ATM-IV skew, acting in an additive way and depending only on the mean of the jumps.

\bibliographystyle{apalike}
\bibliography{references.bib}

\begin{thebibliography}{}

\bibitem[Aguilar, 2021]{aguilar2021explicit}
Aguilar, J.-P. (2021).
\newblock \textit{Explicit option valuation in the exponential {NIG} model}.
\newblock {\em \textnormal{Quantitative Finance}}, 21(8):1281--1299.

\bibitem[Al{\`o}s and Garcia-Lorite, 2021]{alos2021malliavin}
Al{\`o}s, E. and Garcia-Lorite, D. (2021).
\newblock {\em \textit{Malliavin Calculus in Finance: Theory and Practice}}.
\newblock \textnormal{Chapman and Hall/CRC}, Boca Raton, FL.

\bibitem[Al{\`o}s et~al., 2008]{alos2008hull}
Al{\`o}s, E., Le{\'o}n, J.~A., Pontier, M., and Vives, J. (2008).
\newblock \textit{A Hull and White formula for a general stochastic volatility jump-diffusion model with applications to the study of the short-time behavior of the implied volatility}.
\newblock {\em \textnormal{International Journal of Stochastic Analysis}}, 2008(1):359142.

\bibitem[Al{\`o}s et~al., 2007]{alos2007short}
Al{\`o}s, E., Le{\'o}n, J.~A., and Vives, J. (2007).
\newblock \textit{On the short-time behavior of the implied volatility for jump-diffusion models with stochastic volatility}.
\newblock {\em \textnormal{Finance and Stochastics}}, 11(4):571--589.

\bibitem[Al{\`o}s et~al., 2023]{alos2023implied}
Al{\`o}s, E., Nualart, E., and Pravosud, M. (2023).
\newblock \textit{On the implied volatility of European and Asian call options under the stochastic volatility Bachelier model}.
\newblock {\em \textnormal{arXiv preprint arXiv:2308.15341}}.

\bibitem[Al{\`o}s and Shiraya, 2019]{alos2019estimating}
Al{\`o}s, E. and Shiraya, K. (2019).
\newblock \textit{Estimating the Hurst parameter from short-term volatility swaps: a Malliavin calculus approach}.
\newblock {\em \textnormal{Finance and Stochastics}}, 23(2):423--447.

\bibitem[Barndorff-Nielsen, 1997]{barndorff1997processes}
Barndorff-Nielsen, O.~E. (1997).
\newblock \textit{Processes of normal inverse Gaussian type}.
\newblock {\em \textnormal{Finance and Stochastics}}, 2:41--68.

\bibitem[Bates, 1996]{bates1996jumps}
Bates, D.~S. (1996).
\newblock \textit{Jumps and stochastic volatility: Exchange rate processes implicit in Deutsche mark options}.
\newblock {\em \textnormal{The Review of Financial Studies}}, 9(1):69--107.

\bibitem[Bayer et~al., 2016]{bayer2016pricing}
Bayer, C., Friz, P., and Gatheral, J. (2016).
\newblock \textit{Pricing under rough volatility}.
\newblock {\em \textnormal{Quantitative Finance}}, 16(6):887--904.

\bibitem[Choi et~al., 2022]{choi2022black}
Choi, J., Kwak, M., Tee, C.~W., and Wang, Y. (2022).
\newblock \textit{A Black--Scholes user's guide to the Bachelier model}.
\newblock {\em \textnormal{Journal of Futures Markets}}, 42(5):959--980.

\bibitem[Cont and Tankov, 2004]{tankov2003financial}
Cont, R. and Tankov, P. (2004).
\newblock {\em \textit{Financial Modelling with Jump Processes}}.
\newblock \textnormal{CRC Press}, Boca Raton, FL.

\bibitem[Figueroa-L{\'o}pez and {\'O}lafsson, 2015]{figueroa2015short}
Figueroa-L{\'o}pez, J.~E. and {\'O}lafsson, S. (2015).
\newblock \textit{Short-time expansions for close-to-the-money options under a L{\'e}vy jump model with stochastic volatility}.
\newblock {\em \textnormal{Finance and Stochastics}}, 20(1):219--265.

\bibitem[Figueroa-L{\'o}pez and {\'O}lafsson, 2016]{figueroa2016short}
Figueroa-L{\'o}pez, J.~E. and {\'O}lafsson, S. (2016).
\newblock \textit{Short-term asymptotics for the implied volatility skew under a stochastic volatility model with L{\'e}vy jumps}.
\newblock {\em \textnormal{Finance and Stochastics}}, 20(4):973--1020.

\bibitem[Fukasawa, 2011]{fukasawa2011asymptotic}
Fukasawa, M. (2011).
\newblock Asymptotic analysis for stochastic volatility: martingale expansion.
\newblock {\em Finance and Stochastics}, 15:635--654.

\bibitem[Gerhold et~al., 2016]{gerhold2016small}
Gerhold, S., G{\"u}l{\"u}m, I.~C., and Pinter, A. (2016).
\newblock \textit{Small-maturity asymptotics for the at-the-money implied volatility slope in L{\'e}vy models}.
\newblock {\em \textnormal{Applied Mathematical Finance}}, 23(2):135--157.

\bibitem[Hagan et~al., 2002]{hagan2002managing}
Hagan, P.~S., Kumar, D., Lesniewski, A.~S., and Woodward, D.~E. (2002).
\newblock \textit{Managing smile risk}.
\newblock {\em \textnormal{The Best of Wilmott}}, 1:249--296.

\bibitem[J{\"a}ckel, 2017]{jackel2017implied}
J{\"a}ckel, P. (2017).
\newblock \textit{Implied Normal Volatility}.
\newblock {\em \textnormal{Wilmott}}, 2017(90):54--57.

\bibitem[Jafari and Vives, 2013]{jafari2013hull}
Jafari, H. and Vives, J. (2013).
\newblock \textit{A Hull and White formula for a stochastic volatility L{\'e}vy model with infinite activity}.
\newblock {\em \textnormal{Communications on Stochastic Analysis}}, 7(2):10.

\bibitem[Nualart, 2006]{nualart2006malliavin}
Nualart, D. (2006).
\newblock {\em \textit{Malliavin Calculus and Related Topics}}.
\newblock Probability and Its Applications. \textnormal{Springer}, Berlin, 2nd edition.

\bibitem[Rosiński, 2007]{ROSINSKI2007677}
Rosiński, J. (2007).
\newblock \textit{Tempering stable processes}.
\newblock {\em \textnormal{Stochastic Processes and Their Applications}}, 117(6):677--707.

\bibitem[Sato, 2013]{sato2013levy}
Sato, K.-I. (2013).
\newblock \textit{L{\'e}vy Processes and Infinitely Divisible Distributions, revised edition}.
\newblock {\em \textnormal{Cambridge Stud. Adv. Math}}, 68.

\bibitem[Spanier and Oldham, 1987]{spanier1987atlas}
Spanier, J. and Oldham, K.~B. (1987).
\newblock {\em An atlas of functions}.
\newblock Taylor \& Francis/Hemisphere.

\end{thebibliography}

\end{document}